\documentclass[11pt]{article}
\usepackage{amssymb,amsmath,amsthm,amsfonts,bbm}
\usepackage{mathtools}
\usepackage{enumerate}
\usepackage{tikz,listings}
\usetikzlibrary {patterns,patterns.meta}
\usepackage{pgfplots}
\usepackage{tikz}
\usepackage{pgfplots}
\pgfplotsset{compat=1.18}
\usetikzlibrary{arrows.meta, positioning, calc}
\usepackage{subfigure}
\usepackage[capposition=top]{floatrow}
\usepackage[margin=1in]{geometry}
\usepackage{natbib}
\usepackage{verbatim}  % for comment
\usepackage{graphicx,booktabs}
\usepackage[outline]{contour}
\usepackage{datetime}
\usepackage[]{hyperref}                                  % Allow hyperlinks (internal and external)
\hypersetup{                                             % Custom hyperlink settings
    pdffitwindow=false,                                  % Window fit to page when opened
    pdfstartview={XYZ null null 1.00},                   % Fits the zoom of the page to 100%
    pdfnewwindow=true,                                   % Links in new window
    colorlinks=true,                                     % false: boxed links; true: colored links
    linkcolor=blue,                                      % Color of internal links (black is necessary for printing quality)
    citecolor=blue,                                      % Color of links to bibliography
    urlcolor=blue                                        % Color of external links
}
\usepackage{graphicx}
\usepackage{mathrsfs}
\usepackage{dsfont}
\usepackage{graphicx}
\usepackage{units}
\usepackage{amsmath, amsthm, amssymb, amsfonts, enumerate}
\usepackage{natbib}
\usepackage{color}
\usepackage{caption}
\usepackage{subcaption}
\usepackage{setspace}
\usepackage{mdframed}
\usepackage{hyperref}
\usepackage{float}
\usepackage{multirow} % Add this to the preamble for the multirow feature
\usepackage[normalem]{ulem}
\usepackage{placeins}
\usepackage{tikz}
\usetikzlibrary{calc}
\usepackage{cleveref}
\usepackage{titlesec}
\usetikzlibrary{calc}

\definecolor{Red}{rgb}{1, 0.0, 0}
\definecolor{blue}{rgb}{0.0, 0.0, 1}
\definecolor{green2}{rgb}{0.0, 0.52, 0.24}
\definecolor{cadmiumgreen}{rgb}{0.0, 0.42, 0.24}
\definecolor{camouflagegreen}{rgb}{0.47, 0.53, 0.42}
\definecolor{darkolivegreen}{rgb}{0.33, 0.42, 0.18}
\definecolor{darkpastelgreen}{rgb}{0.01, 0.75, 0.24}
\definecolor{darkspringgreen}{rgb}{0.09, 0.45, 0.27}
\definecolor{darkspringgreen}{rgb}{0.09, 0.45, 0.27}

\usepackage{comment} %commenting bubbles package.
\usepackage[textwidth=30mm,disable]{todonotes}

\newcommand{\E}{\mathbb{E}}

\newcommand{\Red}[1]{{{\textcolor{red}{#1}}}}

\newcommand{\heading}[1]{\vskip .3cm \noindent \textbf {#1}. \hskip .2cm }

\def\p{{\bf p}}
\def\p{{\bf p}}

\def\P{\mathbb{P}}

\newcommand{\Prob}{\mathbb{P}}

\long\def\ignore#1{}
\def\a{\alpha}

\def\?{\Red{?????????}}

\def\1{\emph{\textbf{1}}}
\def\eps{\varepsilon}

\titlespacing*{\section}{0pt}{*0.7}{*0.4}
\titlespacing*{\subsection}{0pt}{*0.4}{*0.2}
\usepackage{pgfplots}
\pgfplotsset{compat=1.18}

\definecolor{blue}{rgb}{0.0, 0.0, 1}
\definecolor{green2}{rgb}{0.0, 0.82, 0.24}
\definecolor{cadmiumgreen}{rgb}{0.0, 0.42, 0.24}
\definecolor{camouflagegreen}{rgb}{0.47, 0.53, 0.42}
\definecolor{darkolivegreen}{rgb}{0.33, 0.42, 0.18}
\definecolor{darkpastelgreen}{rgb}{0.01, 0.75, 0.24}
\definecolor{darkspringgreen}{rgb}{0.09, 0.45, 0.27}
\definecolor{darkspringgreen}{rgb}{0.09, 0.45, 0.27}

\usepackage{comment} %commenting bubbles package.
\usepackage[textwidth=30mm]{todonotes}

\titlespacing*{\section}{0pt}{*0.7}{*0.4}
\titlespacing*{\subsection}{0pt}{*0.4}{*0.2}

\colorlet{dred}{red!50!black}
\colorlet{dgreen}{green!80!black}
\newdateformat{mydate}{\monthname[\THEMONTH] \THEDAY, \THEYEAR}
\allowdisplaybreaks

\theoremstyle{plain}% default
\newtheorem{theorem}{Theorem}
\newtheorem{lemma}{Lemma}
\newtheorem{proposition}{Proposition}
\newtheorem{corollary}{Corollary}

\theoremstyle{definition}
\newtheorem{definition}{Definition}
\newtheorem{example}{Example}
\newtheorem{assumption}{Assumption}

\theoremstyle{remark}
\newtheorem{remark}{Remark}

\newtheoremstyle%
{bluethm}%
{}{}%
{\color{blue}}%\itshape
{}%
{\color{blue}\bfseries}
{\color{blue}.}%
{ }{}

\theoremstyle{bluethm}

\newcommand{\ind}{\mathbbm{1}}

\newcommand{\EL}[1]{\fcolorbox{red}{yellow}{\textcolor{red}{\textbf{Ehud:} #1}}}

\begin{document}

\title{Bayesian Sequential Search with Censored Observations\thanks{
We thank Peter Achim, Eren Arbatli, Guilherme Carmona, Remzi Kaygusuz, Joosung Lee, Gilat Levy, Leslie Reinhorn, Tao Wang for helpful comments and discussions, and the participants of the 6th DETC conference, the 2026 Conference on Mechanism and Institution Design, the AMES 2026, as well as the seminar participants at Durham and CUEB for valuable comments.}}

\author{Ehud Lehrer\footnote{Durham University Business School. Email: \href{ehud.m.lehrer@durham.ac.uk}{ehud.m.lehrer@durham.ac.uk}}
		\and
	Daniel Z. Li\footnote{Durham University Business School. Tel: 01913346335. Email: \href{daniel.li@durham.ac.uk}{daniel.li@durham.ac.uk}.} }
 \maketitle
%\centerline{Preliminary draft}

\begin{abstract} 
\noindent
This paper studies how information censoring enables a myopic cutoff rule in Bayesian sequential search. Under full information, Bayesian learning generally destroys the monotonicity of continuation values, preventing simple cutoff rules. We show that one-sided censoring restores monotonicity by limiting posterior fluctuations, thereby making a myopic cutoff rule optimal. By decomposing the intertemporal change in the marginal value of search into a fallback-value effect and a learning effect, we derive necessary and sufficient conditions for monotonicity under lower censoring and characterize the optimal cutoff rule. In contrast, under full revelation, monotonicity requires highly restrictive conditions. We further show that expected monotonicity (i.e., the supermartingale property) is characterized by the same conditions under both lower censoring and full revelation, owing to Bayes plausibility and the affine structure of the problem. Thus, censoring restores monotonicity not by altering expected learning, but by reducing posterior volatility. Finally, we apply our framework to job search, consumer price search, and product experimentation.

\vspace{0.2cm}
\noindent \textbf{Keywords:} sequential search; Bayesian learning; pathwise monotonicity; expected monotonicity; fallback value effect; learning effect; censored observation. 

\vspace{0.2cm}
\noindent \textbf{JEL classification:} C61; D83. 

\end{abstract}

\newpage

\pagebreak  \setcounter{page}{1}
%\linespread{1.5}

\normalsize
\section{Introduction}
Search problems with learning arise in many economic settings in which a decision maker (DM) sequentially explores uncertain opportunities while updating her beliefs based on observed outcomes. Classical search theory shows that when the payoff distribution is known, the optimal stopping rule is myopic and characterized by reservation cutoffs \citep{mccall1970economics,weitzman1979optimal}. Under free recall, the DM's fallback value, the best payoff observed so far, is weakly increasing over time. As a result, the expected marginal benefit of an additional search weakly decreases, making the search problem \textit{monotone}. When the payoff distribution is unknown, however, each observation can affect not only the fallback value but also the DM's beliefs about the distribution of future draws. This learning effect may destroy monotonicity, rendering myopic and simple reservation-cutoff policies no longer optimal.

This paper studies Bayesian search with censored observations, whereby the DM does not observe the outcomes of all previous searches. Such censoring arises naturally in many economic environments where information is revealed selectively. For example, rejected job applicants typically learn only that they failed to meet an employer's acceptance threshold; consumers investigate a product's detailed attributes only after it satisfies an initial price screen; and investors obtain precise information about a project only if it exceeds the quality of the best prototype considered so far. As a result, observations that fail to pass the relevant benchmark convey only partial information, making learning intrinsically censored.

Our main result establishes that censored learning can restore monotonicity in Bayesian search problems under mild distributional conditions, thereby making the problem tractable. To identify the underlying mechanism, we decompose the intertemporal change in the continuation value of one additional search into two components: a fallback-value effect and a learning effect. Monotonicity follows whenever their combined effect is non-positive for every possible history. The fallback-value effect is always non-positive because the fallback option, the best payoff observed so far, weakly increases over time, reducing the expected marginal benefit of further search. The learning effect, however, may be either positive or negative, depending on whether new information shifts beliefs toward more or less favorable payoff distributions.

In general, the learning effect can dominate the fallback-value effect and destroy monotonicity. One-sided censoring, for instance, revealing only outcomes that exceed the current maximum, restores monotonicity by reducing the informativeness of observations and limiting posterior dispersion. Because all non-improving outcomes are pooled into a single signal, the censored posterior is a martingale projection of the full-information posterior: it is a compressed conditional expectation of the belief update under full revelation. This pooling of the lower tail restricts the volatility of beliefs and, by Jensen's inequality, bounds the extent of favorable belief revisions.  As a result, the negative fallback-value effect consistently dominates the learning effect along every feasible search history, ensuring that simple, intuitive reservation-cutoff policies remain optimal.

Our baseline model considers two candidate discrete distributions, $L_1, L_2$, for the data-generating distribution. 
The DM holds a posterior belief $\alpha$ and updates her beliefs according to Bayes' rule. 
Under lower censoring, outcomes that do not exceed the current fallback value $y$ are pooled into the censored event $\{X\le y\}$, while breakthrough outcomes satisfying $X>y$ are fully revealed. 
The updated fallback value is therefore $\max\{y,X\}$, and the one-step marginal search value is
\[
    U(y,\alpha)=\mathbb{E}_{\alpha}[(X-y)_+]
    =\alpha U_1(y)+(1-\alpha)U_2(y),
\]
where $U_i(y)$ denotes the one-step marginal search value when the payoff distribution is distribution $i$. 
A key observation is that $U(y,\alpha)$ is affine in the posterior belief $\alpha$.

The search problem is monotone if $U(y,\alpha)$ weakly decreases after every feasible search history.  Assuming a single-crossing order between the two candidate distributions, we derive a necessary and sufficient condition for monotonicity (Theorem~\ref{thm: monotone conditions_lower}).  We further provide a sufficient condition that depends only on the primitive distributions and is therefore straightforward to verify in practice.  Under these conditions, the optimal stopping rule is characterized either by a cutoff value $y^*(\alpha)$ or equivalently, by a cutoff belief $\underline\alpha(y)$. 
Moreover, the number of searches required by the optimal policy is uniformly bounded across all feasible histories (Proposition~\ref{prop:finite-pathwise-bound}).  The baseline model can be readily extended to continuous distributions (Theorem \ref{thm: monotone conditions_lower_continuous}).

We illustrate the applicability of our results through two economic applications. 
The first is job search with recruitment screening, where a lower-censoring information structure naturally arises from the recruiter's perspective. 
The second is consumer price search, in which consumers search for lower prices. 
This setting provides the upper-censoring counterpart to our baseline model of lower censoring: the fallback value is the lowest price observed so far, denoted by $p$, and price quotes weakly above $p$ are censored. 
The same underlying logic applies. 
Maintaining the single-crossing assumption between the two distributions, we derive conditions under which the price-search problem is monotone (Proposition~\ref{prop: mono-condition-upper}).

We extend our analysis in several directions. 
First, we study Bayesian search problems under full revelation, in which the DM observes all sample values. 
The monotonicity condition under full revelation is extremely restrictive: it requires both a single-crossing order and an endpoint restriction on the crossing point (Theorem~\ref{thm: IFF full}).  Moreover, monotonicity under full revelation implies monotonicity under lower censoring.  A key insight is that informational coarsening is not merely a feature of the model but a necessary structural ingredient for preserving monotonicity. 
Thus, information censoring serves as a stabilizing force in sequential decision-making under uncertainty. 
%Among all the information structures that fully reveal the better-off sample outcomes, i.e., all the sample outcomes satisfying $X>y$ are fully revealed, lower-censoring pools all the non-improving outcomes into a single signal, and therefore is the least informative information structure and renders the weakest condition for monotonicity.\EL{The last sentence is not clear to me.} 

Second, we extend our analysis to expected monotonicity, which requires the conditional expectation of the one-step marginal search value to decrease over time. 
%Under the single-crossing order, we characterize the necessary and sufficient condition for expected monotonicity. 
We show that the condition for expected monotonicity is equivalent to the conditional-expectation version of the monotonicity condition under lower censoring (Theorem~\ref{thm: sm monotone}). 
A striking implication is that the expected monotonicity condition is identical under lower censoring and full revelation. 
%Although these two information structures induce different distributions of posterior beliefs, they generate the same ex ante expected posterior shift. 
This is because $U(y,\alpha)$ is affine in $\alpha$, and therefore, the expected learning effect is invariant across different information structures, while the expected fallback-value effect remains unchanged.  The condition for expected monotonicity is weaker, yet it is insufficient to guarantee the optimality of myopic, single-threshold cutoff rules.

While Theorems \ref{thm: monotone conditions_lower}, \ref{thm: IFF full}, and \ref{thm: sm monotone} derive sharp characterizations within the binary-distribution framework, the dynamic programming approach applies more broadly, establishing the existence of the minimal fixed point in general multi-distribution settings and its role in determining an optimal stopping rule. In this more general environment, however, monotonicity need not hold.

The paper contributes to three strands of the literature. 
First, it advances the literature on search with learning by identifying primitive conditions on information structures under which Bayesian updating preserves monotonicity. 
Whereas existing work typically imposes monotonicity directly on posterior predictive distributions or relies on tractable conjugate priors, we show that signal censoring itself disciplines posterior dynamics and restores reservation-cutoff policies. 
Second, it contributes to the analysis of censored information by characterizing how censored likelihood ratios enter posterior odds and continuation values in sequential search. 
Third, while related to sequential experimentation and multi-armed bandits, our mechanism differs fundamentally due to the endogeneity of the stopping payoff: under free recall, each draw simultaneously updates beliefs and potentially upgrades the fallback option. 
The interaction between belief updating and this evolving reservation payoff forms the core of our analysis.

The remainder of the paper is organized as follows. 
Section~\ref{sec: literature} reviews the related literature. 
Section~\ref{sec: model} introduces the baseline Bayesian search model and defines monotone search problems. 
Section~\ref{sec: lower-censoring} characterizes monotonicity and the optimal stopping rule under lower censoring. 
Section~\ref{sec: job search} and \ref{sec: consumer search} present two applications of our main results. 
Section~\ref{sec: further discussions} develops further extensions of the baseline model.  Section~\ref{sec: conclusion} concludes. 
All proofs are relegated to the Appendix.

\section{Related Literature}\label{sec: literature}
This paper contributes to the literature on sequential search with learning.  In classical search problems with free recall and a known payoff distribution, it is known that the optimal stopping rule is myopic and takes the form of reservation cutoffs \citep{mccall1970economics,weitzman1979optimal}.  The search problem becomes substantially more difficult when the payoff distribution becomes unknown, because each observation affects both the current fallback value and the posterior predictive distribution of future draws.  The central question is whether the monotonicity and reservation-cutoff properties can be preserved when Bayesian learning is introduced.

\paragraph{Search with learning.}
In an early paper,  \citet{rosenfield1981optimal} study Bayesian sequential search with free recall and no recall.  In particular, for free recall, assuming the one-step marginal search value never increases with additional samples, they show that the optimal stopping rule is myopic and takes the form of reservation cutoffs.  They further suggest examples that satisfy, e.g., Dirichlet-multinomial model, or do not satisfy the assumption, e.g., a normal distribution with an unknown mean. \citet{bikhchandani1996optimal} study optimal search with an unknown payoff distribution.  They suggest two assumptions on the posterior update process under which the search is monotone.  Their Assumption~1 requires that, after observing an additional sample value below the fallback value $y$, the posterior probability that the next sample value exceeds $y$ decreases.  Their Assumption~2 requires the posterior probability to depend only on the number of previous observations, not on the realized search history.  Under the assumptions, the optimal stopping rule takes a simple reservation form, and the reservation cutoffs are history-independent.  The assumptions are satisfied in an \emph{ad hoc} learning model, in which the posterior predictive distribution is a convex combination of the prior and the empirical distribution, as well as the Dirichlet-multinomial model.  \citet{talmain1992search} provides a closed-form solution for the optimal search rule in the Dirichlet-multinomial case.\footnote{The tractability of the Dirichlet-multinomial model has made it useful in empirical and applied search models with Bayesian learning; see, for example, \citet{koulayev2013search}, \citet{santos2017search}, \citet{ulu2009uncertainty}, and
\citet{smith2017risk}.}

\cite{adam2001learning} extends the above problem to multiple unknown distributions.  Specifically, a DM searches for a prize across a set of distinguishable and independent channels.  For each channel, there is an unknown distribution, and she can update her belief based on previous search outcomes of this channel.  He imposes a strong assumption (A1) that the reservation prize of each channel is decreasing over time.  This assumption rules out the possibility that a strong positive learning effect can dominate the negative fallback value effect, which makes the problem monotone.  Under the assumption, he shows that a variant of Pandora's rule \emph{$\grave{a}$ la} \citet{weitzman1979optimal} is optimal. %and the reservation cutoffs are history-dependent.

Our paper differs from this line of work in the source of monotonicity.  Rather than imposing monotonicity restrictions directly on the posterior predictive process, we derive monotonicity from a primitive information structure, e.g., lower-censoring.  Bayesian updating therefore uses censored likelihood terms such as $F_i(y)$.  Assuming a single-crossing order between the candidate distributions, a censored observation drives the DM's posterior belief to move towards the inferior distribution.  This
allows the negative fallback-value effect to dominate the learning effect under mild conditions, and restores the monotonicity that generally fails under unrestricted Bayesian learning.

A broader Bayesian search literature studies optimal stopping when the unknown distribution
belongs to a parametric family.  \citet{degroot1968some} is an early contribution that studies
Bayesian sequential search with normally distributed payoffs and an unknown mean.  More
recently, \citet{baucells2025search} analyze search with recall when both the mean and variance
of a normal distribution are unknown, and beliefs are updated with a conjugate prior.  They show
that the classical single-reservation-price structure generally fails and derive an alternative
optimal stopping rule based on standardized state variables.  \citet{preuss2023search} analyzes
search, learning, and tracking in markets where consumers may learn about both prices and their
own willingness to pay.  These papers reinforce the common theme that Bayesian learning often
destroys the simple cutoff structure of classical search unless additional structure is imposed.

\paragraph{Censored information and Bayesian updating.}
The paper is also related to work on learning from censored data.  In Bayesian statistics,
censored observations are informative signals and enter posterior updating through censored
likelihood terms.  \citet{ferguson1979bayesian} study Bayesian
learning with censored data and show that certain prior families preserve tractability under
censoring.  Our model uses this statistical insight in a sequential-search environment, e.g., a
lower-censored event such as $\{X\le y\}$ changes posterior odds by the Bayes factor
$F_1(y)/F_2(y)$.

A related economics literature studies endogenous censoring generated by stopping behavior.
\citet{he2022mislearning} studies mis-learning from endogenously censored histories in optimal
stopping problems.  Because predecessors stop once a sufficiently good realization occurs,
observed histories are selected, and behavioral misperceptions such as the gambler's fallacy
can lead to systematically biased beliefs and premature stopping.  Our model is different in
two respects.  First, the DM is fully Bayesian and correctly accounts for the censoring rule.
Second, censoring occurs within the DM's own search process: the current fallback value
determines which outcomes are fully revealed and which are observed only coarsely.

There is also a related machine-learning literature on Bayesian optimization and active
learning with censored responses.  For example, \citet{hutter2013bayesian} study Bayesian
optimization when evaluations may be right-censored.  That literature is methodologically
related because censored observations are incorporated into posterior or acquisition rules,
but the economic problem is different.  In our setting, the censoring threshold is endogenous:
it is determined by the current fallback value in a costly search problem with free recall.

\paragraph{Relation to bandits and experimentation.}
While our model shares the stop-or-continue structure of a sequential learning problem, it departs fundamentally from the canonical multi-armed or one-armed bandit framework \citep{gittins1974dynamic,bergemann2018bandit}. In a standard bandit setting, the DM repeatedly decides whether to experiment with a risky arm that yields flow rewards or to take an exogenous safe alternative. By contrast, the outside option is entirely endogenous in a search problem, and each draw affects the state through two distinct channels: it provides information about the unknown distribution, and it may raise the fallback value.  Our monotonicity analysis is built precisely around the interaction between these two channels; when coupled with one-sided censoring, it is the structural persistence of this recall-driven fallback effect that systematically tethers and dampens positive belief shifts, preserving global monotonicity where full revelation fails.

The distinction is also important relative to recent work that combines search and experimentation.
\citet{fershtman2025searching} study experimentation with endogenous consideration
sets, where the DM alternates between experimenting with alternatives already in the consideration
set and searching for new alternatives.  Such models are closer to multi-armed experimentation:
the set of available arms evolves over time, and the DM allocates attention across alternatives.
By contrast, our model isolates learning from a single population of \emph{ex ante} homogeneous
boxes and studies how endogenous censoring of search outcomes affects the structure of optimal
stopping.  Recent work on Bayesian search without recall, such as \citet{xu2025search}, is also
distinct: without recall, the payoff-relevant state is not the best previously discovered value,
whereas our model's key mechanism is precisely the endogenous fallback value generated by free
recall.

\section{The Model}\label{sec: model}
\subsection{Model Setup}\label{subsec: model setup}
Consider a Bayesian decision maker (DM) who searches sequentially with free recall through a countably infinite set of \emph{ex ante} identical boxes. Each box contains a hidden reward $X \in \mathbb{R}_+$. The DM does not know the true underlying distribution $L$, but knows it belongs to a finite set of candidates $\{L_1, \ldots, L_\ell\}$. Conditional on $L = L_i$, the box rewards are independently and identically distributed (i.i.d.) draws from $L_i$. The DM's prior belief over the candidate distributions is given by $\boldsymbol{\alpha}_0 = (\alpha_{0,1}, \ldots, \alpha_{0,\ell}) \in \Delta^{\ell-1}$, where $\alpha_{0,i} = \Pr(L = L_i)$. In each period $t \ge 1$, the DM decides whether to terminate the search or incur a constant marginal cost $c > 0$ to inspect an additional box. 

Let $X_t$ denote the latent reward of the box inspected at time $t$. Under the prevailing information structure, the DM may not observe the exact realization of $X_t$, but instead observes a signal generated by it. The DM uses this signal to update her posterior belief regarding $L$. Despite this partial observability, the information structure guarantees that the DM can perfectly track the current \emph{fallback value}. Initialized at $Y_0 = 0$, the fallback value evolves recursively according to $Y_t = \max\{Y_{t-1}, X_t\}$, representing the highest reward discovered up to step $t$. Upon terminating the search at time $t$, the DM claims $Y_t$.

Let $(\Omega,\mathcal{F},\mathbb{P})$ be a probability space supporting the random measure $L$ and the sequence of latent rewards $\mathbf{X}=\{X_t\}_{t=1}^{\infty}$. The observed signals generate a natural filtration $\mathbb{F}=\{\mathcal{F}_t\}_{t=0}^{\infty}$, where $\mathcal{F}_0$ is the trivial $\sigma$-algebra. A stopping time $\tau$ is \textit{admissible} if it is based only on available information, namely, $\{\tau=t\}\in\mathcal{F}_t$ for every $t$.
The sequential search problem can be formulated as an optimal stopping problem. The DM chooses an admissible stopping time to maximize the expected net payoff:
\begin{equation}\label{eq:optimal_stopping_problem}    \sup_{\tau\in\mathcal{T}}\mathbb{E}_{\boldsymbol{\alpha}_0}\left[Y_\tau-c\tau\right],
\end{equation}
where $\mathbb{E}_{\boldsymbol{\alpha}_0}$ denotes expectation with respect to the probability measure induced by the prior belief $\boldsymbol{\alpha}_0$.
%%%%%%%%%%%%%%%%%%%%%%%%%%%%%%%%%%%%%%%%%%%%

An alternative formulation of the decision problem, rather than expressing it directly in terms of the optimal stopping rule in \eqref{eq:optimal_stopping_problem}, is to characterize the value of continuation as a function of the current state. Let
$
J^r(y,\boldsymbol{\alpha})
$
denote the value function when the current fallback value is $y$, and the posterior belief over the underlying distributions is $\boldsymbol{\alpha}$. The superscript $r$ indicates the information structure under which the posterior is updated. The corresponding Bellman equation is (see \cite{rosenfield1981optimal}).
\begin{equation}\label{eq:Bellman-general}
    J^r(y,\boldsymbol{\alpha})
    =
    \max\left\{
        y, \,
        \E_{\boldsymbol{\alpha}}\left[
            J^r\bigl(\max\{y, X\},B(\boldsymbol{\alpha},X)\bigr)
        \right]-c
    \right\},
\end{equation}
where $X$ is drawn from the posterior predictive distribution
$L_{\boldsymbol{\alpha}}$, and $B(\boldsymbol{\alpha},X)$ is the posterior belief induced by the information available to the  DM upon the realization of $X$. Thus, the updating rule depends on the information structure.

The first term in the maximization corresponds to stopping immediately and obtaining the best payoff currently available, namely $y$. The second term is the continuation value: the DM pays the search cost $c$, observes a realization $X$ drawn from the posterior predictive distribution, updates the posterior belief, and updates the fallback value.

%%%%%%%%%%%%%%%%%%%%%%%%%%%%%%%%%

We focus on the information structure of lower-censoring.\footnote{
\cite{bikhchandani1996optimal} study a search problem with a different form of censored data. In their model, the DM either observes the true realization or receives a censored signal indicating that the realization lies above or below a given, exogenously specified threshold. Our setup differs in at least two respects. First, we study Bayesian learning, whereas they consider an \emph{ad hoc} learning rule. Second, the censoring threshold in our model is endogenous: an observation is censored only when it is weakly below or above the current fallback value. Consequently, not every sequence of uncensored and censored observations that is feasible in their formulation can arise in our model.  The alternative structures of upper-censoring and full revelation are analyzed in Sections \ref{sec: consumer search} and \ref{subsec: full revelation}.}
To be specific, upon inspecting the $(t+1)$-th box with a realized value $X_{t+1}=x$, the DM observes the signal:
\begin{equation}\label{eq:lower-censored-signal}
S_{t+1}=
\begin{cases}
\{X_{t+1}\le Y_t\}, & \text{if } x\le Y_t,\\[0.1cm]
x, & \text{if } x>Y_t.
\end{cases}
\end{equation}
Under lower-censoring, realizations that fail to strictly improve upon the current fallback value are censored, pooling into a single signal $\{X_{t+1}\le Y_t\}$, whereas strictly improving realizations are fully revealed. This lower-censoring structure captures environments in which unfavorable outcomes generate only coarse information, while favorable outcomes are observed more precisely.

Let $h_t := (S_1, \ldots, S_t)$ denote the realized search history after $t$ inspections, where $h_0$ denotes the empty history. Given history $h_t$, the DM's posterior belief over the set of candidate distributions is:
\[
    \boldsymbol{\alpha}_t 
    = \boldsymbol{\alpha}(h_t) 
    = (\alpha_{t,1}, \ldots, \alpha_{t,\ell}) \in \Delta^{\ell-1}, 
    \quad \text{where} \quad 
    \alpha_{t,i} := \Prob(L = L_i \mid h_t).
\]

\paragraph{The Markov Decision Problem.}
The sequential search problem can be formulated as a Markov decision process (MDP). Because the search allows for free recall, the state of the system at any period $t$ is fully characterized by the sufficient statistic $(Y_t, \boldsymbol{\alpha}_t)$, representing the current fallback value and the posterior belief, respectively.

Suppose the process is in state $(y, \boldsymbol{\alpha})$ and the DM chooses to inspect an additional box. Let $X$ be the latent reward of this draw and $S$ be the realized signal generated according to the lower-censoring rule \eqref{eq:lower-censored-signal}. The posterior belief updates via Bayes' rule, which we denote by $\boldsymbol{\alpha}' = B(y, \boldsymbol{\alpha}, S)$. The state transitions to $(y', \boldsymbol{\alpha}')$, where:
\begin{equation}\label{eq: updated state}
    y' = \max\{y, X\}, \text{ and } \boldsymbol{\alpha}' = B(y, \boldsymbol{\alpha}, S).
\end{equation}
\ignore{
Let $J(y,\boldsymbol{\alpha})$ denote the value function at state $(y,\boldsymbol{\alpha})$, defined as the supremum of expected continuation profits, excluding sunk search costs. Then $J$ satisfies the Bellman equation:
\begin{equation}\label{eq: Bellman_general}
    J(y, \boldsymbol{\alpha}) = \max\left\{ y, \, \E_{\boldsymbol{\alpha}}\left[ J\bigl(\max\{y, X\}, B(y, \boldsymbol{\alpha}, S)\bigr) \right] - c \right\},
\end{equation}
where $\E_{\boldsymbol{\alpha}}$ denotes the expectation with respect to the posterior predictive distribution of the latent reward, $X \sim L_{\boldsymbol{\alpha}} := \sum_{i=1}^\ell \alpha_i L_i$, with the signal $S$ being deterministically induced by $X$ and $y$ under the signaling scheme.

Equation \eqref{eq: Bellman_general} characterizes a fixed point of the associated Bellman operator. The formal arguments establishing the existence and uniqueness of the solution $J$ are deferred to Section \ref{sec: general}.
}
 We next propose some situations that fit this structure.

\heading{Job search with screening}  
A job seeker searches sequentially for outside employment opportunities, with each application incurring a cost $c$. She is currently employed in a job with value $y$, and can always remain in her current position or accept any offer she receives. Her match value $X_i$ with firm $i$ is an i.i.d.\ draw from an unknown distribution $L$, and she updates her belief about $L$ over time based on the outcomes of her applications. Applications are screened by recruiters or hiring platforms. If an application is rejected, she typically learns only that her match value fell below a screening threshold. By contrast, if she is invited to interview or receives an offer, she obtains substantially more precise information about her match value with the employer. This setting naturally fits the framework of sequential search with lower censoring.

\heading{Product development} The same logic applies to product development. A firm may experiment with successive versions of a product, using the current best version as a benchmark for future improvements. If a new version performs worse than the current benchmark, detailed performance metrics may not be collected or analyzed extensively. The firm then learns only that the new version failed to improve upon the existing product. In contrast, if the new version outperforms the benchmark, the firm has stronger incentives to measure and analyze its performance more carefully, thereby obtaining more precise information about its value. The resulting information structure is one in which inferior outcomes are lower-censored, whereas improvements are fully observed.

\heading{Consumer price search} The lower-censoring model has a natural analogue in minimization problems with upper-censoring. Consider, for example, a consumer who searches sequentially for a lower price. The consumer is uncertain about the distribution of prices and updates her beliefs based on search outcomes. Search is conducted with free recall, and the current best price is $p$. Prices above $p$ are unattractive, so the consumer may not record their exact values or investigate them further, learning only that they exceed the current best price. By contrast, a price below $p$ represents a valuable discovery and is observed precisely. Repeated observations of prices exceeding $p$ may make the consumer increasingly pessimistic about the distribution of market prices and eventually induce her to stop searching and purchase at the current best price $p$. This setting is the upper-censoring counterpart of the lower-censoring environment studied in this paper.

\subsection{Monotone Search Problems}
In general, the search problem \eqref{eq:Bellman-general}  does not admit a simple closed-form solution. This section identifies a class of monotone problems for which the optimal stopping rule is myopic and is in the form of state-dependent cutoff values.  For any given state $(y, \boldsymbol{\alpha})$, define the one-step marginal search value by:
\begin{equation}\label{eq:U-general}
    U(y,\boldsymbol\alpha)
    :=
    \E_{\boldsymbol\alpha}\bigl[\max\{y,X\}-y\bigr]
    =
    \E_{\boldsymbol\alpha}\bigl[(X-y)_+\bigr].
\end{equation}
This function quantifies the expected gross increment to the current fallback value derived from sampling exactly one more box. Evaluated along a realized search history $h_t$, we denote this state-dependent marginal benefit by $U_t := U(Y_t, \boldsymbol{\alpha}_t)$.

\begin{definition}[Monotone Search Problem]\label{def: Monotone Search Problem}
The search problem \eqref{eq:Bellman-general} is \emph{monotone} if, for every state $(y,\boldsymbol{\alpha})$ and every subsequent state $(y',\boldsymbol{\alpha}')$ that can arise from a feasible signal realization,
\begin{equation}\label{def: monotone}
    U(y,\boldsymbol{\alpha}) \ge U(y',\boldsymbol{\alpha}').
\end{equation}
\end{definition}

The monotonicity condition \eqref{def: monotone} is \emph{pathwise}: after any possible continuation outcome, the one-step marginal value of search cannot increase. Consequently, once a one-step search is not worthwhile, it remains unprofitable after every possible future realization.\footnote{Pathwise monotonicity is stronger than expected monotonicity, i.e., $U(y,\boldsymbol{\alpha})\ge\E_{\alpha}[U(y',\boldsymbol{\alpha}')]$. However, it is particularly convenient because it guarantees that once the myopic stopping rule recommends stopping, it continues to do so after every possible future realization. We return to expected monotonicity in Section~\ref{subsec: expected monotonicity}.} The following proposition establishes the optimal stopping rule for monotone search problems.\footnote{Its proof, as well as the proofs of all subsequent results, is deferred to the appendix.} 

\begin{proposition}[Optimal stopping for monotone search problems]
\label{prop: optimal stopping monotone}
Suppose that condition~\eqref{def: monotone} holds. Then the stopping time
\begin{equation*}
    \tau^*:=\inf\{t\ge 0: U(Y_t,\boldsymbol{\alpha}_t)\le c\}
\end{equation*}
is optimal. Equivalently, at any state $(y,\boldsymbol{\alpha})$, it is optimal to stop if and only if
$
U(y,\boldsymbol{\alpha})\le c.
$
\end{proposition}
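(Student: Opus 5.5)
We follow the classical one-step-look-ahead (OLA) argument for monotone stopping problems, in the spirit of \citet{rosenfield1981optimal}. Write the net gain from stopping at $t$ as $G_t:=Y_t-ct$. The stopping set $\mathcal{S}:=\{(y,\boldsymbol{\alpha}):U(y,\boldsymbol{\alpha})\le c\}$ is exactly the set of states where stopping now is at least as good as taking one more look and then stopping, since $\E_{\boldsymbol{\alpha}}[\max\{y,X\}]-c-y=U(y,\boldsymbol{\alpha})-c$. Condition~\eqref{def: monotone} makes $\mathcal{S}$ \emph{closed} (absorbing): if $(Y_t,\boldsymbol{\alpha}_t)\in\mathcal{S}$, then every feasible successor $(Y_{t+1},\boldsymbol{\alpha}_{t+1})$ satisfies $U(Y_{t+1},\boldsymbol{\alpha}_{t+1})\le U(Y_t,\boldsymbol{\alpha}_t)\le c$, and by induction the process remains in $\mathcal{S}$ forever. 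On $\mathcal{S}$ we have $\E[G_{t+1}-G_t\mid\mathcal{F}_t]=U_t-c\le 0$, so after entering $\mathcal{S}$ the process $(G_t)$ is a supermartingale. Off $\mathcal{S}$ the same increment is strictly positive, so $(G_t)$ is a submartingale until $\tau^*$.

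The key steps are as follows. (i) Show that every admissible $\tau$ is weakly improved by $\tau\vee\tau^*$. On $\{t<\tau^*\}$ the increments of $G$ have positive conditional mean, which gives $\E[G_{\tau\vee\tau^*}]\ge\E[G_\tau]$ via optional stopping for the submartingale $G_{t\wedge\tau^*}$. (ii) Show that, among stopping times $\sigma\ge\tau^*$, stopping at $\tau^*$ is best. After $\tau^*$ the process is a supermartingale by the closedness property, so $\E[G_\sigma]\le\E[G_{\tau^*}]$ by optional stopping. Combining (i) and (ii) yields $\E[G_{\tau^*}]\ge\E[G_\tau]$ for all $\tau\in\mathcal{T}$. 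The ``equivalently'' statement then follows from the Markov structure: the optimal value $J^r(y,\boldsymbol{\alpha})$ equals $y$ if and only if $U(y,\boldsymbol{\alpha})\le c$. The ``only if'' direction holds because $U>c$ means one look beats stopping now. The ``if'' direction is step (ii) applied from the initial state.

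The main obstacle is justifying optional stopping with unbounded horizons, and it needs care. First we show $\tau^*<\infty$ a.s. with $\E[\tau^*]<\infty$. We would bound $U(y,\boldsymbol{\alpha})\le\max_i\E_{L_i}[(X-y)_+]$, which tends to $0$ as $y\to\infty$ provided each $L_i$ has a finite mean (an assumption we would make explicit). Hence there is $\bar y$ with $y\ge\bar y\Rightarrow U\le c$. Under the true $L_i$ the draws are i.i.d., so the time for $Y_t$ to reach $\bar y$ is geometric whenever $L_i$ puts mass above $\bar y$. If $L_i$ has no such mass, $U$ for large $y$ is small only for $\boldsymbol{\alpha}$ concentrated appropriately. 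The simpler route is to note that $\E[(X-y)_+]$ under $L_i$ is at most $c$ once $y$ is near the top of the support, and to handle the remaining cases with the same geometric bound. Second, uniform integrability follows from $|G_t|\le \sum_{s\le t}X_s+ct$, finite means, and Wald-type bounds, using the admissibility restriction to stopping times with $\E[\tau]<\infty$ (any $\tau$ with $\E[\tau]=\infty$ yields $-\infty$ payoff and can be discarded). For step (ii), an alternative that avoids integrability subtleties after $\tau^*$ is the direct Bellman route. Define $\hat J(y,\boldsymbol{\alpha})=y$ on $\mathcal{S}$, and verify that this function satisfies \eqref{eq:Bellman-general} on $\mathcal{S}$ because successors stay in $\mathcal{S}$. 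Then invoke the minimal-fixed-point characterization of the value function mentioned in the introduction.
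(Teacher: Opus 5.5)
Your proof is correct. Its core is the same as the paper's, but it is organized globally rather than state by state, and that closes a step the paper leaves implicit.

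The paper's proof is essentially your step (ii) alone. It fixes a state with $U(y,\boldsymbol\alpha)\le c$ and uses monotonicity to keep $U(Y_s,\boldsymbol\alpha_s)\le c$ along every continuation path. It then telescopes $\E[Y_\tau-y-c\tau]=\E[\sum_{s<\tau}(U(Y_s,\boldsymbol\alpha_s)-c)]\le 0$ for bounded continuation policies and extends to unbounded ones by truncation. For $U>c$ it just notes that one more look beats stopping. This gives the ``stop iff $U\le c$'' characterization, i.e.\ $\{J^r=y\}=\{U\le c\}$. It does not by itself show that $\tau^*$ is optimal: that needs the separate fact that stopping at first entry into $\{J^r=y\}$ is optimal, which the paper only supplies in the Online Appendix.

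Your step (i) fills this directly. It improves any $\tau$ to $\tau\vee\tau^*$, using $G_{\tau\vee\tau^*}-G_\tau=G_{\tau^*}-G_{\tau\wedge\tau^*}$ and optional sampling for the submartingale $G_{t\wedge\tau^*}$. Together with step (ii), this proves $\E[G_{\tau^*}]\ge\E[G_\tau]$ for every admissible $\tau$ with no dynamic-programming input. The Bellman/minimal-fixed-point fallback you mention is therefore unnecessary.

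One caution concerns your argument for $\E[\tau^*]<\infty$. As written it is not sound when the true $L_i$ puts no mass above $\bar y$. In that case $U(Y_t,\boldsymbol\alpha_t)$ is a mixture quantity and need not drop below $c$ just because $Y_t$ is near the top of $L_i$'s support.

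You don't need that route, because your own submartingale observation gives the bound. In the paper's finite model, $Y_t\le k$, so $0=G_0\le\E[G_{\tau^*\wedge n}]\le k-c\,\E[\tau^*\wedge n]$. Hence $\E[\tau^*]\le k/c$, and dominated convergence then justifies both optional-stopping steps for all $\tau$ with $\E[\tau]<\infty$.

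With unbounded support and finite means, the same argument works after one extra bound. Choose $\bar y$ with $\max_i\E_{L_i}[(X-\bar y)_+]\le c/2$ and use $Y_\sigma\le\bar y+\sum_{s\le\sigma}(X_s-\bar y)_+$. Applying Wald conditionally on each $L_i$ then yields $\E[\tau^*]\le 2\bar y/c$.
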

The intuition for Proposition \ref{prop: optimal stopping monotone} relies on the structural persistence imposed by pathwise monotonicity. A myopic rule dictates stopping when the one-step expected marginal benefit falls below the marginal cost, namely $ U(Y_t,\boldsymbol{\alpha}_t) \le c$. Without pathwise monotonicity, a DM might rationally continue searching even when $ U(Y_t,\boldsymbol{\alpha}_t) \le c$, treating the immediate expected loss as an investment to reach future states where the one-step marginal search value recovers and exceeds $c$. Condition~\eqref {def: monotone} strictly precludes this speculative continuation. Because $U(y,\alpha)$ is weakly decreasing along every feasible realized path, reaching a boundary state where $U(y,\alpha) \le c$ guarantees that $U(y',\alpha') \le c$ for all subsequent reachable states $(y',\alpha')$. Consequently, the expected net gain of any arbitrary multi-step continuation strategy can be decomposed into a sequence of one-step expected net gains, all of which are bounded above by zero. Because the one-step marginal search value can never recover once it falls below the cost threshold, no multi-step strategy can strictly improve upon an immediate stop.

In classical search problems where the distribution $L$ is known, the search problem \eqref{eq:Bellman-general} is trivially monotone. To see this, note that $U_t=\mathbb{E}[(X-Y_t)_+]$ does not depend on the belief and is decreasing in the fallback value $Y_t$, which weakly increases over time under free recall. Hence, $U_t\ge U_{t+1}$ for every $t$. 

When $L$ is unknown, however, the DM updates her beliefs over time, and these belief updates may either increase or decrease the one-step marginal value of search. Consequently, the search problem is not monotone in general.

%%%%%%%%%%%%%%%%%%%%%%%%%%%%%%%%%%%% 11111%%%%%%%%%%%%%%%%%%%%%%%

\subsection{A Simple Learning Model}
This section specializes the model to a finite-support environment with two distributions. Specifically, suppose that $X$ takes values in the finite set $K:=\{1,2,\ldots,k\}$, and let $\bar K:=K\cup\{0\}$ denote the set of fallback values. We assume that $\ell=2$, so that $L\in\{L_1,L_2\}$, and write
\[
    L_1=(\pi_{1,1},\ldots,\pi_{1,k}),\quad\text{ and }\quad
    L_2=(\pi_{2,1},\ldots,\pi_{2,k}),
\]
with $\pi_{i,j}>0$ and $\sum_{j=1}^k\pi_{i,j}=1$ for $i=1,2$.  We assume $\pi_{1,k}<\pi_{2,k}$ for simplicity.\footnote{It simplifies the discussion of some boundary cases, yet is not essential for the main results.} For $i=1,2$, denote $F_i(x):=\sum_{j=1}^x\pi_{i,j}$ with $F_i(0):=0$ as the distribution functions. The posterior belief after a search history $h_t$ is $\alpha_t:=\Prob(L=L_1\mid h_t)$.  Given a posterior belief $\alpha\in[0,1]$, the posterior predictive distribution is $L_\alpha =\alpha L_1+(1-\alpha)L_2$, with its mass and distribution functions being
\begin{equation*}
    \pi_\alpha(x):=\alpha \pi_{1,x}+(1-\alpha)\pi_{2,x}
    \quad\text{ and }\quad
    F_\alpha(x):=\alpha F_1(x)+(1-\alpha)F_2(x).
\end{equation*}
Define the survival functions as follows
\begin{equation}
\label{eq:survival}
\bar F_\alpha(x):=1-F_\alpha(x), 
\qquad
\bar F_i(x):=1-F_i(x), \quad i=1,2.
\end{equation}
We next introduce several stochastic orders that formalize different notions of $L_1$ being smaller than $L_2$.
\begin{definition}[Stochastic orders]\label{def: st orders}
We say $L_1$ is smaller than $L_2$  %(i) in likelihood-ratio order, denoted $L_1\prec_{lr}L_2$, if $\pi_{1,x}/\pi_{2,x}$ is weakly decreasing in $x$; 
(i) in single-crossing order, denoted $L_1\prec_{sc}L_2$, if the sequence $(\pi_{1,x}-\pi_{2,x})_{x=1}^k$ changes sign from positive to negative at most once as $x$ increases; (ii) in first-order stochastic dominance (FOSD) order, denoted $L_1\prec_{st}L_2$, if $F_1(x)\ge F_2(x)$ for all $x\in K$; and (iii) in increasing-convex order, denoted $L_1\prec_{icx}L_2$, if for all $y\in\bar{K}$,
\[
\mathbb E_{L_1}[(X-y)_+]\le \mathbb E_{L_2}[(X-y)_+].
\]
%Equivalently,  $\Delta(y):=\sum_{z=y}^k (\pi_{2,z}-\pi_{1,z})(z-y)\ge 0$ for all $y\in \bar{K}$.
\end{definition}

For distributions defined on the ordered set $K$, it is known that the stochastic orders satisfy
\begin{equation}\label{eq: st order relations}
   % L_1\prec_{lr}L_2     \quad\Longrightarrow\quad
    L_1\prec_{sc}L_2
    \quad\Longrightarrow\quad
    L_1\prec_{st}L_2
    \quad\Longrightarrow\quad
    L_1\prec_{icx}L_2 . 
\end{equation}   
The lemma below records a useful monotonicity property of posterior predictive distributions. 

\begin{lemma}\label{lem: st order}
Suppose $L_1\prec_o L_2$ for some $o\in\{sc, st, icx\}$. Then, for any $\alpha,\alpha'\in[0,1]$, 
\[
L_\alpha\prec_o L_{\alpha'}\quad\Longleftrightarrow\quad \alpha\ge \alpha'.
\]
\end{lemma}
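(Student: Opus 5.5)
The plan is to exploit the fact that every object entering the three orders is affine in the mixing weight. For any $\alpha,\alpha'\in[0,1]$ and $x\in K$, $y\in\bar K$, linearity gives
\[
\pi_\alpha(x)-\pi_{\alpha'}(x)=(\alpha-\alpha')(\pi_{1,x}-\pi_{2,x}),\qquad
F_\alpha(x)-F_{\alpha'}(x)=(\alpha-\alpha')\bigl(F_1(x)-F_2(x)\bigr),
\]
\[
\mathbb E_{L_{\alpha'}}[(X-y)_+]-\mathbb E_{L_\alpha}[(X-y)_+]=(\alpha-\alpha')\bigl(\mathbb E_{L_2}[(X-y)_+]-\mathbb E_{L_1}[(X-y)_+]\bigr).
\]
So each defining sequence or inequality for ``$L_\alpha\prec_o L_{\alpha'}$'' is the corresponding one for ``$L_1\prec_o L_2$'' multiplied by the scalar $\alpha-\alpha'$. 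The proof then splits according to the sign of this scalar.

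For the direction ``$\Leftarrow$'': if $\alpha=\alpha'$ the two predictive distributions coincide. The difference sequence is identically zero, which has no sign change, and the FOSD and icx inequalities hold with equality. If $\alpha>\alpha'$, multiplying by a positive scalar preserves the sign pattern of $(\pi_{1,x}-\pi_{2,x})_x$, preserves $F_1\ge F_2$, and preserves the icx inequalities. Hence $L_1\prec_o L_2$ transfers directly to $L_\alpha\prec_o L_{\alpha'}$ for each $o\in\{sc,st,icx\}$.

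For ``$\Rightarrow$'' I argue by contrapositive. Suppose $\alpha<\alpha'$; I show $L_\alpha\prec_o L_{\alpha'}$ fails, using the standing assumption $\pi_{1,k}<\pi_{2,k}$. This assumption forces $k\ge2$, since otherwise both masses would equal $1$. I check each order at the top of the support.
\begin{enumerate}
\item For $o=icx$, take $y=k-1$, so $\mathbb E[(X-y)_+]$ equals the mass at $k$. The difference $\mathbb E_{L_{\alpha'}}[(X-y)_+]-\mathbb E_{L_\alpha}[(X-y)_+]=(\alpha'-\alpha)(\pi_{1,k}-\pi_{2,k})$ is strictly negative, violating the icx inequality.
\item For $o=st$, use $F_i(k-1)=1-\pi_{i,k}$. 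Then $F_\alpha(k-1)-F_{\alpha'}(k-1)=(\alpha-\alpha')(\pi_{2,k}-\pi_{1,k})<0$, violating $F_\alpha\ge F_{\alpha'}$.
\item For $o=sc$, the sequence $(\alpha-\alpha')(\pi_{1,x}-\pi_{2,x})$ is strictly positive at $x=k$. It also sums to zero, so some earlier term is strictly negative, and the sequence changes sign from negative to positive. This is not allowed under single crossing; alternatively, the case follows from the st case via \eqref{eq: st order relations}.
\end{enumerate}
Since $o$ is arbitrary, this proves the equivalence.

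The only delicate step is the converse. Without a strictness condition, $L_1=L_2$ would make every $\alpha$ admissible. The main task is therefore to locate a single coordinate where the order is strict, and the assumption $\pi_{1,k}<\pi_{2,k}$ provides exactly this, simultaneously for all three orders.
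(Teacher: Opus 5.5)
Your proof is correct and takes essentially the same route as the paper: $\pi_\alpha$, $F_\alpha$ and $\mathbb E_{L_\alpha}[(X-y)_+]$ are affine in $\alpha$, so each order between $L_\alpha$ and $L_{\alpha'}$ is the order between $L_1$ and $L_2$ scaled by $\alpha-\alpha'$. The paper's one-line proof leaves the converse direction implicit, and your appeal to the standing assumption $\pi_{1,k}<\pi_{2,k}$ (taking $y=k-1$ for icx and $x=k-1$ for st) is exactly what makes that direction rigorous.
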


\section{Bayesian Search under lower-censoring}\label{sec: lower-censoring}
\subsection{Characterization of monotonicity under lower-censoring} 
This section studies the search problem \eqref{eq:Bellman-general} under the information structure of lower-censoring; see \eqref{eq:lower-censored-signal}.
To this end, we first derive several properties of the one-step marginal search value. In the learning model with two candidate distributions, the one-step marginal search value at state $(y,\alpha)$ is
\begin{equation*}
U\left( y,\alpha \right)
:=\mathbb{E}_{\alpha}\!\left[\left(X-y\right)_+\right]
=\sum_{z=y+1}^{k}\pi_{\alpha}(z)(z-y)
=\alpha U_{1}(y)+(1-\alpha)U_{2}(y),
\end{equation*}
where
$
U_i(y):=\mathbb{E}_{L_i}\!\left[(X-y)_+\right], i=1,2,
$
denotes the one-step marginal search value under distribution $L_i$. Rearranging the terms and interchanging the order of summation yields the following equivalent representation of $U(y,\alpha)$ in terms of the survival functions.
\begin{equation}\label{eq: U(y,a) survival}
U(y,\alpha)
=
\sum_{z=y+1}^k \pi_\alpha(z)\left(\sum_{m=y}^{z-1} 1\right)
=\sum_{m=y}^{k-1}\left(\sum_{z=m+1}^k \pi_\alpha(z)\right)
=\sum_{m=y}^{k-1} \bar F_\alpha(m).
\end{equation}
We further define  the difference between $U_2(y)$ and $U_1(y)$ as
\begin{equation}\label{eq: Delta(y)}
\Delta \left( y\right) :=U_{2}\left( y\right) -U_{1}\left( y\right)
=\sum_{z=y+1}^{k}\left( \pi_{2,z}-\pi_{1,z}\right) \left( z-y\right) .
\end{equation}
Given a state $(y,\alpha)$, the updated posterior under lower-censoring is then
\begin{equation}\label{eq: a_t+1 lower}
    \alpha'=
    \begin{cases}
        \alpha'_y:=\dfrac{\alpha F_1(y)}{F_{\alpha}(y)},
        & \text{if }\{X\le y\}\text{ is observed},\\[1.25em]
        \alpha'_x:=\dfrac{\alpha \pi_{1,x}}{\pi_{\alpha}(x)},
        & \text{if } X=x>y\text{ is observed}.
    \end{cases}
\end{equation}

%%%%%%%%%%%%%%%%%%%%%%%%

\begin{assumption}[Single crossing]\label{ass: single-crossing}
 $L_{1}\prec _{sc}L_{2}$.
\end{assumption}

We introduce the single-crossing assumption, and define the following crossing value
\begin{equation}\label{eq: x_hat}
    \hat{x}:=\max\{x\in K:\pi_{1,x}-\pi_{2,x}\ge 0\}.
\end{equation}
Under Assumption \ref{ass: single-crossing} and that $\pi_{1,k}<\pi_{2,k}$, we have $\pi_{1,x}-\pi_{2,x}\ge 0$ for $x\le \hat{x}$ and $\pi_{1,x}-\pi_{2,x}<0$ for $x>\hat{x}$.

The single-crossing order turns out to be useful in the study of search problems with Bayesian learning. First, by \eqref{eq: st order relations}, it implies increasing convex order, i.e., $\Delta(y)=U_2(y)-U_1(y)\ge 0$ by definition.  Therefore, the one-step marginal search value under $L_2$ is greater than that under $L_1$, suggesting that $L_1$ is the inferior distribution.  Second, it implies FOSD, i.e., $F_1(x)\ge F_2(x)$, and hence, a censored observation $\{X\le y\}$ drives the updated posterior towards the inferior distribution $F_1$ by \eqref{eq: a_t+1 lower}.  Third, the existence of the crossing value $\hat{x}$ also indicates that, if a fully revealed value $X=x>\hat{x}$, the updated posterior will move towards the superior distribution $L_2$ by \eqref{eq: a_t+1 lower}; if $x\le\hat{x}$, the updated posterior will move weakly towards the inferior distribution $L_1$.  In short, the single-crossing order not only unambiguously ranks the two distributions in terms of their marginal search values, but also provides a clear pattern on the directions of Bayesian belief shifts under one-sided censoring.

Lemmata~\ref{lem: Delta(y)} and~\ref{lem: U(y,a)} collect the basic properties of $U(y,\alpha)$ and $\Delta(y)$ used in the subsequent analysis.
\begin{lemma}\label{lem: Delta(y)}
Suppose Assumption \ref{ass: single-crossing} holds.  For all $y\in\bar{K}$, (i) $F_1(y)-F_2(y)\geq 0$; (ii) $\Delta(y)\ge 0$ with $\Delta(k)=0$; and (iii) $\Delta(y)$ is weakly decreasing in $y$. 
\end{lemma}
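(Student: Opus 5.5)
The plan is to prove the three parts directly from the single-crossing structure encoded by the crossing value $\hat x$ in \eqref{eq: x_hat}, using telescoping of $F_1-F_2$ and the survival representation \eqref{eq: U(y,a) survival}.

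\textbf{Part (i).} Set $d_x:=\pi_{1,x}-\pi_{2,x}$. Under Assumption~\ref{ass: single-crossing}, $d_x\ge 0$ for $x\le\hat x$ and $d_x<0$ for $x>\hat x$. Moreover, $\sum_{x=1}^k d_x=0$ because both distributions sum to one. Write $G(y):=F_1(y)-F_2(y)=\sum_{x=1}^y d_x$. Then $G(0)=0$ and $G(k)=0$. On $\{0,\dots,\hat x\}$ the function $G$ is weakly increasing, since its increments are $d_x\ge0$, so $G\ge G(0)=0$ there. On $\{\hat x,\dots,k\}$ it is decreasing, since its increments are negative, so $G(y)\ge G(k)=0$ there. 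Hence $G\ge0$ on all of $\bar K$. Equivalently, one may invoke \eqref{eq: st order relations}: single crossing implies FOSD. I will nevertheless give the short telescoping argument so that the proof is self-contained.

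\textbf{Part (ii).} Apply the survival representation \eqref{eq: U(y,a) survival} to each $L_i$, i.e.\ at $\alpha=1$ and $\alpha=0$. This gives
\[
\Delta(y)=\sum_{m=y}^{k-1}\bigl(\bar F_2(m)-\bar F_1(m)\bigr)=\sum_{m=y}^{k-1}G(m).
\]
By part (i), every summand is nonnegative, so $\Delta(y)\ge 0$. For $y=k$ the sum is empty, so $\Delta(k)=0$; this also follows directly from \eqref{eq: Delta(y)}, where the sum over $z>k$ is empty.

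\textbf{Part (iii).} From the same representation, $\Delta(y)-\Delta(y+1)=G(y)\ge 0$ for $y=0,\dots,k-1$. Hence $\Delta$ is weakly decreasing in $y$.

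The only step with any content is part (i): the sign pattern must be combined with the zero total mass to control both ends of $G$. Once that is in place, parts (ii) and (iii) are one-line consequences of the survival-function rewriting, which the paper already derived for $U(y,\alpha)$ and which holds verbatim at $\alpha\in\{0,1\}$. I do not expect any real obstacle; the main care needed is to handle the boundary indices $y=0$ and $y=k$ correctly and to note that the assumption $\pi_{1,k}<\pi_{2,k}$ guarantees $\hat x<k$. That fact is not needed for the inequalities themselves, only for the strict sign of $d_x$ above $\hat x$.
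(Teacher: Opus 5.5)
Your proof is correct and essentially the paper's: part (iii) is the same identity $\Delta(y)-\Delta(y+1)=F_1(y)-F_2(y)\ge 0$. For (i) and (ii) the paper simply invokes the chain $L_1\prec_{sc}L_2\Rightarrow L_1\prec_{st}L_2\Rightarrow L_1\prec_{icx}L_2$ in \eqref{eq: st order relations}, and your telescoping argument and the survival-sum representation $\Delta(y)=\sum_{m=y}^{k-1}\bigl(F_1(m)-F_2(m)\bigr)$ prove that chain inline. One minor correction to your closing remark: the strict sign $\pi_{1,x}-\pi_{2,x}<0$ for $x>\hat x$ comes from the maximality in the definition of $\hat x$, whereas $\pi_{1,k}<\pi_{2,k}$ only guarantees $\hat x<k$.
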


 Lemma \ref{lem: Delta(y)} shows that under Assumption \ref{ass: single-crossing}, $F_1(y)\ge F_2(y)$ for all $y\in \bar{K}$.  It follows from \eqref{eq: a_t+1 lower} that, for $\alpha\in(0,1)$, the sign of the posterior shift under lower-censoring is 
\begin{equation}\label{eq: a'-a_lower_sign}
\alpha'-\alpha\propto
\begin{cases}
\displaystyle
F_1(y)-F_2(y),
& \text{if } \{X\le y\} \text{ is observed},\\[0.5em]
\displaystyle
\pi_{1,x}-\pi_{2,x},
& \text{if } X=x>y \text{ is observed}.
\end{cases}
\end{equation}
Therefore, a censored observation $\{X\le y\}$ increases the posterior belief and makes the DM more pessimistic as the posterior of the inferior distribution $L_1$ increases.  The lemma also shows the one-step marginal search value under $L_2$ is always greater than that under $L_1$, as $\Delta(y)\ge 0$.

\begin{lemma}\label{lem: U(y,a)}
(i) For each $\alpha\in[0,1]$, the function $y\mapsto U\left( y,\alpha \right) $ is strictly decreasing on $y\in\{0,\ldots, k-1\}$, and is discretely convex on $y\in\bar{K}$; and \\ (ii) under Assumption \ref{ass: single-crossing}, $U\left( y,\alpha \right) $ is decreasing in $\alpha $ and has increasing differences in $\left( y,\alpha \right) $. That is, for $y'>y$ and $\alpha'>\alpha$, $\bigl[U(y',\alpha')-U(y,\alpha')\bigr]-\bigl[U(y',\alpha)-U(y,\alpha)\bigr]\ge 0$.
\end{lemma}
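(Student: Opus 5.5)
For part (i) I will work directly from the survival-function representation \eqref{eq: U(y,a) survival}, $U(y,\alpha)=\sum_{m=y}^{k-1}\bar F_\alpha(m)$. Taking first differences gives
\[
U(y+1,\alpha)-U(y,\alpha)=-\bar F_\alpha(y),\qquad y\in\{0,\ldots,k-1\}.
\]
For $y\le k-1$, the survival value satisfies $\bar F_\alpha(y)\ge \pi_\alpha(k)=\alpha\pi_{1,k}+(1-\alpha)\pi_{2,k}>0$, because all masses are strictly positive. So every first difference is strictly negative, which gives strict decrease on $\{0,\ldots,k-1\}$.

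For discrete convexity on $\bar K$, I check that the first differences are weakly increasing. Their second difference is
\[
-\bar F_\alpha(y+1)+\bar F_\alpha(y)=\pi_\alpha(y+1)\ge 0,
\]
and at the boundary step to $y=k$ the difference is $-\bar F_\alpha(k-1)\le 0$ with $U(k,\alpha)=0$, so nothing special happens there.

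For part (ii), the key fact is that $U$ is affine in $\alpha$:
\[
U(y,\alpha)=U_2(y)-\alpha\Delta(y).
\]
Hence $\partial U/\partial\alpha=-\Delta(y)\le 0$ by Lemma~\ref{lem: Delta(y)}(ii), which gives that $U$ is decreasing in $\alpha$. For increasing differences, I take $y'>y$ and $\alpha'>\alpha$. Because $U$ is affine in $\alpha$, the double difference collapses to
\[
-(\alpha'-\alpha)\bigl[\Delta(y')-\Delta(y)\bigr].
\]
This is nonnegative because $\Delta$ is weakly decreasing in $y$ by Lemma~\ref{lem: Delta(y)}(iii).

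So the whole of (ii) reduces to Lemma~\ref{lem: Delta(y)}, which I take as given. The only point needing care is part (i): I must confirm strictness by using the positivity of $\pi_{i,k}$, and handle the endpoint $y=k$ correctly in the convexity claim. I do not expect any real obstacle beyond this bookkeeping.
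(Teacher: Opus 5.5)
Your proposal is correct and follows essentially the same route as the paper. Part (i) uses the first differences $-\bar F_\alpha(y)$, made strict by full support, and the second differences $\pi_\alpha(y+1)$. Part (ii) uses the affine identity $U(y,\alpha')-U(y,\alpha)=(\alpha-\alpha')\Delta(y)$, which reduces both claims to Lemma~\ref{lem: Delta(y)}.
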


\begin{figure}[t]
\centering
\begin{tikzpicture}
\begin{axis}[
    width=12cm,
    height=10cm,
    xlabel={Fallback Value},
    ylabel={$U(y,\alpha)$},
    xmin=0,
    xmax=10,
    ymin=0,
    xtick={0,3,5,6,10},
    xticklabels={$0$,$y$,$\hat{x}$,$y'$,$k$},
    ytick=\empty,
    legend style={draw=none, fill=none, at={(0.95,0.95)}, anchor=north east},
    legend cell align={left},
]
% Curve for alpha=0 (no marks)
\addplot+[mark=none, thick, dashed] coordinates {(0,7.150000) (1,6.150000) (2,5.160000) (3,4.200000) (4,3.290000) (5,2.450000) (6,1.700000) (7,1.060000) (8,0.550000) (9,0.190000) (10,0.000000)};
\addlegendentry{$U(\cdot,\alpha')$: $\alpha'<\alpha$}

% Curve for alpha=0.5 (no marks)
\addplot+[mark=none, thick, dashed] coordinates {(0,5.912500) (1,4.912500) (2,3.990000) (3,3.150000) (4,2.397500) (5,1.737500) (6,1.175000) (7,0.715000) (8,0.362500) (9,0.122500) (10,0.000000)};
\addlegendentry{$U(\cdot,\alpha)$}

% Curve for alpha=1 (no marks)
\addplot+[mark=none, thick, dashed] coordinates {(0,4.675000) (1,3.675000) (2,2.820000) (3,2.100000) (4,1.505000) (5,1.025000) (6,0.650000) (7,0.370000) (8,0.175000) (9,0.055000) (10,0.000000)};
\addlegendentry{$U(\cdot,\alpha')$: $\alpha'>\alpha$}

% Mark the three specified points
\draw[fill=blue] (axis cs:3,3.150000) circle[radius=2pt];
\draw[fill=blue] (axis cs:6,1.175000) circle[radius=2pt];
\draw[fill=blue] (axis cs:6,1.700000) circle[radius=2pt];

% Vertical dashed line at x=5 (\hat{x})
\draw[dashed, gray!70] (axis cs:5,0) -- (axis cs:5,7.5);

% Arrow along the alpha=0.5 curve (fallback value effect)
\draw[->, very thick, blue] 
    (axis cs:3,3.150000) -- 
    (axis cs:4,2.397500) -- 
    (axis cs:5,1.737500) -- 
    (axis cs:6,1.175000);

% Vertical arrow (learning effect)
\draw[->, very thick, red] (axis cs:6,1.175000) -- (axis cs:6,1.700000);

% ----- Labels with S‑shaped curved arrows pointing to the respective midpoints -----

% Southwest label for "fallback value effect"
% Midpoint of the blue arrow is approximately (4.5, 2.07)
\node[blue, anchor=north east, align=right] (fallback_label) at (axis cs:4,1.4) {B: fallback value effect};
\draw[<-, blue] 
    (fallback_label.east) .. controls +(1.2,0.6) and +(-0.6,-0.6) .. (axis cs:4.5,2.0);

% Northeast label for "learning effect"
% Midpoint of the red arrow is (6, 1.4375)
\node[red, anchor=south west, align=left] (learning_label) at (axis cs:6.5,3) {A: learning effect};
\draw[<-, red]
    (learning_label.south) .. controls +(0,-1.5) and +(1.5,1) .. (axis cs:6.05,1.4375);

\end{axis}
\end{tikzpicture}
\caption{Inter-temporal Decomposition of $U(y',\alpha')-U(y,\alpha)$}\label{fig: U(y,a) decomposition}
\floatfoot{In this example, $k=10$; for $L_1$, $\pi_{1,1}=0.145$ and $\pi_{1, j+1}=\pi_{1,j}-0.01$; for $L_2$, $\pi_{2,1}=0.01$ and $\pi_{2, j+1}=\pi_{2,j}+0.02$.}
\end{figure}

We decompose the intertemporal change of $U(y,\alpha)$ into a fallback-value effect and a learning effect:
\begin{equation}\label{eq: U(y',a')-U(y,a)}
U\left( y',\alpha'\right) -U\left( y,\alpha \right) =\underset{
\text{=A: learning effect}}{\underbrace{U\left( y',\alpha'\right)
-U\left( y',\alpha \right) }} + 
\underset{\text{=B: fallback value effect}}{\underbrace{U\left( y',\alpha\right) -U\left( y,\alpha\right) }}.
\end{equation}
Figure \ref{fig: U(y,a) decomposition} illustrates the properties of $U(y,\alpha)$, as well as the decomposition result.   By Lemma \ref{lem: U(y,a)}, the fallback value effect B is always negative, as 
\begin{equation}\label{eq: fallback value effect}
B=U\left( y',\alpha \right) -U\left( y,\alpha \right) =-\sum_{z=y}^{y'-1}\bar{F}_{\alpha}(z) \le 0.
\end{equation}
However, the sign of the learning effect A is uncertain.  Observe that
\begin{equation}\label{eq: learning effect}
A=U\left( y',\alpha'\right) -U\left( y',\alpha \right) =\left( \alpha-\alpha' \right) \Delta(y'),
\end{equation}
and $\Delta(y')\ge 0$ by Lemma \ref{lem: Delta(y)}.  The sign of the learning effect is then determined by $\alpha'-\alpha$.  By \eqref{eq: a_t+1 lower} and \eqref{eq: x_hat}, the posterior shift under lower-censoring is characterized by
\begin{equation}\label{eq: belief shift_lower}
\begin{cases}
\alpha'\ge \alpha,
& \text{if } \{X\le y\} \text{ or } X=x\in \{y+1,\ldots,\hat{x}\},\\[0.2em]
\alpha'<\alpha,
& \text{if } X=x>\max\{y,\hat{x}\}.
\end{cases}
\end{equation}
Since $\alpha'$ is the posterior probability of the inferior distribution $L_1$, an increase (decrease) in $\alpha$ corresponds to a more pessimistic (optimistic) belief.  Equation \eqref{eq: belief shift_lower} shows that the DM can become strictly more optimistic only after a revealed outcome  $x>\max\{y, \hat{x}\}$. 

%%%%%%%%%%%%%%%%%%%%%%%%

By Definition~\ref{def: Monotone Search Problem}, 
if the joint effect of the fallback value and the learning effect is negative for any $\left( y,\alpha \right) $ and any possible sample observation, the search problem is monotone.  The next theorem characterizes monotonicity under lower-censoring in terms of the survival functions (see \eqref{eq:survival}). It also provides a state-independent sufficient condition expressed solely in terms of the model primitives.

\begin{theorem}[Monotonicity under lower-censoring]\label{thm: monotone conditions_lower}
Suppose Assumption \ref{ass: single-crossing} holds. Any search problem \eqref{eq:Bellman-general}, under lower-censoring and Bayesian updating, with any $\a_0$, is monotone if and only if
\begin{equation}\label{eq: IFF_lower}
\sum_{z=y}^{x-1}\bar{F}_{\alpha}(z)
\ge
\frac{\alpha(1-\alpha)}{\pi_{\alpha}(x)}(\pi_{2,x}-\pi_{1,x})\Delta(x)= (\a-\a')\Delta(x).
\end{equation}
for every $y\in\bar{K}$, $\alpha\in[0,1]$ and every $x\in K$ such that $x>\max\{y,\hat{x}\}$. Moreover, a sufficient condition for \eqref{eq: IFF_lower}, depending only on the primitive distributions, is
\begin{equation}\label{eq: SC_lower}
\bar{F}_1(x-1)
\ge
\frac{\sqrt{\pi_{2,x}}-\sqrt{\pi_{1,x}}}{\sqrt{\pi_{2,x}}+\sqrt{\pi_{1,x}}}\,\Delta(x),
\qquad \text{for all } x>\hat{x} .
\end{equation}
\end{theorem}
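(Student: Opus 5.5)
The plan is to use the decomposition \eqref{eq: U(y',a')-U(y,a)} into the learning effect $A$ and the fallback-value effect $B$. I will sort the feasible one-step transitions from a state $(y,\alpha)$ according to the sign of the belief shift in \eqref{eq: belief shift_lower}. Monotonicity means $A+B\le 0$ for every feasible transition, and the signs of $A$ and $B$ are mostly already determined by Lemmata~\ref{lem: Delta(y)} and~\ref{lem: U(y,a)}.

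\emph{Sufficiency.} There are three cases.
\begin{enumerate}
\item A censored observation $\{X\le y\}$. Then $y'=y$, so $B=0$. By \eqref{eq: a'-a_lower_sign} and Lemma~\ref{lem: Delta(y)}(i) we have $\alpha'\ge\alpha$, so $A=(\alpha-\alpha')\Delta(y)\le 0$ because $\Delta\ge0$.
\item A revealed value $x\in\{y+1,\ldots,\hat x\}$. Again $\alpha'\ge\alpha$, so $A\le 0$, and $B\le 0$ by \eqref{eq: fallback value effect}.
\item A revealed value $x>\max\{y,\hat x\}$. This is the only case in which $A$ can be positive. Here $A+B\le 0$ reads $\sum_{z=y}^{x-1}\bar F_\alpha(z)\ge(\alpha-\alpha'_x)\Delta(x)$. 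Plugging in $\alpha'_x$ from \eqref{eq: a_t+1 lower} gives
\[
\alpha-\alpha'_x=\alpha\,\frac{\pi_\alpha(x)-\pi_{1,x}}{\pi_\alpha(x)}=\frac{\alpha(1-\alpha)(\pi_{2,x}-\pi_{1,x})}{\pi_\alpha(x)},
\]
which is exactly \eqref{eq: IFF_lower}.
\end{enumerate}
Hence \eqref{eq: IFF_lower} implies \eqref{def: monotone} along every feasible transition.

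\emph{Necessity.} This is the step needing care, because Definition~\ref{def: Monotone Search Problem} only constrains states that are actually reachable. Since the statement quantifies over all priors $\alpha_0$, I will show that every pair $(y,\alpha)$ with $y\in\bar K$ and $\alpha\in[0,1]$ is reachable from some prior.
\begin{itemize}
\item For $y=0$, take $\alpha_0=\alpha$.
\item For $y\ge1$, start from $Y_0=0$. The first draw $X_1=y>0$ is fully revealed and has positive probability, and it yields the posterior $\alpha_0\pi_{1,y}/\pi_{\alpha_0}(y)$. Since $\pi_{i,y}>0$, this is a continuous, strictly increasing map of $[0,1]$ onto itself, so some $\alpha_0$ produces the target $\alpha$.
\end{itemize}
From the reached state $(y,\alpha)$, any $x>\max\{y,\hat x\}$ occurs with positive probability. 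Monotonicity then forces $A+B\le0$ for that transition, which is \eqref{eq: IFF_lower}. The endpoints $\alpha\in\{0,1\}$ are trivial, since the right-hand side of \eqref{eq: IFF_lower} vanishes there.

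\emph{Primitive sufficient condition.} I will bound the left-hand side of \eqref{eq: IFF_lower} from below and the right-hand side from above, uniformly in $y$ and $\alpha$.
\begin{itemize}
\item Lower bound. Because $y\le x-1$ and every term is nonnegative, the sum is at least $\bar F_\alpha(x-1)$. By Lemma~\ref{lem: Delta(y)}(i), $\bar F_\alpha(x-1)=\alpha\bar F_1(x-1)+(1-\alpha)\bar F_2(x-1)\ge\bar F_1(x-1)$.
\item Upper bound. Maximize $g(\alpha)=\alpha(1-\alpha)/(\alpha\pi_{1,x}+(1-\alpha)\pi_{2,x})$ over $[0,1]$. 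Setting the derivative to zero gives $\alpha^2\pi_{1,x}=(1-\alpha)^2\pi_{2,x}$, so $\alpha^*=\sqrt{\pi_{2,x}}/(\sqrt{\pi_{1,x}}+\sqrt{\pi_{2,x}})$. At this point $g(\alpha^*)=1/(\sqrt{\pi_{1,x}}+\sqrt{\pi_{2,x}})^2$.
\end{itemize}
Multiplying the maximum of $g$ by $(\pi_{2,x}-\pi_{1,x})=(\sqrt{\pi_{2,x}}-\sqrt{\pi_{1,x}})(\sqrt{\pi_{2,x}}+\sqrt{\pi_{1,x}})$ yields the factor in \eqref{eq: SC_lower}. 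Since $\Delta(x)\ge0$, condition \eqref{eq: SC_lower} implies \eqref{eq: IFF_lower} for all admissible $y$ and $\alpha$.

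The main obstacle is the necessity direction, specifically the reachability argument above. The sufficiency cases and the optimization over $\alpha$ are routine.
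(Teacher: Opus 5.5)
Your proof is correct and follows the paper's argument. It uses the same split of transitions into censored, revealed $x\le\hat x$, and the critical case $x>\max\{y,\hat x\}$ via the learning/fallback decomposition, and the same bounds: the fallback sum is at least $\bar F_1(x-1)$, and $\alpha-\alpha'$ is maximized at $\alpha^*=\sqrt{\pi_{2,x}}/(\sqrt{\pi_{1,x}}+\sqrt{\pi_{2,x}})$. Your reachability argument for necessity, which picks $\alpha_0$ so that a revealed first draw $X_1=y$ lands exactly at the target belief, fills in a step the paper skips; there it simply fixes an arbitrary state $(y,\alpha)$.
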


As noted above, the fallback-value effect, $U(y',\alpha)-U(y,\alpha)$, is always negative.  Under Assumption \ref{ass: single-crossing}, the learning effect can be positive only when the sample value $x>\max\{y,\hat{x}\}$.  Conditions \eqref{eq: IFF_lower} and \eqref{eq: SC_lower} then provide relevant conditions for the negative fallback value effect to dominate the positive learning effect in this region.  Specifically, for any $x>\max\{y,\hat{x}\}$, the LHS of \eqref{eq: IFF_lower} measures the exact fallback value effect, which is equal to the sum of the survival functions from $y$ to $x-1$.  The RHS is the exact learning effect, which is positive as $\pi_{2,x}-\pi_{1,x}>0$.  Intuitively, when the two distributions $L_1$ and $L_2$ are close to each other, the value of learning is small.  In the extreme case of $L_1=L_2$, there is no learning at all and the learning effect is always zero, in which case condition \eqref{eq: IFF_lower} always holds.

The single-crossing order plays two distinct roles. First, it implies FOSD, so $F_1(y)\ge F_2(y)$ for every $y$. Therefore, a censored observation weakly shifts beliefs toward the inferior distribution $L_1$. Second, it implies the increasing-convex order, so $\Delta(y)=U_2(y)-U_1(y)\ge 0$ for every $y$. This ensures that a decrease in $\alpha$ has a nonnegative effect on the marginal value of continued search. These two implications make it possible to identify the potential failure of monotonicity in the high-outcome region $x>\max\{y,\hat{x}\}$.  The single-crossing order also helps to characterize comparative static results of the optimal stopping rule.

Finally, the state-independent sufficient condition~\eqref{eq: SC_lower} depends only on the model primitives, such as the distributions $L_1$ and $L_2$.
This condition is state-free because it bounds the two sides of \eqref{eq: IFF_lower} uniformly over $(y,\alpha)$.  To be specific, the LHS term $\bar{F}_1(x-1)$ is a lower bound on the magnitude of the negative fallback-value effect, while the RHS is an upper bound on the positive learning effect.   \ignore{Note that, under Assumption \ref{ass: single-crossing} and that $x>\hat{x}$, we have $\pi_{2,x}>\pi_{1,x}$, and hence
\begin{equation*}
    \frac{\sqrt{\pi_{2,x}}-\sqrt{\pi_{1,x}}}{\sqrt{\pi_{2,x}}+\sqrt{\pi_{1,x}}}=\frac{\pi_{2,x}-\pi_{1,x}}{(\sqrt{\pi_{2,x}}+\sqrt{\pi_{1,x}})^2}<\frac{\pi_{2,x}-\pi_{1,x}}{(2\sqrt{\pi_{1,x}})^2}=\frac{\pi_{2,x}-\pi_{1,x}}{4\pi_{1,x}}.
\end{equation*}
A stronger sufficient condition that implies \eqref{eq: SC_lower} is as follows: under Assumption \ref{ass: single-crossing},
\begin{equation}\label{eq: SC 2}
1-F_{1}\left( x-1\right) \geq \frac{\pi_{2,x}-\pi_{1,x}}{4\pi_{1,x}}\Delta \left( x\right),\quad\text{for any $x>\hat{x}$}.
\end{equation}
Condition \eqref{eq: SC_lower} is easy to verify because it depends only on model primitives.  
} 
The following example shows how to use the sufficient condition \eqref{eq: SC_lower} to verify a monotone search problem.

\begin{example}[Monotone under lower-censoring]\label{ex: monotone search}
Consider the following example: 
\begin{equation*}
    L_1=(0.25, 0.25, 0.25, 0.25), \qquad L_2=(0.2, 0.1, 0.3, 0.4).
\end{equation*}
\ignore{
\begin{equation*}
\begin{tabular}{|c|c|c|c|c|}
\hline
$X$ & $1$ & $2$ & $3$ & $4$ \\ \hline
$L_{1}$ & $\pi_{1,1}=0.25$ & $\pi_{1,2}=0.25$ & $\pi_{1,3}=0.25$ & $\pi_{1,4}=0.25$ \\ \hline
$L_{2}$ & $\pi_{2,1}=0.2$ & $\pi_{2,2}=0.1$ & $\pi_{2,3}=0.3$ & $\pi_{2,4}=0.4$ \\ \hline
\end{tabular}
\end{equation*}
}
Observe that $L_{1}\prec _{sc}L_{2}$ and $\hat{x}=2$.  When $x=4$, $\Delta \left( x\right) =0$ and the sufficient condition (\ref{eq: SC_lower}) holds obviously.  When $x=3$, $1-F_{1}\left( x-1\right) =0.5$ and 
\begin{equation*}
\frac{\sqrt{\pi_{2,x}}-\sqrt{\pi_{1,x}}}{\sqrt{\pi_{2,x}}+
\sqrt{\pi_{1,x}}}\Delta \left( x\right) =\frac{\sqrt{0.3}-\sqrt{0.25}}{\sqrt{%
0.3}+\sqrt{0.25}}\left( 0.4-0.25\right) \left( 4-3\right) \simeq 0.0068.
\end{equation*}
Condition (\ref{eq: SC_lower}) also holds.  Therefore, the search problem under lower-censoring is monotone.
\end{example}

The single-crossing order (Assumption \ref{ass: single-crossing}) can not be weakened to FOSD.  The main reason is that, under FOSD, the sequence $(\pi_{1,i}-\pi_{2,i})_{i=1}^k$ can change its signs multiple times, and hence, induce extra volatility of belief shifts in the region of $x\in\{y+1,\ldots,\hat{x}\}$, which may destroy the monotonicity that holds under the single-crossing assumption. Consider the following example,
\begin{equation*}
    L_1=(0.9, 0.04, 0.05, 0.01), \qquad L_2=(0.1, 0.6, 0.04, 0.26).
\end{equation*}
Observe that $L_{1}\prec _{st}L_{2}$, but the single-crossing assumption is not satisfied.  The sufficient condition \eqref{eq: SC_lower} satisfies obviously as $\hat{x}=3$ by definition \eqref{eq: x_hat} and $\Delta(4)=0$, yet the search problem is not monotone.  To see this, consider an initial state $(y,\alpha)=(1,\frac{9}{10})$ and a sample value $X=2>y$.  The updated state is $(y',\alpha')=(2,\frac{6}{16})$, and the change in one-step marginal search value is:
\begin{equation*}
    U(y',\alpha')-U(y,\alpha)=A+B=(\alpha-\alpha')\Delta(y')-\bar{F}_{\alpha}(y)=\frac{21}{40}\times \frac{49}{100}-\frac{18}{100}>0.
\end{equation*}
Therefore, this search problem is not monotone under lower-censoring.

\paragraph{Comparison with Bikhchandani and Sharma (1996).} \cite{bikhchandani1996optimal} (BS) study a sequential search model with an \emph{ad hoc}, rather than Bayesian, learning rule. They propose two assumptions on the learning rule that are sufficient for monotonicity of the search problem.

BS consider an information structure with full revelation, in which all search outcomes are observed by the DM. Let the search history at period $t$ be $h_t=\mathbf{x}_t=(x_1,\ldots,x_t)$, and let the corresponding fallback value be $y=\max\{x_1,\ldots,x_t\}<k$. Removing the most recent observation $x_t$ yields the period-$t-1$ history $\mathbf{x}_{t-1}=(x_1,\ldots,x_{t-1})$. Assumption~1 of BS requires that, for every outcome value $y'>y$, the posterior probability
\begin{equation*}
    \Pr(X=y'|\mathbf{x}_t)\leq \Pr(X=y'|\mathbf{x}_{t-1}).
\end{equation*}
It formalizes the notion that if a DM observes more outcomes that are lower than $y$, then the posterior probability of observing a value higher than $y$ decreases.  In the language of our decomposition, Assumption 1 of BS rules out a positive learning effect after an observation below the current fallback value.

Moreover, consider two realized search histories of the same length, e.g., $\mathbf{x}_t=(x_1,x_2,\ldots,x_t)$ and $\tilde{\mathbf{x}}_t=(\tilde{x}_1,\tilde{x}_2,\ldots,\tilde{x}_t)$, and let $y<k$ be the maximum observation of the two search histories.  Assumption 2 of BS states that, for any $y'>y$,
\begin{equation*}
    \Pr(X=y'|\mathbf{x}_t)= \Pr(X=y'|\tilde{\mathbf{x}}_{t}).
\end{equation*}
That is, the posterior probability of observing a value higher than $y$, given that all the previous observations are lower than $y$, depends only on the number of previous observations.  This assumption is pivotal to their result of a history-independent cutoff for optimal stopping.

Neither Assumption 1 nor Assumption 2 of BS holds in our model.  We consider a Bayesian learning model under lower-censoring, which is distinct from the \emph{ad hoc} learning rule.  Moreover, given the censoring rule in our model, the sequence of observations matters in determining the posterior beliefs.  We next show this using the previous Example \ref{ex: monotone search}.

Let $\a_t$ denote the posterior belief after observing the search history $\mathbf{x}_t$, and the prior belief $\alpha=0.5$.  Consider two search histories: $\mathbf{x}_2=(x_1,x_2)=(1,3)$ and $\tilde{\mathbf{x}}_2=(\tilde{x}_1,\tilde{x}_2)=(2,1)$.  Along the search history $\mathbf{x}_2=(1,3)$, from \eqref{eq: a_t+1 lower}, the posterior beliefs are
\begin{equation*}
  \alpha_1=\frac{\alpha \pi_{1,1}}{\alpha \pi_{1,1}+(1-\alpha) \pi_{2,1}}=\frac{5}{9},\quad \alpha_2 = \frac{\alpha_1 \pi_{1,3}}{\alpha_1 \pi_{1,3}+(1-\alpha_1) \pi_{2,3}}=\frac{25}{49}.
\end{equation*}
The predictive probabilities of observing $X=4$ are
\[
\pi_{\a_1}(4)=\a_1\times \pi_{1,4}+(1-\a_1)\times \pi_{2,4} =\frac{57}{180},\quad\pi_{\a_2}(4)=\a_2\times \pi_{1,4}+(1-\a_2)\times \pi_{2,4} =\frac{317}{980}.
\]
Assumption 1 of BS does not hold as $\pi_{\a_1}(4)<\pi_{\a_2}(4)$, i.e., it is more likely to observe a higher value $4$ with more observations of values strictly less than $4$.  

Along the search history $\tilde{\mathbf{x}}_2=(2,1)$, from \eqref{eq: a_t+1 lower}, the posterior beliefs are
\begin{equation*}
  \tilde{\alpha}_1=\frac{\alpha \pi_{1,2}}{\alpha \pi_{1,2}+(1-\alpha) \pi_{2,2}}=\frac{5}{7},\quad \tilde{\alpha}_2 = \frac{\tilde{\alpha}_1 F_1(2)}{\tilde{\alpha}_1 F_1(2)+(1-\tilde{\alpha}_1) F_2(2)}=\frac{25}{31}.
\end{equation*}
Assumption 2 of BS does not hold either as $\alpha_2\neq \tilde{\alpha}_2$, which implies $\pi_{\a_2}(4)\neq \pi_{\tilde{\alpha}_2}(4)$.  

This comparison highlights the distinction between our Bayesian learning model under lower-censoring and the \emph{ad hoc} learning rule considered by BS. In our model, the learning effect can be positive, and posterior beliefs depend on the order of observations. Consequently, the sufficient condition \eqref{eq: SC_lower} accommodates cases that are excluded by Assumption 1 of BS, and the resulting optimal stopping cutoffs are generally history-dependent, unlike the history-independent cutoffs obtained under Assumption 2 of BS.

\subsection{On the optimal stopping rule}
Proposition~\ref{prop: optimal stopping monotone} characterizes the optimal stopping time in monotone search problems. The optimal rule is myopic and takes the form of a cutoff rule. For any given $\alpha\in[0,1]$, define the cutoff value
\begin{equation}\label{eq: optimal stopping cutoff}
y^{\ast}(\alpha)=\min\left\{z\in\bar{K}:U(z,\alpha)\leq c\right\}.
\end{equation}
The uniqueness of $y^*(\alpha)$ follows from the fact that the function $z\mapsto U(z,\alpha)$ is strictly decreasing for all $z<k$, while $U(k,\alpha)=0$.  By Proposition~\ref{prop: optimal stopping monotone}, it is optimal to stop if and only if $y\geq y^*(\alpha)$. 

For comparison, it is also useful to define the corresponding cutoff values under each distribution $L_i$:
\[
y_i^*=\min\left\{z\in\bar{K}:U_i(z)\leq c\right\},\qquad i=1,2.
\]
The value $y_i^*$ can be interpreted as the optimal stopping cutoff when the underlying distribution is known to be $L_i$.  These benchmark cutoffs serve as reference points for comparing the optimal stopping rule under uncertainty about $L$ with the corresponding stopping rules in the full-information environments where the distribution is known.

\begin{lemma}[Stopping cutoff]\label{lem: y*(a)}
Suppose Assumption \ref{ass: single-crossing} holds and the search problem is monotone.  (i) Then for every $\alpha\in[0,1]$, the stopping cutoff
\begin{equation}\label{eq: y*(a)}
    y_1^*\le y^*(\alpha)\le y_2^*.
\end{equation}
(ii) $y^*(\alpha)$ is weakly decreasing in both $\alpha$ and $c$. For a fixed $c$ it is a step function of $\alpha$.  At the boundary beliefs, $y^*(0)=y_2^*$ and $y^*(1)=y_1^*$.\\
(iii) Let $\alpha'$ denote the posterior belief that follows $\alpha$. Then,
\begin{equation*}
\begin{cases}
    y^*(\alpha')\le y^*(\alpha),
        & \text{if } \{X\le y\} \text{ or } X=x\in \{y+1,\ldots,\hat{x}\},\\[0.4em]
    y^*(\alpha')\ge y^*(\alpha),
        &  \text{if } X=x>\max\{y,\hat{x}\}.
\end{cases}
\end{equation*}
\end{lemma}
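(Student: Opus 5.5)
The proof will rest on two facts already established. By Lemma~\ref{lem: U(y,a)}(ii), $U(z,\alpha)$ is weakly decreasing in $\alpha$ for every fixed $z$. By part (i) of the same lemma, $z\mapsto U(z,\alpha)$ is strictly decreasing on $\{0,\ldots,k-1\}$ with $U(k,\alpha)=0$. Consequently the sublevel set $\{z\in\bar K: U(z,\alpha)\le c\}$ is a nonempty upper interval $\{y^*(\alpha),\ldots,k\}$. Every comparison of cutoffs will then reduce to an inclusion of such sets.

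For part (ii), I would take $\alpha'>\alpha$. Since $U(z,\alpha')\le U(z,\alpha)$ pointwise, we have $\{z:U(z,\alpha)\le c\}\subseteq\{z:U(z,\alpha')\le c\}$, and taking minima gives $y^*(\alpha')\le y^*(\alpha)$. Monotonicity in $c$ follows the same way, since raising $c$ enlarges every sublevel set. The step-function claim is immediate because $y^*$ takes values in the finite set $\bar K$ and is monotone in $\alpha$; the steps occur at the beliefs where $U(z,\alpha)=c$ for some $z$, and $U(z,\cdot)$ is affine in $\alpha$. The boundary values follow from $U(z,0)=U_2(z)$ and $U(z,1)=U_1(z)$, which give $y^*(0)=y_2^*$ and $y^*(1)=y_1^*$.

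Part (i) is then immediate from (ii): $\alpha\in[0,1]$ and $y^*$ is weakly decreasing, so $y^*(1)\le y^*(\alpha)\le y^*(0)$. An alternative route is to write $U(z,\alpha)$ as a convex combination of $U_1(z)$ and $U_2(z)$ and use $U_1\le U(\cdot,\alpha)\le U_2$, which follows from $\Delta\ge 0$ in Lemma~\ref{lem: Delta(y)}.

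For part (iii), I would combine (ii) with the sign pattern of posterior shifts in \eqref{eq: belief shift_lower}. A censored signal, or a revealed value $x\in\{y+1,\ldots,\hat x\}$, gives $\alpha'\ge\alpha$, hence $y^*(\alpha')\le y^*(\alpha)$. A revealed $x>\max\{y,\hat x\}$ gives $\alpha'<\alpha$, hence $y^*(\alpha')\ge y^*(\alpha)$. The only point needing care is the boundary beliefs $\alpha\in\{0,1\}$. There the posterior is unchanged, so $\alpha'=\alpha$ and both inequalities hold trivially; \eqref{eq: belief shift_lower} covers interior $\alpha$.

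The main obstacle is not technical. The monotonicity hypothesis is used only to interpret $y^*(\alpha)$, via Proposition~\ref{prop: optimal stopping monotone}, as the optimal stopping cutoff; the comparisons themselves follow purely from the properties of $U$. I would state this explicitly in the proof to avoid any suggestion of circularity.
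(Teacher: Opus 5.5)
Your proposal is correct and follows essentially the same route as the paper: sublevel-set inclusion, driven by the monotonicity of $U$ in $\alpha$ (and the corresponding inclusion in $c$), for (ii); the sandwich $U_1\le U(\cdot,\alpha)\le U_2$, or equivalently the boundary values, for (i); and the sign pattern of belief shifts in \eqref{eq: belief shift_lower} for (iii). Your proposal also makes explicit three points the paper leaves implicit: why $y^*$ is a step function, the degenerate beliefs $\alpha\in\{0,1\}$, and the fact that monotonicity of the search problem is used only to interpret $y^*(\alpha)$ as the optimal cutoff.
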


Consider result (iii).  The first case corresponds to a posterior shift toward the inferior distribution $L_1$, which lowers the one-step marginal search value and therefore weakly lowers the stopping cutoff. The second case corresponds to a posterior shift toward the superior distribution $L_2$, which raises the one-step marginal search value and therefore weakly raises the stopping cutoff. Equality may occur because $y^*(\cdot)$ is a discrete step function, or because the posterior belief remains unchanged, e.g., when $\pi_{1,x}=\pi_{2,x}$ for a fully revealed observation.

It is also useful to represent the stopping rule in terms of belief cutoffs.  By \eqref{eq: y*(a)}, the DM continues searching for every belief whenever $y<y_1^*$.  We define the set of fallback values for which stopping may occur by $\mathcal Y:=\{y\in \bar{K}:y\ge y_1^*\}$.  For each $y\in\mathcal Y$, define
\begin{equation}\label{eq: belief cutoff}
 \underline\alpha(y):=\inf\{\alpha\in[0,1]: U(y,\alpha)\le c\}. 
\end{equation}
If $U(y,0)\le c$, then $\underline\alpha(y)=0$, because stopping is optimal for every belief. If 
$U(y,1)=c<U(y,0)$, then $\underline\alpha(y)=1$, because stopping is optimal only at the degenerate belief 
$\alpha=1$.  When $U(y,1)<c<U(y,0)$, then the belief cutoff is interior and is given as follows:
\begin{equation}\label{eq:belief-cutoff-explicit}
    \underline\alpha(y)
    =\frac{U_2(y)-c}{U_2(y)-U_1(y)}.
\end{equation}

%If $U(y,0)\le c$, then $\underline\alpha(y)=0$; if $U(y,1)>c$, then no belief in $[0,1]$ induces stopping at $y$, so $y\notin\mathcal Y$ under the maintained definition.

\begin{lemma}[Belief cutoff for optimal stopping]\label{lem:belief-cutoff-stopping}
Suppose Assumption \ref{ass: single-crossing} holds and the search problem is monotone.  For any $y\in\mathcal Y$, it is optimal to stop if and only if $\alpha\ge\underline\alpha(y)$.
\end{lemma}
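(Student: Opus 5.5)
Combining Proposition~\ref{prop: optimal stopping monotone} with the monotonicity of $U(y,\cdot)$ in $\alpha$ from Lemma~\ref{lem: U(y,a)}(ii) gives the result. Because the problem is monotone, Proposition~\ref{prop: optimal stopping monotone} shows that at any state $(y,\alpha)$ it is optimal to stop if and only if $U(y,\alpha)\le c$. The remaining task is to show that, for fixed $y\in\mathcal Y$, the set $\{\alpha\in[0,1]:U(y,\alpha)\le c\}$ equals $[\underline\alpha(y),1]$.

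Write $U(y,\alpha)=U_2(y)-\alpha\Delta(y)$. By Lemma~\ref{lem: Delta(y)}(ii), $\Delta(y)\ge0$, so $U(y,\cdot)$ is affine, continuous and weakly decreasing on $[0,1]$. Its sublevel set $\{\alpha:U(y,\alpha)\le c\}$ is therefore a closed interval of the form $[a,1]$, provided it is nonempty.

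Nonemptiness is where $y\in\mathcal Y$ is used. Since $y\ge y_1^*$ and $U_1(\cdot)=U(\cdot,1)$ is weakly decreasing (Lemma~\ref{lem: U(y,a)}(i)), we get $U(y,1)=U_1(y)\le U_1(y_1^*)\le c$. Hence $\alpha=1$ lies in the set.

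With nonemptiness established, the infimum $\underline\alpha(y)$ is attained because the set is closed, and it equals $a$. Consequently, $U(y,\alpha)\le c$ holds if and only if $\alpha\ge\underline\alpha(y)$. The three cases listed before the lemma confirm this picture:
\begin{itemize}
\item if $U(y,0)\le c$, the cutoff is $0$;
\item if $U(y,1)<c<U(y,0)$, then $\Delta(y)>0$ and solving $U_2(y)-\alpha\Delta(y)=c$ gives \eqref{eq:belief-cutoff-explicit};
\item if $U(y,1)=c<U(y,0)$, the cutoff is $1$.
\end{itemize}

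The only delicate point is the degenerate case $\Delta(y)=0$, where $U(y,\cdot)$ is constant. There $U(y,\alpha)=U_1(y)\le c$ for all $\alpha$, so the set is all of $[0,1]$ and $\underline\alpha(y)=0$, which is consistent with the claim. I expect no real obstacle beyond checking this edge case.
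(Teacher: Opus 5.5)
Your proof is correct and follows the paper's own argument: combine Proposition~\ref{prop: optimal stopping monotone} with the fact that $U(y,\cdot)$ is decreasing in $\alpha$. Then use $U(y,1)=U_1(y)\le c$ for $y\in\mathcal Y$ to get nonemptiness, so the stopping set in $\alpha$ is exactly $[\underline\alpha(y),1]$; your appeal to continuity (affinity) of $U(y,\cdot)$ to ensure the infimum is attained is a bit of rigor the paper leaves implicit.
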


Figure \ref{fig: stopping region} illustrates the stopping region of a monotone search problem in the state space of $(y,\alpha)$.  One can check that this example satisfies the sufficient condition \eqref{eq: SC_lower} for monotonicity.

\begin{figure}[t]
\begin{tikzpicture}
\begin{axis}[
    width=12cm,
    height=8cm,
    xlabel={Posterior belief $\alpha$},
    ylabel={Fallback value $y$},
    xmin=0,
    xmax=1.0,
    ymin=0,
    ymax=10.5,
    xtick={0,0.1,0.2,0.3,0.4,0.5,0.6,0.7,0.8,0.9,1},
    ytick={0,1,2,3,4,5,6,7,8,9,10},
    ymajorgrids=true,
    xmajorgrids=true,
    grid style={dashed,gray!30},
    %legend style={draw=none, fill=none, inner sep=0pt, at={(0.99, 0.16)},anchor=south east},
    %legend cell align={left},
    %title={Stopping region for $c=4$},
]

% Shaded stopping region: $U(y,\alpha)\le 4$
\addplot[draw=none, fill=blue!20] coordinates {(0.868687,0.5) (1.000000,0.5) (1.000000,1.5) (0.868687,1.5)} -- cycle;
\addplot[draw=none, fill=blue!20] coordinates {(0.495726,1.5) (1.000000,1.5) (1.000000,2.5) (0.495726,2.5)} -- cycle;
\addplot[draw=none, fill=blue!20] coordinates {(0.095238,2.5) (1.000000,2.5) (1.000000,3.5) (0.095238,3.5)} -- cycle;
\addplot[draw=none, fill=blue!20] coordinates {(0.000000,3.5) (1.000000,3.5) (1.000000,4.5) (0.000000,4.5)} -- cycle;
\addplot[draw=none, fill=blue!20] coordinates {(0.000000,4.5) (1.000000,4.5) (1.000000,5.5) (0.000000,5.5)} -- cycle;
\addplot[draw=none, fill=blue!20] coordinates {(0.000000,5.5) (1.000000,5.5) (1.000000,6.5) (0.000000,6.5)} -- cycle;
\addplot[draw=none, fill=blue!20] coordinates {(0.000000,6.5) (1.000000,6.5) (1.000000,7.5) (0.000000,7.5)} -- cycle;
\addplot[draw=none, fill=blue!20] coordinates {(0.000000,7.5) (1.000000,7.5) (1.000000,8.5) (0.000000,8.5)} -- cycle;
\addplot[draw=none, fill=blue!20] coordinates {(0.000000,8.5) (1.000000,8.5) (1.000000,9.5) (0.000000,9.5)} -- cycle;
\addplot[draw=none, fill=blue!20] coordinates {(0.000000,9.5) (1.000000,9.5) (1.000000,10.5) (0.000000,10.5)} -- cycle;

% Blue dashed boundary of the discrete stopping region
\addplot[blue!25, dashed, very thick] coordinates {(1,0.5) (0.868687,0.5) (0.868687,1.5) (0.495726,1.5) (0.495726,2.5) (0.095238,2.5) (0.095238,3.5) (0.000000,3.5) (0.000000,10.5)};

% Belief cutoff boundary
\addplot+[only marks, mark=*, mark size=2.2pt] coordinates {(0.868687,1) (0.495726,2) (0.095238,3) (0.000000,4) (0.000000,5) (0.000000,6) (0.000000,7) (0.000000,8) (0.000000,9) (0.000000,10)};

% Labels for nontrivial cutoff points
\node[anchor=west, font=\small] at (axis cs:0.87,1) {$\underline\alpha(1)$};
\node[anchor=west, font=\small] at (axis cs:0.50,2) {$\underline\alpha(2)$};
\node[anchor=west, font=\small] at (axis cs:0.10,3) {$\underline\alpha(3)$};

\node[anchor=west] at (axis cs:0.4,8) {Stopping region};
%\node[anchor=west, font=\scriptsize] at (axis cs:1.01,0) {no stop};

\end{axis}
\end{tikzpicture}
\caption{Stopping region: $\{(y,\alpha): U(y,\alpha)\le c\}$}\label{fig: stopping region}
\floatfoot{In this example, $c=4$, and the distributions $L_1$ and $L_2$ are the same as in Figure \ref{fig: U(y,a) decomposition}.  
}
\end{figure}

To study the process of posterior updates, it is convenient to use the likelihood ratios.  For $\alpha\in(0,1)$, define the following likelihood ratio $    O(\alpha):=\frac{\alpha}{1-\alpha}$.   
\ignore{
Under lower-censoring, the odds ratio evolves according to
\begin{equation}\label{eq:odds-ratio-update}
    \frac{O(\alpha')}{O(\alpha)}
    =
    \begin{cases}
        \dfrac{F_1(y)}{F_2(y)}, & \text{after the censored event }\{X\le y\},\\[0.5em]
        \dfrac{\pi_{1,x}}{\pi_{2,x}}, & \text{after the fully revealed observation }x>y.
    \end{cases}
\end{equation}
}
For any belief cutoff $\underline\alpha(y)\in(0,1)$, define the corresponding likelihood cutoff
\begin{equation}
    \bar O(y):=\frac{\underline\alpha(y)}{1-\underline\alpha(y)}.
    \label{eq:odds-cutoff}
\end{equation} 
The optimal stopping rule can be equivalently defined by stopping at $(y,\alpha)$ if and only if $O(\alpha)\ge \bar O(y).$ Note that $\bar O(y)$ is strictly decreasing in $c$, as $\underline{\alpha}(y)$ is strictly decreasing in $c$. 

Define $R(y):=\frac{F_1(y)}{F_2(y)}$ for all $y\in K$.  Under Assumption \ref{ass: single-crossing}, $F_1(y)\ge F_2(y)$, so $R(y)\ge1$. Moreover, since $\pi_{1,k}<\pi_{2,k}$, single crossing implies $F_1(y)>F_2(y)$ for every $y<k$; hence $R(y)>1$ for all $y<k$.  If the fallback value remains $y$ and the DM observes $m$ consecutive censored events, then the likelihood ratio becomes $O(\alpha)R(y)^m$.  For $y\in\mathcal Y$ and $\alpha\in(0,1)$, we can define the minimum number of consecutive censored observations that induces stopping by
\begin{equation}\label{eq: consecutive censoring bound_lower}
    m_y(\alpha)
    :=\min\Bigl\{m\in\mathbb N_0:
    O(\alpha)R(y)^m\ge \bar O(y)\Bigr\}.
\end{equation}
For instance, if $\bar O(y)=0$, then $m_y(\alpha)=0$. Likewise, if $\bar O(y)=\infty$, then $m_y(\alpha)=\infty$. Observe that $m_y(\alpha)$ is decreasing in $\a$ and $c$, because $O(\alpha)$ is increasing in $\a$ and $\bar O(y)$ is strictly decreasing in $c$.

The case $\underline\alpha(y_1^*)=1$ is a knife-edge case in which, under $L_1$, the DM is indifferent between stopping and continuing search at the cutoff value $y_1^*$. The following genericity assumption rules out this case by ensuring that stopping is strictly preferred at $y_1^*$.

\begin{assumption}\label{ass:strict-lowest-cutoff}
    $U_1(y_1^*)<c$, or equivalently $\underline\alpha(y_1^*)<1$.
\end{assumption}

Since $U_1(y)$ is decreasing in $y$, Assumption \ref{ass:strict-lowest-cutoff} implies $U_1(y)<c$ for every
$y\ge y_1^*$. Hence $\underline\alpha(y)<1$ and $\bar O(y)<\infty$ for all $y\in\mathcal Y$.  Starting from any state $y\in\mathcal Y$ and $\alpha\in(0,1)$, a sequence of $m_y(\alpha)$ consecutive censored observations will induce stopping. However, an uncensored observation will raise the fallback value.  As the set $K$ is finite, a finite number of uncensored observations will also induce stopping.  As a result, the optimal policy must stop after a finite number of additional searches from any state $y\in\mathcal Y$ and $\alpha\in(0,1)$.

\begin{proposition}[Finite pathwise bound]\label{prop:finite-pathwise-bound}
Suppose Assumption \ref{ass: single-crossing}, Assumption \ref{ass:strict-lowest-cutoff}, and condition \eqref{eq: SC_lower} hold. Starting from any state $(y,\alpha)$ with $y\in\mathcal Y$ and $\alpha\in(0,1)$, the optimal policy stops after at most $\bar n(y,\alpha)$ additional searches on every feasible search history, where $\bar n(y,\alpha)<\infty$.

Moreover, $\bar n(y,\alpha)=0$ for all $y\ge y_2^*$. For $y\in\mathcal Y$ with $y<y_2^*$, define
\begin{equation}\label{eq: alpha(m,x)}
    \alpha^{(m,x)}
    :=
    \frac{
        O(\alpha)R(y)^m(\pi_{1,x}/\pi_{2,x})
    }{
        1+O(\alpha)R(y)^m(\pi_{1,x}/\pi_{2,x})
    },
    \qquad
    m=0,\ldots,m_y(\alpha)-1,\quad x>y.
\end{equation}
Then one valid recursive bound is
\begin{equation}\label{eq: pathwise bound}
\bar n(y,\alpha)
:=
\max\left\{
    m_y(\alpha),
    \max_{0\le m<m_y(\alpha)}
    \max_{x\in\{y+1,\ldots,k\}}
    \left[
        m+1+\bar n\bigl(x,\alpha^{(m,x)}\bigr)
    \right]
\right\}.    
\end{equation}
Moreover, $\bar n(y,\alpha)$ is decreasing in both $\alpha$ and in $c$.
\end{proposition}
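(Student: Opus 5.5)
The plan is to argue by strong induction on the fallback value, going downward from $k$ to $y_1^*$. First, monotonicity comes from condition \eqref{eq: SC_lower} together with Theorem~\ref{thm: monotone conditions_lower}. Proposition~\ref{prop: optimal stopping monotone} then says the optimal policy stops exactly when $U(y,\alpha)\le c$. Lemma~\ref{lem:belief-cutoff-stopping} rewrites this as $\alpha\ge\underline\alpha(y)$, which in odds form is $O(\alpha)\ge\bar O(y)$.

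The base cases are direct. If $y\ge y_2^*$, then $U(y,\alpha)\le U_2(y)\le c$ for every $\alpha$, because $U$ is decreasing in $\alpha$ (Lemma~\ref{lem: U(y,a)}); hence $\bar n(y,\alpha)=0$. This covers $y=k$, since $U(k,\alpha)=0$.

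Now fix $y\in\mathcal Y$ with $y<y_2^*$ and $\alpha\in(0,1)$. Assumption~\ref{ass:strict-lowest-cutoff} gives $\bar O(y)<\infty$, and single crossing with $\pi_{1,k}<\pi_{2,k}$ gives $R(y)>1$. So $m_y(\alpha)<\infty$, and it is the stopping time along the all-censored path. Every other feasible path runs some number $m<m_y(\alpha)$ of consecutive censored draws while the odds are still below $\bar O(y)$, and then produces a first revealed value $x>y$. Multiplying odds, the posterior at that point is exactly $\alpha^{(m,x)}$ in \eqref{eq: alpha(m,x)}: censored draws contribute $R(y)^m$ and the revealed draw contributes $\pi_{1,x}/\pi_{2,x}$. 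The new state is $(x,\alpha^{(m,x)})$ with $x>y\ge y_1^*$, so $x\in\mathcal Y$, and $\alpha^{(m,x)}\in(0,1)$ because all $\pi_{i,j}>0$. By the induction hypothesis, at most $\bar n(x,\alpha^{(m,x)})$ further searches follow. The total along this path is therefore at most $m+1+\bar n(x,\alpha^{(m,x)})$. Maximizing over the finitely many choices of $(m,x)$, together with the all-censored path, yields \eqref{eq: pathwise bound}. The recursion is well defined and finite because $x$ strictly increases and $K$ is finite, so it terminates after at most $k-y$ levels. The only remaining point is that stopping can happen earlier than these paths indicate; that only lowers the count, so the bound is still valid.

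For the comparative statics, I would again induct downward on $y$. In $\alpha$: $m_y(\alpha)$ is weakly decreasing, as already noted after \eqref{eq: consecutive censoring bound_lower}. Each $\alpha^{(m,x)}$ is increasing in $\alpha$, so by induction each $\bar n(x,\alpha^{(m,x)})$ is weakly decreasing in $\alpha$. Also, a smaller $m_y(\alpha)$ shrinks the index set of the inner maximum. Hence $\bar n$ is weakly decreasing in $\alpha$. In $c$: $\bar O(y)$ is decreasing in $c$, so $m_y$ is decreasing. The threshold $y_2^*$ and the set $\mathcal Y$ also move weakly down in $c$, since $U_i$ is decreasing. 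The $\alpha^{(m,x)}$ do not depend on $c$, so the induction again applies term by term.

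The main obstacle is this $c$-monotonicity at the boundary between regimes. As $c$ rises, some states $x$ switch from the recursive regime to $\bar n=0$. This switch is consistent with the bound decreasing, but I would also have to check that the recursion is only applied to $x\in\mathcal Y$. Since $\mathcal Y$ grows with $c$, I would extend the definition of $\bar n$ to all $x\ge y$ and check that this extension is monotone.
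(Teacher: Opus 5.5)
Your proposal is correct and follows essentially the same route as the paper: backward induction on the fallback value, with the all-censored path handled by $m_y(\alpha)$, and the first revealed value $x>y$ after $m<m_y(\alpha)$ censored draws leading to the state $(x,\alpha^{(m,x)})$. The comparative statics then come from term-by-term monotonicity of the recursion in $\alpha$ and $c$. The boundary issue you flag for $c$ does not arise: for $c<c'$, the recursion at any $y\in\mathcal Y(c)$ only calls values $x>y$, and these already lie in $\mathcal Y(c)\subseteq\mathcal Y(c')$, so $\bar n$ never needs to be extended.
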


The restriction $\alpha\in(0,1)$ excludes degenerate priors of $\alpha=0$ or $1$, for which no intertemporal learning occurs.  Assumption \ref{ass:strict-lowest-cutoff} rules out the knife-edge case $\underline\alpha(y_1^*)=1$, in which the bound $m_{y_1^*}(\alpha)$ can be infinite.  With free recall, the fallback value weakly increases over time.  Under Assumption \ref{ass: single-crossing}, censored events are relatively more likely under the inferior distribution $L_1$, so they multiply the odds ratio by $R(y)>1$ and push beliefs toward $L_1$.  By contrast, sufficiently high uncensored observations can push beliefs toward $L_2$, but they also raise the fallback value, which moves the state closer to the stopping region.

\subsection{Extension to Continuous Distributions}\label{subsec: continuous version}

The baseline model of discrete distributions can be naturally extended to continuous distributions.  Now consider two continuous distributions $F_1$ and $F_2$ defined on a real interval $[0,b]$, with strictly positive densities $f_1$ and $f_2$ and $f_1(b)<f_2(b)$.  Given a posterior belief $\alpha$, we define $F_{\alpha}$, $\bar{F}_{\alpha}$ and $f_{\alpha}$ in the same way as before.  Given a state $(y,\alpha)$, the one-step marginal search value is
\[
U(y,\alpha)=\E_{\a}[(X-y)_+]=\int_y^b\bar{F}_{\a}(z)dz, \text{ and } \Delta(y)=U_2(y)-U_1(y)=\int_y^b(F_1(z)-F_{2}(z))dz.
\]
We also assume $F_1$ is smaller than $F_2$ in single-crossing order, i.e., $f_1(x)-f_2(x)$ changes its sign from nonnegative to nonpositive only once on $[a,b]$, and define $\hat{x}:=\max\{x\in[0,b]: f_1(x)- f_2(x)\ge 0\}$.  The updated posterior under lower-censoring is
\begin{equation*}
    \alpha'=
    \begin{cases}
        \dfrac{\alpha F_1(y)}{F_{\alpha}(y)},
        & \text{if }\{X\le y\}\text{ is observed},\\[0.5em]
        \dfrac{\alpha f_{1}(x)}{f_{\alpha}(x)},
        & \text{if } X=x>y\text{ is observed}.
    \end{cases}
\end{equation*}
Similar to Theorem \ref{thm: monotone conditions_lower}, we have the following monotone conditions for continuous distributions. 

\setcounter{theorem}{0}          % Next \refstepcounter will yield 1
\begingroup                      % Start local group
  \renewcommand{\thetheorem}{\arabic{theorem}*} % Append * to the number
  \begin{theorem}[Monotonicity under lower-censoring: continuous distributions]\label{thm: monotone conditions_lower_continuous}
Suppose Assumption \ref{ass: single-crossing} holds. For continuous distributions, any search problem \eqref{eq:Bellman-general} under lower-censoring and Bayesian updating, with any $\alpha_0$, is monotone if and only if
\begin{equation}\label{eq: IFF_lower_continuous}
        \int_y^x \bar{F}_\alpha(z) \, dz \ge \frac{\alpha(1-\alpha)}{f_\alpha(x)} \bigl( f_2(x) - f_1(x) \bigr) \Delta(x).
\end{equation}
for all $y \in [0,b]$, $\alpha \in [0,1]$, and every $x > \max\{y, \hat{x}\}$. 
\end{theorem}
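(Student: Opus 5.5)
The plan is to mirror the discrete argument behind Theorem~\ref{thm: monotone conditions_lower}, using the decomposition \eqref{eq: U(y',a')-U(y,a)} into a learning effect $A$ and a fallback-value effect $B$. Fix a state $(y,\alpha)$. The feasible next states fall into two types.

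\emph{Censored observations.} After the censored event $\{X\le y\}$ the fallback stays at $y$ and the belief becomes $\alpha'=\alpha F_1(y)/F_\alpha(y)$. \emph{Revealed observations.} After a revealed $x>y$ the fallback becomes $x$ and the belief becomes $\alpha'=\alpha f_1(x)/f_\alpha(x)$.

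First I would record the continuous analogues of Lemmas~\ref{lem: Delta(y)} and~\ref{lem: U(y,a)}. Single crossing of $f_1-f_2$, together with $\int(f_1-f_2)=0$, gives $F_1-F_2\ge 0$. To see this, note that $F_1(x)-F_2(x)=\int_0^x(f_1-f_2)$ is nondecreasing up to $\hat x$, nonincreasing afterwards, and vanishes at $0$ and $b$. Hence $\Delta(y)=\int_y^b(F_1-F_2)\,dz\ge 0$. Moreover $U(y,\alpha)=\alpha U_1(y)+(1-\alpha)U_2(y)$ is affine in $\alpha$, so for any transition
\[
U(y',\alpha')-U(y,\alpha)=(\alpha-\alpha')\Delta(y')-\int_y^{y'}\bar F_\alpha(z)\,dz .
\]

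\emph{Sufficiency.} In the censored case $y'=y$, so $B=0$. Also $\alpha'-\alpha=\alpha(1-\alpha)(F_1(y)-F_2(y))/F_\alpha(y)\ge 0$, so $A\le 0$. In the revealed case,
\[
\alpha-\alpha'=\frac{\alpha(1-\alpha)(f_2(x)-f_1(x))}{f_\alpha(x)}.
\]
If $y<x\le\hat x$, then $f_1(x)\ge f_2(x)$, so $A\le0$, and $B\le 0$ always. If $x>\max\{y,\hat x\}$, the total change is nonpositive exactly when \eqref{eq: IFF_lower_continuous} holds. Thus the condition implies $U(y,\alpha)\ge U(y',\alpha')$ for every feasible transition from every state. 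Every belief in $[0,1]$ is allowed in the condition, so this covers any $\alpha_0$ and every reachable posterior.

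\emph{Necessity.} Suppose \eqref{eq: IFF_lower_continuous} fails at some $(y,\alpha,x)$ with $x>\max\{y,\hat x\}$. I would take the prior $\alpha_0=\alpha$ and an initial fallback $y$. Here is the main subtlety. If states with $y>0$ count as feasible starting states, as they do in Definition~\ref{def: Monotone Search Problem}, then observing $x$ immediately violates monotonicity. If the process must start at $Y_0=0$, I must reach the state $(y,\alpha)$ from some prior. Reaching fallback $y$ requires revealing the value $y$ itself, which moves the belief by the likelihood ratio $f_1(y)/f_2(y)$. Since this ratio lies in $(0,\infty)$ by strict positivity of the densities, I can choose $\alpha_0$ so that the post-observation belief equals $\alpha$. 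The state $(y,\alpha)$ is then reachable, and drawing $x$ next yields $U(x,\alpha')>U(y,\alpha)$.

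A remaining technicality is that a single point $x$ has probability zero under a continuous distribution, so "feasible" should be read as lying in the support. Alternatively, I would use continuity of both sides in $x$ to obtain a strict violation on a positive-measure neighbourhood of $x$. I expect this measure-zero/reachability bookkeeping to be the main obstacle; the algebra itself is routine.
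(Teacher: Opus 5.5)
Your proposal is correct and follows essentially the same route as the paper's proof. It uses the same learning/fallback decomposition, the same two automatic cases (censored signals, and revealed $x\le\hat{x}$), and the same observation that in the critical case $x>\max\{y,\hat{x}\}$ the sign of $U(x,\alpha')-U(y,\alpha)$ is exactly inequality \eqref{eq: IFF_lower_continuous}. Your extra work goes beyond what the paper needs: reaching $(y,\alpha)$ from $Y_0=0$ by choosing $\alpha_0$ to offset $f_1(y)/f_2(y)$, and handling probability-zero points. The paper reads the monotonicity definition as ranging over all states and all pointwise signal realizations, so necessity is immediate there.
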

\endgroup

%%%%%%%%%%%%%%%%%%%%%%%%%%%%%%%%

\section{An Application: Job Search}\label{sec: job search}  
\ignore{
Our model of Bayesian search under lower-censoring applies naturally to job search environments, where negative application outcomes yield only coarse feedback.  Consider a job seeker whose current fallback option has value $y\in\bar{K}$. This fallback option may be her current job or the best job offer obtained so far.  She searches sequentially for a better outside opportunity with a minimum value $y$, e.g., a base salary.  Each application entails an application cost $c>0$, such as time, effort, or the opportunity cost.  We interpret free recall as that the job seeker can always retain her current fallback option and can compare new opportunities against $y$. 

There is a large number of \emph{ex ante} homogeneous firms that offer a job opportunity of a minimum value $y$. The match value between the job seeker and firm $i$ is denoted by $X_i\in K$, and the sequence $(X_i)_i$ is drawn i.i.d. from an unknown distribution $L\in\{L_1,L_2\}$. The job seeker holds a posterior belief $\alpha$ that the distribution is $L_1$. As in the baseline model, $L_1$ is the inferior distribution.

The recruiting process has two stages. First, an initial screening stage determines whether the match value clears the basic job requirement of $y$. If $X_i\le y$, the application is rejected at the screening stage, and the job seeker learns only the censored event $\{X_i\le y\}$. She does not observe the exact match value. If $X_i>y$, the firm proceeds to a more informative stage, such as interviews or offer negotiation, and the job seeker observes the realized match value $X_i=x$. Thus, the process of job search is subject to lower-censored observations.\\
\EL{The above suggests that the info is available only above a fixed exogenous level. I provide an alternative: }
}

Our model of Bayesian search under lower-censoring readily applies to job search environments, where the recruiting process naturally generates an endogenous lower-censoring structure tied to the job seeker's evolving reservation utility. Consider a worker whose current best option, either her current employment contract or the highest firm offer received to date, has value $y \in \overline{K}$. When evaluating an applicant, a recruiting firm first requests the worker's current baseline requirement $y$ (e.g., her current compensation or minimum salary expectation) to determine whether it can offer a competitive match.
The application then proceeds in two stages:
\paragraph{Screening against the dynamic baseline ($X_i \le y$):} If the preliminary evaluation reveals that the match value fails to clear the applicant's current fallback threshold ($X_i \le y$), the firm terminates the process without investing in detailed evaluation or wage design. The applicant receives only a standard rejection, learning the censored event $\{X_i \le y\}$. She does not observe the exact match realization $X_i$ because the firm has no economic incentive to precisely assess or communicate an offer that it knows the worker will reject.
\paragraph{Detailed evaluation and negotiation ($X_i > y$):} If the match value exceeds the worker's fallback option ($X_i > y$), the firm advances to intensive interviews and formal offer negotiations. In this stage, detailed information regarding match quality, performance bonuses, and job fit is uncovered, allowing the worker to observe the exact realization $X_i = x > y$.

Because the worker's fallback value $y$ updates dynamically over time whenever a superior offer is discovered ($y' = \max\{y, X_i\}$), the threshold determining which outcomes are censored and which are fully revealed moves endogenously with the search history. Thus, the job seeker obtains precise match information only when a firm's prospective offer strictly exceeds her prevailing fallback requirement.

Suppose the monotonicity conditions developed above hold. If the job seeker's current state is $(y,\alpha)$, then her optimal search strategy is characterized by the stopping cutoff $y^*(\alpha)$. She continues searching if $y<y^*(\alpha)$ and stops if $y\ge y^*(\alpha)$. Equivalently, for a fixed fallback value $y$, she stops whenever her posterior belief in the inferior distribution exceeds the belief cutoff $\underline\alpha(y)$. Consecutive rejections raise the posterior odds of $L_1$ by the factor
\[
    R(y):=\frac{F_1(y)}{F_2(y)}>1 \text{ for } y<k.
\]
Hence a sequence of rejections eventually makes the job seeker sufficiently pessimistic about the outside market that she stops searching and remains in her current job.

\begin{example}[Job search]\label{ex:job-search}
Let $K=\{1,2,3,4\}$, $c=1$, $y=2$, and $\alpha=0.05$. Suppose
\[
    L_1=(0.4,0.4,0.1,0.1),
    \qquad
    L_2=(0.1,0.1,0.4,0.4).
\]
Then $\hat{x}=2$, and one can verify that the state-free sufficient condition \eqref{eq: SC_lower} is satisfied. The search problem is therefore monotone under lower-censoring.

We have $U_1(2)=0.3$ and $U_2(2)=1.2$.  Therefore, $U(2,0.05)=0.05\cdot 0.3+0.95\cdot 1.2=1.155>c$, and $U(3,0.05)=0.05\cdot 0.1+0.95\cdot 0.4=0.385<c$. Observing that the stopping cutoff is $y^*(0.05)=3$, which is greater than the fallback value $y=2$, the job seeker will continue searching.  For a given fallback value $y=2$, by \eqref{eq:belief-cutoff-explicit}, the belief cutoff for optimal stopping is
\[
    \underline\alpha(2)
    =\frac{U_2(2)-c}{U_2(2)-U_1(2)}
    =\frac{1.2-1}{1.2-0.3}
    =\frac{2}{9}.
\]
The corresponding odds cutoff is $\bar O(2)=\frac{\underline\alpha(2)}{1-\underline\alpha(2)}=\frac{2}{7}$.  The initial odds ratio is $O(0.05)=\frac{0.05}{0.95}=\frac{1}{19}$,  and a rejection at fallback value $y=2$ multiplies the odds ratio by 
\[
    R(2)=\frac{F_1(2)}{F_2(2)}=\frac{0.8}{0.2}=4.
\]
After one rejection, the odds ratio is $\frac{4}{19}<\frac{2}{7}$, so the job seeker continues searching. After two consecutive rejections, the odds ratio is $\frac{16}{19}>\frac{2}{7}$, so the posterior belief crosses the stopping cutoff. Therefore, if the job seeker receives two consecutive rejections while her fallback value remains $y=2$, she stops searching and stays in her current job.
\end{example}

%%%%%%%%%%%%%%%%%%%%%%%%%%%%%%%%%%

\section{Upper Censorship: Consumer Price Search}\label{sec: consumer search}
While the preceding sections focused on lower censoring, we now examine its theoretical dual: upper censoring, an information structure that arises naturally in consumer price search

A consumer searches sequentially for a lower price among a large number of \emph{ex ante} homogeneous retailers. 
A retailer's price $X\in K$ is an i.i.d. draw from an unknown distribution $L\in\{L_1, L_2\}$, and each price quote costs $c>0$.  The search is with free recall: after any search history, the consumer can stop and purchase at the lowest price observed so far.  Here we consider the information structure of \emph{upper-censoring}. To be specific, let $x$ be the realization of $X$ and $p$ be the lowest price discovered so far.  Under upper-censoring, the signal observed is
\[
S^u(x,p)=
\begin{cases}
x, & \text{if } x<p,\\
\{X\ge p\}, & \text{if } x\ge p,
\end{cases}
\]
where the superscript $u$ means upper-censoring.  After a search history $h_t^u=(S_1^u,\ldots,S_t^u)$, the DM's posterior belief is $\alpha_t:=\Pr(L=L_1\mid h_t^u)$.  For any given state $(p,\alpha_t)$ and a new price quote $X$, the updated posterior belief is given by\footnote{In the hope of not causing confusion, we still use the notation $\a_{t}$, rather than $\alpha^u_t$, in the case of upper-censoring.}
\begin{equation}\label{eq:alpha-update-upper}
\alpha_{t+1}=
\begin{cases}
\displaystyle
\frac{\alpha_t \pi_{1,x}}{\pi_{\a_t}(x)},
& \text{if } X=x<p \text{ is observed}, \\[1.2em]
\displaystyle
\frac{\alpha_t\bar{F}_1(p-1)}{\bar{F}_{\alpha_t}(p-1)},
& \text{if }\{X\ge p\}\text{ is observed}.
\end{cases}
\end{equation}
Define the consumer's fallback price $P_t$ as the lowest price discovered after a search history $h^u_t$.  Under upper-censoring, given $P_0=k+1$, the fallback price $P_t$ then evolves according to
\begin{equation}\label{eq: P_t+1}
P_{t+1}=\min\{P_t,X_{t+1}\}.    
\end{equation}
Let $K^u:=K\cup\{k+1\}$ denote the set of possible fallback prices.

\ignore{We can formulate the MDP problem in a similar way.  Given any state $(p,\alpha)$ and a next price quote $X$, the signal observed is $S^u$.  The updated belief is $\alpha'=B(y,\alpha,S^u)$ as given by \eqref{eq:alpha-update-upper}, and the updated fallback price $p'=\min\{p, X\}$ is given by \eqref{eq: P_t+1}.  Let  $J^u(p,\alpha)$ denote the minimum expected price, and the Bellman equation can be written as
\begin{equation}\label{eq:Bellman-upper}
   J^u(p,\alpha)
    =
    \max\left\{
        p,
        c+
        \E_{\boldsymbol\alpha}\left[
            J^u\bigl(\min\{p,X\},B(p,\alpha,S^u)\bigr)
        \right]
    \right\}.
\end{equation}
}
For consumer price search problems, we define the one-step marginal value of search as the expected reduction in the fallback price from one additional price quote, that is, for $p\in K^u$,
\begin{equation}\label{eq:U-upper}
    U^u(p,\alpha)
    :=\E_{\alpha}[(p-X)_+]
    =\sum_{z=1}^{p-1}\pi_{\alpha}(z)(p-z)
    =\alpha U^u_1(p)+(1-\alpha)U^u_2(p),
\end{equation}
where $U^u_i(p):=\E_{L_i}[(p-X)_+]$.  Similarly, define the difference between $U^u_2(p)$ and $U^u_1(p)$ by
\begin{equation}\label{eq:Delta-upper}
    \Delta^u(p):=U^u_2(p)-U^u_1(p)
    =\sum_{z=1}^{p-1}(\pi_{2,z}-\pi_{1,z})(p-z).
\end{equation}
The following Lemma, analogous to  Lemmata~\ref{lem: Delta(y)} and~\ref{lem: U(y,a)}, lists some properties of $U^u(p,\alpha)$ and $\Delta^u(p)$.

\begin{lemma}\label{lem: Delta(p) U(p,a)_upper}
(1) Under Assumption \ref{ass: single-crossing}, $\bar{F}_{1}(p)\leq\bar{F}_{2}\left(
p\right)$ and $\Delta^u \left( p\right) \leq 0$, and $p\mapsto \Delta^u(p)$ is weakly decreasing.  (2) $U^u\left( p,\alpha \right) $ is increasing and convex in $p$. Moreover, $U^u\left( p,\alpha \right) $ is increasing in $\alpha $, and has increasing differences in $\left( p,\alpha \right) $.
\end{lemma}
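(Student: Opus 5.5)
The plan is a direct computation from the definitions, mirroring Lemmata~\ref{lem: Delta(y)} and~\ref{lem: U(y,a)}.

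\textbf{Part (1).} The inequality $\bar F_1(p)\le\bar F_2(p)$ is immediate. By \eqref{eq: st order relations}, Assumption~\ref{ass: single-crossing} implies $L_1\prec_{st}L_2$, so $F_1(p)\ge F_2(p)$.

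For $\Delta^u$, I would first rewrite $U^u_i$ through distribution functions, exchanging the order of summation as in \eqref{eq: U(y,a) survival}:
\[
U^u_i(p)=\sum_{z=1}^{p-1}\pi_{i,z}(p-z)=\sum_{m=1}^{p-1}F_i(m),
\]
since $p-z$ counts the integers $m$ with $z\le m\le p-1$. Hence
\[
\Delta^u(p)=\sum_{m=1}^{p-1}\bigl(F_2(m)-F_1(m)\bigr).
\]
Each summand is $\le 0$, so $\Delta^u(p)\le 0$. Moreover $\Delta^u(p+1)-\Delta^u(p)=F_2(p)-F_1(p)\le 0$, so $p\mapsto\Delta^u(p)$ is weakly decreasing. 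The empty sum at $p=1$ gives $\Delta^u(1)=0$. At $p=k+1$ the formula still holds, and $F_1(k)=F_2(k)=1$ contributes zero. This is the only boundary point to check.

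\textbf{Part (2).} From the same representation, $U^u(p,\alpha)=\sum_{m=1}^{p-1}F_\alpha(m)$. Its first difference is $U^u(p+1,\alpha)-U^u(p,\alpha)=F_\alpha(p)$, which is nonnegative, so $U^u$ is increasing in $p$. In fact it is strictly increasing for $p\ge1$, because $F_\alpha(p)\ge \pi_\alpha(1)>0$. Since $F_\alpha$ is nondecreasing, the increments are nondecreasing, which is discrete convexity in $p$.

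In $\alpha$, the function $U^u(p,\alpha)=U^u_2(p)-\alpha\Delta^u(p)$ is affine. Because $\Delta^u(p)\le 0$, it is increasing in $\alpha$; this step uses Assumption~\ref{ass: single-crossing}, as in Lemma~\ref{lem: U(y,a)}.

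For increasing differences, take $p'>p$ and $\alpha'>\alpha$. Then
\[
\bigl[U^u(p',\alpha')-U^u(p,\alpha')\bigr]-\bigl[U^u(p',\alpha)-U^u(p,\alpha)\bigr]=-(\alpha'-\alpha)\bigl(\Delta^u(p')-\Delta^u(p)\bigr)\ge 0,
\]
because $\Delta^u$ is weakly decreasing by part (1).

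\textbf{Main obstacle.} The argument is short, and the only care needed is the summation-exchange identity together with the boundary cases $p=1$ and $p=k+1$. I would verify both explicitly.
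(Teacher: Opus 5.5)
Your proposal is correct and follows essentially the paper's own route: the summation-by-parts identity $\Delta^u(p)=\sum_{m=1}^{p-1}\bigl(F_2(m)-F_1(m)\bigr)$, the first differences $F_2(p)-F_1(p)$ and $F_\alpha(p)$, and the affine-in-$\alpha$ computation for monotonicity in $\alpha$ and increasing differences. The only cosmetic difference is that you derive $U^u_i(p)=\sum_{m=1}^{p-1}F_i(m)$ once and read everything off it. The paper instead computes the differences of $U^u$ directly, and it records strict convexity via the second difference $\pi_\alpha(p+1)>0$ rather than just nondecreasing increments.
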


Since lower prices are preferred by the consumer, $L_1\prec_{sc}L_2$ implies that $L_1$ is more favorable than $L_2$. Thus, in contrast to the previous payoff-maximizing model, a higher posterior belief $\alpha$ corresponds to a more optimistic posterior predictive distribution from the consumer's perspective.

The price-search problem under upper-censoring is monotone if, for every state $(p,\alpha)$ and every possible next price quote $X$, the induced next state $(p',\alpha')$ satisfies $U^u(p,\alpha)\ge U^u(p',\alpha')$, where $p'$ is derived by \eqref{eq: P_t+1} and $\alpha'$ is the updated belief by \eqref{eq:alpha-update-upper}.  Similarly, we define
\[
\hat{x}^u:=\min\{x\in K: \pi_{1,x}-\pi_{2,x}\le 0\}.
\]
Under Assumption \ref{ass: single-crossing}, any fully revealed value in the region of $x<\hat{x}^u$ will shift the posterior belief toward $L_1$, and therefore increase the one-step marginal search value.  The following proposition gives the monotonicity condition for a price search problem under upper-censoring.

\begin{proposition}\label{prop: mono-condition-upper}
Suppose Assumption \ref{ass: single-crossing} holds. The consumer price-search problem under upper-censoring and Bayesian updating, with any $\alpha_0$, is monotone if and only if
\begin{equation}\label{eq: IFF_upper}
    \sum_{z=x}^{p-1}F_{\alpha}(z)
    \ge
    (\alpha-\alpha')\Delta^u(x)
\end{equation}
for every $p\in K^u$ and $\alpha\in[0,1]$ and every $x\in K$ such that $x<\min\{p,\hat{x}^u\}$.  Moreover, a sufficient condition for \eqref{eq: IFF_upper}, depending only on the primitive distributions, is
\begin{equation}\label{eq: SC_upper}
    F_2(x)
    \ge
    \frac{\sqrt{\pi_{1,x}}-\sqrt{\pi_{2,x}}}{\sqrt{\pi_{1,x}}+\sqrt{\pi_{2,x}}}\,\Delta^u(x)
    \qquad\text{for all }x<\hat{x}^u.
\end{equation}
\end{proposition}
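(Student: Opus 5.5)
We would mirror the argument behind Theorem~\ref{thm: monotone conditions_lower}, now for upper censoring, using the decomposition
\[
U^u(p',\alpha')-U^u(p,\alpha)=\underbrace{U^u(p',\alpha')-U^u(p',\alpha)}_{A}+\underbrace{U^u(p',\alpha)-U^u(p,\alpha)}_{B}.
\]
Affinity of $U^u$ in $\alpha$ gives $A=(\alpha'-\alpha)\bigl(U^u_1(p')-U^u_2(p')\bigr)=(\alpha-\alpha')\Delta^u(p')$. For $B$, the survival-function identity \eqref{eq: U(y,a) survival} has the mirror form $U^u(p,\alpha)=\sum_{z=1}^{p-1}F_\alpha(z)$ (exchange the order of summation in \eqref{eq:U-upper}). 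Hence $B=-\sum_{z=p'}^{p-1}F_\alpha(z)\le 0$.

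There are three signal cases. (a) A censored event $\{X\ge p\}$ gives $p'=p$ and $B=0$. By Lemma~\ref{lem: Delta(p) U(p,a)_upper}(1), $\bar F_1(p-1)\le\bar F_2(p-1)$, so \eqref{eq:alpha-update-upper} yields $\alpha'\le\alpha$. Together with $\Delta^u\le0$, this gives $A\le0$. (b) A revealed $x<p$ with $x\ge\hat x^u$ has $\pi_{1,x}\le\pi_{2,x}$ by single crossing, so again $\alpha'\le\alpha$ and $A\le 0$, while $B\le0$. (c) A revealed $x<\min\{p,\hat x^u\}$ has $\pi_{1,x}>\pi_{2,x}$, so $\alpha'>\alpha$ and $A\ge0$. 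Monotonicity then holds exactly when $-B\ge A$, which is \eqref{eq: IFF_upper}.

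For necessity, we take $\alpha_0=\alpha$ and observe that every state $(p,\alpha)$ with $p\in K^u$ is reachable. We would argue this as in the lower-censoring case, using ``any $\alpha_0$'' together with the fact that a first draw $x=p$ from $P_0=k+1$ realizes $p$. Because the prior can be any value, the belief attained when reaching $p$ can be any value as well. A violation of \eqref{eq: IFF_upper} at some state and some $x$ then produces a feasible increase in $U^u$.

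For the sufficient condition \eqref{eq: SC_upper}, we bound each side of \eqref{eq: IFF_upper} uniformly.
\begin{itemize}
\item \emph{Left side.} Since $x\le p-1$ and $F_\alpha\ge F_2$ by FOSD, the left side is at least $F_\alpha(x)\ge F_2(x)$.
\item \emph{Right side.} Write $\alpha'-\alpha=\alpha(1-\alpha)(\pi_{1,x}-\pi_{2,x})/\pi_\alpha(x)$, so the right side equals $|\Delta^u(x)|\,\alpha(1-\alpha)(\pi_{1,x}-\pi_{2,x})/\bigl(\alpha\pi_{1,x}+(1-\alpha)\pi_{2,x}\bigr)$.
\item \emph{Maximizing over $\alpha$.} We maximize $g(\alpha)=\alpha(1-\alpha)/(\alpha a+(1-\alpha)b)$ with $a=\pi_{1,x}$ and $b=\pi_{2,x}$. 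The maximum is $1/(\sqrt a+\sqrt b)^2$, attained at $\alpha=\sqrt b/(\sqrt a+\sqrt b)$. Then $(a-b)/(\sqrt a+\sqrt b)^2=(\sqrt a-\sqrt b)/(\sqrt a+\sqrt b)$, which matches the coefficient in \eqref{eq: SC_upper}.
\end{itemize}

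The sign of $\Delta^u$ needs care. With $\Delta^u\le0$, the right side of \eqref{eq: IFF_upper} as literally written is nonpositive in case (c). The economically meaningful learning term is $(\alpha'-\alpha)|\Delta^u(x)|$, which is consistent with Lemma~\ref{lem: Delta(p) U(p,a)_upper}(2): $U^u$ is increasing in $\alpha$ because $L_1$ is cheaper. I would state the argument with $A=(\alpha'-\alpha)(U_1^u-U_2^u)(x)$ and reconcile it with the paper's sign convention.

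The step I expect to be hardest is the necessity direction, specifically verifying that every pair $(p,\alpha)$ is attainable along a feasible history, including the boundary $p=k+1$. The calculus step is routine.
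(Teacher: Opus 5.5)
Your argument matches the paper's proof. It uses the same split into $A=(\alpha-\alpha')\Delta^u(p')$ and $B=-\sum_{z=p'}^{p-1}F_\alpha(z)$, the same two automatic cases, and the same critical region $x<\min\{p,\hat{x}^u\}$. The uniform bounds are also the same: $\sum_{z=x}^{p-1}F_\alpha(z)\ge F_\alpha(x)\ge F_2(x)$, and the maximization of $\alpha(1-\alpha)/\pi_\alpha(x)$ at $\alpha^*=(1+\sqrt{r_x})^{-1}$ with $r_x=\pi_{1,x}/\pi_{2,x}$. Your necessity step is fine and easier than you fear. The map $\alpha_0\mapsto\alpha_0\pi_{1,p}/\pi_{\alpha_0}(p)$ sends $[0,1]$ onto itself, so one revealed draw $x=p$ from $P_0=k+1$ reaches any $(p,\alpha)$ with $p\le k$, and $p=k+1$ is the initial state. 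The paper simply takes every state as given.

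What you need to fix is your sign diagnosis, which you have placed on the wrong inequality. Condition \eqref{eq: IFF_upper} is correct as written. In case (c), $\alpha-\alpha'<0$ and $\Delta^u(x)\le 0$, so its right side equals $(\alpha'-\alpha)\lvert\Delta^u(x)\rvert\ge 0$, which is exactly your $A$. Your remark that it is nonpositive contradicts your own case (c).

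The real problem is in \eqref{eq: SC_upper}. For $x<\hat{x}^u$ the printed coefficient $(\sqrt{\pi_{1,x}}-\sqrt{\pi_{2,x}})/(\sqrt{\pi_{1,x}}+\sqrt{\pi_{2,x}})$ is positive and multiplies $\Delta^u(x)\le 0$. The printed condition therefore holds trivially and cannot imply \eqref{eq: IFF_upper}. For example, $L_1=(0.18,0.18,0.64)$ and $L_2=(0.01,0.01,0.98)$ satisfy single crossing with $\hat{x}^u=3$ and satisfy the printed \eqref{eq: SC_upper}. Yet from $(p,\alpha)=(3,0.1)$, the draw $x=2$ yields $\alpha'=2/3$ and raises $U^u$ from $0.081$ to about $0.123$.

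What your maximization actually delivers is $F_2(x)\ge\frac{\sqrt{\pi_{1,x}}-\sqrt{\pi_{2,x}}}{\sqrt{\pi_{1,x}}+\sqrt{\pi_{2,x}}}\lvert\Delta^u(x)\rvert$. Equivalently, the coefficient in front of $\Delta^u(x)$ is $(\sqrt{\pi_{2,x}}-\sqrt{\pi_{1,x}})/(\sqrt{\pi_{1,x}}+\sqrt{\pi_{2,x}})$. This equals $\frac{1-\sqrt{r_x}}{1+\sqrt{r_x}}\Delta^u(x)$, which the paper's proof obtains before its final rewriting flips the sign. It is also the quantity the paper's numerical example evaluates ($\approx 0.0113$). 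So state and prove this corrected condition, rather than claiming your bound matches the printed one.
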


\begin{example}[Monotone under upper-censoring]
Consider the previous example of  
\begin{equation*}
L_1=(0.25, 0.25, 0.25, 0.25) \quad\text{ and }\quad L_2=(0.2, 0.1, 0.3, 0.4).
\end{equation*}
Observe that $L_{1}\prec _{sc}L_{2}$ and $\hat{x}^u=3$.  For $x=1$, $\Delta^u(1)=0$, and condition \eqref{eq: SC_upper} holds obviously.   For $x=2$, $F_2(2)=0.3$ and $\Delta^u(2)=\sum_{z=1}^{1}\left( \pi_{2,z}-\pi_{1,z}\right) \left( 2-z\right)=\left( \pi_{2,1}-\pi_{1,1}\right) \left( 2-1\right)=-0.05$. It is obvious that $F_2(2)$ is greater than
\begin{equation*}
    \frac{\sqrt{\pi_{2,2}}-\sqrt{\pi_{1,2}}}{\sqrt{\pi_{2,2}}+\sqrt{\pi_{1,2}}}\Delta^u \left( 2\right)=-\frac{\sqrt{0.1}-\sqrt{0.25}}{\sqrt{0.1}+\sqrt{0.25}}\times0.05\simeq 0.0113,
\end{equation*}
and  \eqref{eq: SC_upper} also holds. Therefore, the search problem under upper-censoring is monotone.
\end{example}

\paragraph{The stopping rule.}
In a monotone consumer price-search problem, the optimal stopping rule is myopic. For $p\in K^u$, define the price cutoff for optimal stopping by
$
    p^*(\alpha):=\max\{z\in K^u:U^u(z,\alpha)\le c\}.
$
Then it is optimal to stop if and only if $p\le p^*(\alpha)$. That is, once the current price is sufficiently low, the expected reduction from obtaining one more quote is no greater than the search cost. Since $U^u(p,\alpha)$ is increasing in $\alpha$, the cutoff $p^*(\alpha)$ is weakly decreasing in $\alpha$. More optimistic beliefs that the underlying price distribution is $L_1$ increase the expected benefit of another search, making the consumer willing to continue searching at lower prices. Equivalently, the stopping threshold decreases with $\alpha$.

%%%%%%%%%%%%%%%%%%%%%%%%%%%%%%%%%%
\section{Extensions and Further Discussions}\label{sec: further discussions}
\subsection{Monotonicity under Full Revelation}\label{subsec: full revelation}
We return to the baseline model of Section \ref{sec: model} but now consider the information structure of full revelation.  In this case, for any sample outcome $X=x$, the observed signal is $S^f(x)=x$.  Given a state $(y,\alpha)$ and a sample outcome $X=x$, the updated posterior under full revelation is 
\begin{equation}\label{eq: a_t+1 full}
    \alpha'=\dfrac{\alpha \pi_{1,x}}{\pi_{\alpha}(x)}
\end{equation}
We are still interested in the monotone condition under full revelation, that is, when condition \eqref{def: monotone} holds. Naturally, the monotone conditions under full revelation should be more restrictive than those under censoring.  The formal result is given as follows.

\begin{proposition}\label{prop: monotone full->censoring}
If the search problem is monotone under full revelation, then it is monotone under lower-censoring.
\end{proposition}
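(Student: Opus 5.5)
The plan is to compare the one-step transitions under the two information structures state by state. Fix a state $(y,\alpha)$ and split on whether the next draw is revealed or censored under lower-censoring. For a revealed outcome $X=x>y$, the signal is the same under both structures. The next state is therefore $(x,\alpha'_x)$ with $\alpha'_x=\alpha\pi_{1,x}/\pi_\alpha(x)$ in both cases, by \eqref{eq: a_t+1 lower} and \eqref{eq: a_t+1 full}. Monotonicity under full revelation then gives $U(x,\alpha'_x)\le U(y,\alpha)$ directly.

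The only new transition is the censored event $\{X\le y\}$. It is feasible only when $y\ge 1$, since $F_\alpha(0)=0$. Here the next state is $(y,\alpha'_y)$ with $\alpha'_y=\alpha F_1(y)/F_\alpha(y)$. I will use Bayes plausibility to write $\alpha'_y$ as a conditional expectation of the full-revelation posteriors:
\[
\alpha'_y=\sum_{x=1}^{y}\frac{\pi_\alpha(x)}{F_\alpha(y)}\cdot\frac{\alpha\pi_{1,x}}{\pi_\alpha(x)}=\sum_{x=1}^{y} w_x\,\alpha'_x,\qquad w_x:=\frac{\pi_\alpha(x)}{F_\alpha(y)}.
\]
The weights $w_x$ are nonnegative and sum to one. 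This identity follows from $\sum_{x\le y}\alpha\pi_{1,x}=\alpha F_1(y)$.

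Under full revelation, each outcome $x\le y$ is a feasible signal from $(y,\alpha)$ and leads to the state $(\max\{y,x\},\alpha'_x)=(y,\alpha'_x)$. Monotonicity under full revelation therefore gives $U(y,\alpha'_x)\le U(y,\alpha)$ for every $x\le y$. Since $U(y,\cdot)=\alpha U_1(y)+(1-\alpha)U_2(y)$ is affine in the belief, we get
\[
U(y,\alpha'_y)=\sum_{x=1}^{y} w_x\,U(y,\alpha'_x)\le U(y,\alpha).
\]
This closes the censored case, so condition \eqref{def: monotone} holds for every transition under lower-censoring. Notably, no single-crossing assumption is needed.

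The main point requiring care is that the two structures generate different reachable state sets from a given prior $\alpha_0$. The cleanest route is to read Definition~\ref{def: Monotone Search Problem}, together with the quantifier ``with any $\alpha_0$'' used in Theorem~\ref{thm: monotone conditions_lower}, as requiring the one-step inequality at every state $(y,\alpha)\in\bar K\times[0,1]$. Under that reading, the argument above applies at every state and no reachability comparison is needed. The degenerate beliefs $\alpha\in\{0,1\}$ are trivial, because the posterior does not move and only the nonpositive fallback-value effect remains.
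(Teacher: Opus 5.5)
Your proof is correct and follows essentially the same route as the paper. Revealed outcomes $x>y$ induce identical transitions under both structures, and for the censored event you write $\alpha'_y$ as the conditional average of the full-revelation posteriors over $\{X\le y\}$ and use that $U(y,\cdot)$ is affine in the belief. Your explicit reading of monotonicity as a one-step inequality at every state $(y,\alpha)$, and your note that the censored event is infeasible at $y=0$, are small clarifications the paper leaves implicit.
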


The result of Proposition \ref{prop: monotone full->censoring} applies to any censoring or coarsening rule that partitions full-revelation outcomes into observable signal cells. The insight is that the posterior of a coarse signal is a conditional average of the posteriors of the fully revealed outcomes in that cell.  Because $U(y,\alpha)$ is affine in the posterior belief $\alpha$, monotonicity under all fully revealed outcomes implies monotonicity under any form of garbling of those outcomes. Moreover, the result does not rely on any assumption on the stochastic orders between $L_1$ and $L_2$.

Moreover, full revelation weakly increases the DM's search payoff relative to any censored information structure.  This is because, under full revelation, the DM can always ignore the extra information and replicate the strategy used under censoring. Therefore, the value under full revelation is weakly greater than the value under censoring, holding other things equal.

However, full revelation also makes the monotonicity conditions harder to satisfy, as it generates stronger posterior movements, which is obvious by comparing \eqref{eq: a_t+1 full} with \eqref{eq: a_t+1 lower}.  Theorem \ref{thm: IFF full} gives a sharp characterization of monotonicity under full revelation.

\begin{theorem}\label{thm: IFF full}
The search problem with full revelation is monotone if and only if 
\begin{equation}\label{eq: IFF_full}
    L_1\prec_{sc}L_2\text{ and }\hat{x}\ge k-1. 
\end{equation}    
\end{theorem}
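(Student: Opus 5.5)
The plan is to compare $U$ before and after a single fully revealed draw, using the decomposition \eqref{eq: U(y',a')-U(y,a)} into the learning effect $A=(\alpha-\alpha')\Delta(y')$ and the fallback-value effect $B=-\sum_{z=y}^{y'-1}\bar F_\alpha(z)\le 0$. Under full revelation the posterior is $\alpha'=\alpha\pi_{1,x}/\pi_\alpha(x)$ for every $x$, so for $\alpha\in(0,1)$ we have $\alpha'-\alpha\propto \pi_{1,x}-\pi_{2,x}$ regardless of whether $x\le y$ or $x>y$. The key difference from lower-censoring is that draws $x\le y$ leave $y'=y$ but still move beliefs. Hence $B=0$, and the whole change equals $(\alpha-\alpha')\Delta(y)$. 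Both directions of the theorem will come from this observation.

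\emph{Necessity.} Assume the problem is monotone under full revelation. Fix an interior belief $\alpha\in(0,1)$ and the state $y=k-1$. Definition~\ref{def: Monotone Search Problem} quantifies over all states; if one insists on reachability, this state is reached from $y=0$ by the draw $X=k-1$, which has positive probability. For any draw $x\le k-1$ we have $y'=k-1$, so the change in $U$ is $(\alpha-\alpha')\Delta(k-1)$. Moreover $\Delta(k-1)=\pi_{2,k}-\pi_{1,k}>0$ by the maintained assumption $\pi_{1,k}<\pi_{2,k}$. Monotonicity then forces $\alpha'\ge\alpha$, i.e.\ $\pi_{1,x}\ge\pi_{2,x}$, for every $x\in\{1,\dots,k-1\}$. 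Together with $\pi_{1,k}<\pi_{2,k}$, this says the sequence $(\pi_{1,x}-\pi_{2,x})$ is nonnegative up to $k-1$ and negative only at $k$. This is exactly $L_1\prec_{sc}L_2$ with $\hat x\ge k-1$, i.e.\ \eqref{eq: IFF_full}.

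\emph{Sufficiency.} Assume \eqref{eq: IFF_full}. By Lemma~\ref{lem: Delta(y)}, $\Delta\ge0$ and $\Delta(k)=0$. Take any state $(y,\alpha)$ and any draw $x$. If $y=k$, then $U\equiv0$ before and after, so there is nothing to check. If $y\le k-1$, split into two cases.
\begin{enumerate}[(i)]
\item If $x\le y$, then $x\le k-1\le\hat x$, so $\pi_{1,x}\ge\pi_{2,x}$ and $\alpha'\ge\alpha$. The change $(\alpha-\alpha')\Delta(y)$ is therefore $\le0$.
\item If $x>y$ and $x\le\hat x$, then again $\alpha'\ge\alpha$, so $A\le0$ and $B\le0$. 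If $x>\hat x$, then $x=k$, so $A=(\alpha-\alpha')\Delta(k)=0$ while $B\le0$.
\end{enumerate}
The boundary beliefs $\alpha\in\{0,1\}$ are trivial, since then $\alpha'=\alpha$ and only the nonpositive effect $B$ remains.

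I expect no serious obstacle. The main point needing care is in necessity: one must pick a state where $\Delta(y)>0$ strictly, so that the sign of $\alpha'-\alpha$ is forced. The choice $y=k-1$ does this via $\pi_{1,k}<\pi_{2,k}$, and it simultaneously covers every $x\le k-1$. The other point is to justify that this state and an interior belief are admissible under Definition~\ref{def: Monotone Search Problem}.
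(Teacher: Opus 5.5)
Your proposal is correct and follows essentially the same route as the paper. For necessity, the paper also fixes $y=k-1$, uses $\Delta(k-1)=\pi_{2,k}-\pi_{1,k}>0$, and reads off $\pi_{1,x}\ge\pi_{2,x}$ for every draw $x\le k-1$; for sufficiency it uses the same split into $x\le y$ (learning effect only) and $x>y$, with $x>\hat x$ forcing $x=k$ and hence $\Delta(k)=0$.
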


 Theorem~\ref{thm: IFF full} serves as a critical theoretical benchmark that reinforces the central premise of this paper: unrestricted Bayesian learning generically destroys the tractability of optimal sequential search. Condition~\eqref{eq: IFF_full} requires that the two candidate distributions satisfy the single-crossing order with the crossing point occurring at the very top of the support ($\hat{x} \ge k-1$). This represents a highly restrictive, stark negative result: full information preserves pathwise monotonicity only when every non-maximal observation weakly shifts the posterior toward the inferior distribution $L_1$, leaving the absolute maximal realization $k$ as the sole outcome capable of inducing optimism. Whenever an intermediate draw below the fallback value conveys a positive learning signal ($\pi_{1,x} < \pi_{2,x}$ for some $x \le \hat{x} < k-1$), full revelation induces an optimistic posterior shift that can dominate the negative fallback-value effect, thereby destroying the simple reservation-cutoff policy. By demonstrating that full-information search fails to be monotone under standard distributional settings, this benchmark directly highlights why one-sided information censoring is not merely a specialized modeling assumption, but a structurally necessary mechanism that stabilizes posterior volatility and restores optimal cutoff rules in Bayesian search environments.
For example, consider
\[
L_1=(0.25, 0.25, 0.25, 0.25)
\quad\text{and}\quad
L_2=(0.2, 0.1, 0.3, 0.4).
\]
Here, $\hat{x}=2<k-1=3$, so condition \eqref{eq: IFF_full} is violated. Consequently, the search problem is not monotone under full revelation, even though it is monotone under both lower- and upper-censoring.

When $k=2$, condition \eqref{eq: IFF_full} is satisfied trivially, implying that the search problem under full revelation is always monotone. Moreover, \eqref{eq: IFF_full} implies that $p_j\ge q_j$ for every $j\le k-1$, so the posterior belief $\alpha_t$ is nondecreasing over time. Therefore, whenever the search problem under full revelation is monotone, the stopping threshold $y^*(\alpha_t)$ is nonincreasing over time.

%%%%%%%%%%%%%%%%%%%%%%%
\subsection{Expected monotonicity}\label{subsec: expected monotonicity}

To clarify the decision-theoretic implications of our results, it is useful to contrast pathwise monotonicity with the weaker concept of \emph{expected monotonicity} (frequently termed supermartingale monotonicity).  The definition is as follows:

\begin{definition}[Expected monotonicity]
The search problem \eqref{eq:Bellman-general} satisfies \emph{expected monotonicity} if, for every feasible state $(y,\boldsymbol{\alpha})$, one has
\begin{equation}\label{def: expected monotone}
U(y,\boldsymbol{\alpha})\ge\E_{\boldsymbol{\alpha}} \left[U(y',\boldsymbol{\alpha}')\right],
\end{equation}    
where $y'$ and $\boldsymbol{\alpha}'$ are given by \eqref{eq: updated state}.
\end{definition}

Expected monotonicity requires that the one-step marginal search value decreases on average.  Economically, this condition formalizes the natural notion of expected diminishing returns to search. On the one hand, the fallback value $y$ weakly increases over time; the bar for payoff improvements constantly rises, driving down the expected benefit of further sampling. On the other hand, by Bayes' plausibility, posterior beliefs cannot systematically drift in a favorable direction on average. Expected monotonicity guarantees that this negative fallback effect dominates the expected learning effect on average.

However, expected monotonicity alone is insufficient to guarantee the optimality of myopic, single-threshold cutoff rules.  Because expected monotonicity allows upward jumps in $U$ on some realizations, it does not prevent a situation where today's one-step gain is negative, i.e., $U(y,\alpha)\le c$, but a favorable future realization could push $U(y',\alpha')>c$, making it optimal to continue even though the immediate draw is not worthwhile.  A rational DM may tolerates a negative immediate expected gain to gamble on reaching a better information state.  Therefore, the optimal stopping rule is generally not myopic.  

Pathwise monotonicity, by contrast, eliminates this option value altogether by requiring
$U(y',\alpha') \le U(y,\alpha)$ along \emph{every} feasible realization path. This realization-by-realization bound ensures that, once the one-step marginal search value falls below the search cost $c$, no subsequent outcome can raise it above $c$. Thus, pathwise monotonicity rules out any continuation in which the gap $U(y',\alpha')-U(y,\alpha)$ fluctuates around zero, thereby ensuring that the simple reservation-cutoff policy remains robustly optimal.

In addition to its expected-value foundation, expected monotonicity offers an important operational advantage. Under pathwise monotonicity, the optimal decision rule must evaluate each realized state update $(y',\alpha')$ and adjust the stopping decision accordingly. When expected monotonicity holds, by contrast, the \textit{expected} marginal gain from continuing the search for an additional period is guaranteed to be non-increasing over time, without requiring this monotonicity to hold along every realized path. Consequently, if circumstances make real-time monitoring impractical or impossible, the fixed stopping time that maximizes the expected search surplus can be determined ex ante by a cutoff rule. Specifically, a non-attending DM, for example, an automated recruitment algorithm, can commit before the search begins to continue searching until the first period at which the expected net gain from one additional period becomes non-positive. Expected monotonicity ensures that this cutoff identifies the fixed stopping time that maximizes the expected search surplus among all predetermined stopping times. This eliminates the administrative and cognitive burden of continuous updating and makes it possible to delegate complex sequential search problems to automated decision mechanisms. Importantly, this predetermined stopping time need not be an optimal stopping time in the full, dynamic sense: it is optimal only within the class of fixed stopping times, and its operational appeal stems from the temporal monotonicity of expected marginal gains.

Expected monotonicity also has a direct economic rationale in settings where decisions are evaluated ex ante or at an aggregate level. First, it provides a natural formulation of diminishing marginal returns to search in expectation. When a DM or firm manages a portfolio of search processes, ex ante optimality depends on expected continuation gains rather than on realization-by-realization bounds that must hold along every path. Second, expected monotonicity captures the fundamental interaction between Bayesian updating and free recall. By Bayes plausibility, posterior beliefs form a martingale, so the expected posterior equals the prior. At the same time, free recall generates a deterministic increase in the fallback value $y$. Expected monotonicity thus reflects the structural dominance of the deterministic increase in the fallback value over the mean-preserving effect of Bayesian updating. This provides a natural and robust justification for search environments in which decisions are based on expected values rather than on pathwise guarantees.

The next theorem provides the necessary and sufficient condition, as well as a state-free sufficient condition, for expected monotonicity.  One important observation is that the expected monotone conditions are identical across the two information structures of lower-censoring and full revelation.  

\begin{theorem}[Expected monotonicity]\label{thm: sm monotone}
Suppose Assumption \ref{ass: single-crossing} holds.  Under either lower-censoring or full revelation, the search problem is expected monotone if and only if, for any state $(y,\alpha)$,
\begin{equation}\label{eq: IFF expected}
\sum_{m=y}^{k-1}\bar F_\alpha(m)^2
\ge
\alpha(1-\alpha)\left[(F_2(y)-F_1(y))\Delta(y)+\sum_{x=y+1}^k
\bigl(\pi_{2,x}-\pi_{1,x}\bigr)\Delta(x)\right].
\end{equation}
A sufficient condition for expected monotonicity that depends only on model primitives is
\begin{equation}\label{eq: SC expected lower}
\sum_{m=0}^{k-1}\bar F_1(m)^2\ge\sum_{x>\hat{x}}
\pi_{2,x}
\frac{\sqrt{\pi_{2,x}}-\sqrt{\pi_{1,x}}}{\sqrt{\pi_{2,x}+\sqrt{\pi_{1,x}}}}
\Delta(x).
\end{equation}

\end{theorem}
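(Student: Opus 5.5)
The plan is to compute the one-step expected change $\E_\alpha[U(y',\alpha')]-U(y,\alpha)$ in closed form and to split it, as in \eqref{eq: U(y',a')-U(y,a)}, into an expected fallback-value effect $\E_\alpha[U(y',\alpha)]-U(y,\alpha)$ and an expected learning effect $\E_\alpha[U(y',\alpha')-U(y',\alpha)]=\E_\alpha[(\alpha-\alpha')\Delta(y')]$. Expected monotonicity \eqref{def: expected monotone} is then exactly the requirement that the sum of the two effects is $\le 0$ at every state. The proof consists of evaluating each effect explicitly.

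\emph{Invariance across information structures.} First I will show that $\E_\alpha[U(y',\alpha')]$ is the same under lower-censoring and under full revelation. In both structures $y'=\max\{y,X\}$, and for every $x>y$ the posterior is the same $\alpha'_x$. On the event $\{X\le y\}$ we have $y'=y$ in both cases. Under full revelation, the $\pi_\alpha$-weighted average of $\alpha'_x$ over $x\le y$ equals $\alpha F_1(y)/F_\alpha(y)=\alpha'_y$; this is Bayes plausibility, or the tower property. Since $U(y,\cdot)$ is affine in $\alpha$ by the representation $U=\alpha U_1+(1-\alpha)U_2$, the two expectations coincide. It therefore suffices to work under lower-censoring.

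\emph{Learning term.} Using $\pi_\alpha(x)\alpha'_x=\alpha\pi_{1,x}$ and $F_\alpha(y)\alpha'_y=\alpha F_1(y)$, I obtain
$\E[\alpha'\Delta(y')]=\alpha\bigl[F_1(y)\Delta(y)+\sum_{x>y}\pi_{1,x}\Delta(x)\bigr]$ and
$\E[\alpha\Delta(y')]=\alpha\bigl[F_\alpha(y)\Delta(y)+\sum_{x>y}\pi_\alpha(x)\Delta(x)\bigr]$.
Subtracting, and using $F_\alpha-F_1=(1-\alpha)(F_2-F_1)$ and $\pi_\alpha-\pi_1=(1-\alpha)(\pi_2-\pi_1)$, gives exactly the right-hand side of \eqref{eq: IFF expected}.

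\emph{Fallback term.} From the survival representation \eqref{eq: U(y,a) survival}, $U(y',\alpha)=\sum_{m\ge y'}\bar F_\alpha(m)$. Hence
\[
\E_\alpha[U(y',\alpha)]=\sum_{m=y}^{k-1}\bar F_\alpha(m)\,\Prob(\max\{y,X\}\le m)=\sum_{m=y}^{k-1}\bar F_\alpha(m)F_\alpha(m),
\]
and therefore $U(y,\alpha)-\E_\alpha[U(y',\alpha)]=\sum_{m=y}^{k-1}\bar F_\alpha(m)^2$. Combining the two terms yields the necessary and sufficient condition \eqref{eq: IFF expected}, and it holds verbatim under both information structures. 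Assumption \ref{ass: single-crossing} is not needed for this equivalence; it enters only in the sufficient condition.

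\emph{Sufficient condition.} I would bound the two sides separately, uniformly in $\alpha$.
\begin{itemize}
\item On the right-hand side, the term $(F_2(y)-F_1(y))\Delta(y)$ is $\le 0$ by Lemma \ref{lem: Delta(y)}, and every term with $x\le\hat x$ is $\le 0$ by single crossing, so these can be dropped.
\item For $x>\max\{y,\hat x\}$ I would reuse the bound from the proof of \eqref{eq: SC_lower}: maximizing over $\alpha$ gives $\alpha(1-\alpha)(\pi_{2,x}-\pi_{1,x})/\pi_\alpha(x)\le(\sqrt{\pi_{2,x}}-\sqrt{\pi_{1,x}})/(\sqrt{\pi_{2,x}}+\sqrt{\pi_{1,x}})$. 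Together with $\pi_\alpha(x)\le\pi_{2,x}$, this bounds the right-hand side by $\sum_{x>\max\{y,\hat x\}}\pi_{2,x}\frac{\sqrt{\pi_{2,x}}-\sqrt{\pi_{1,x}}}{\sqrt{\pi_{2,x}}+\sqrt{\pi_{1,x}}}\Delta(x)$.
\item On the left-hand side, FOSD gives $\bar F_\alpha\ge\bar F_1$, so the left-hand side is at least $\sum_{m=y}^{k-1}\bar F_1(m)^2$.
\end{itemize}

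The main obstacle is uniformity in $y$. The stated condition \eqref{eq: SC expected lower} compares the full sum $\sum_{m\ge0}$ with the full sum $\sum_{x>\hat x}$, which settles $y=0$ directly. For $y>0$ both sums get truncated, and truncation of the left-hand side is not obviously dominated by truncation of the right-hand side.

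My first attempt would be a termwise pairing. Note that $\Delta(x)=\sum_{m\ge x}(F_1(m)-F_2(m))\le\sum_{m\ge x}\bar F_2(m)$, and the coefficient $\pi_{2,x}\cdot\frac{\sqrt{\pi_{2,x}}-\sqrt{\pi_{1,x}}}{\sqrt{\pi_{2,x}}+\sqrt{\pi_{1,x}}}$ is small. The aim is to show that the right-hand-side contribution of each $x$ is charged to left-hand-side terms with index $m\ge x-1\ge y$; that would make the comparison survive truncation at any $y$.

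If this pairing fails, I would instead either strengthen the sufficient condition to hold for every tail $\sum_{m\ge y}$, or argue that $y\mapsto$ (LHS $-$ RHS bound) attains its minimum at $y=0$. The latter would use the fact that, for $x>\hat x$, each decrement of the left-hand side, namely $\bar F_1(y)^2$, dominates the corresponding removed right-hand-side term.
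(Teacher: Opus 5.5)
Your derivation of the characterization \eqref{eq: IFF expected} is correct and follows the paper's argument: the same split into an expected fallback-value effect and an expected learning effect, and the same invariance argument from Bayes plausibility and affinity of $U$ in $\alpha$. Your $\Prob(\max\{y,X\}\le m)=F_\alpha(m)$ computation is the paper's interchange of summation in another form.

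The gap is in the sufficient-condition half. You leave it open, and none of your three routes closes it. Consider the decoupled bounds you use: the left side is at least $\sum_{m=y}^{k-1}\bar F_1(m)^2$, and the right side is at most $\sum_{x>\max\{y,\hat x\}}\pi_{2,x}\frac{\sqrt{\pi_{2,x}}-\sqrt{\pi_{1,x}}}{\sqrt{\pi_{2,x}}+\sqrt{\pi_{1,x}}}\Delta(x)$. The inequality between these two bounds can be false at $y>0$ even when \eqref{eq: SC expected lower} holds, so no pairing argument can rescue it.

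In Example \ref{ex:sm-not-pathwise} ($\hat x=1$), at $y=1$ your left bound is $0.02^2+0.01^2=0.0005$. Your right bound is $0.18\cdot 0.17\cdot\frac{\sqrt{0.18}-0.1}{\sqrt{0.18}+0.1}\approx 0.019$. Condition \eqref{eq: SC expected lower} holds there only because its left side contains $\bar F_1(0)^2=1$.

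Likewise, on $\{0,\ldots,\hat x\}$ your left bound strictly decreases in $y$ while your right bound stays constant. So the difference is not minimized at $y=0$. Strengthening the condition would prove a different theorem. The paper's proof makes exactly the leap you flagged: from $\bar F_\alpha\ge\bar F_1$ it passes directly to the untruncated sum, which is justified only at $y=0$.

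The missing idea is that you do not need to bound the positive part of the learning effect at all. The entire bracket on the right of \eqref{eq: IFF expected} is nonpositive, with no assumption needed. Put $D(m):=F_1(m)-F_2(m)$. Then $\pi_{2,x}-\pi_{1,x}=D(x-1)-D(x)$, $\Delta(x)-\Delta(x+1)=D(x)$, and $D(k)=\Delta(k)=0$. Summation by parts gives
\[
(F_2(y)-F_1(y))\Delta(y)+\sum_{x=y+1}^{k}(\pi_{2,x}-\pi_{1,x})\Delta(x)=-\sum_{m=y}^{k-1}D(m)^2\le 0 .
\]
In the example this equals $-(0.34^2+0.17^2)$ at $y=0$ and $y=1$, matching the paper's numbers. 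Since the left side of \eqref{eq: IFF expected} is nonnegative, \eqref{eq: IFF expected} holds at every state. Hence the problem is always expected monotone, and \eqref{eq: SC expected lower} is sufficient trivially.

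An even shorter route also works. The fallback value $y'$ is observed under both information structures, so by the tower property
\[
\E_\alpha[U(y',\alpha')]=\E[(X_2-\max\{y,X_1\})_+]\le\E[(X_2-y)_+]=U(y,\alpha),
\]
where $X_1,X_2$ are the next two draws.
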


Condition \eqref{eq: IFF expected} is the expected-value counterpart of the pathwise monotonicity condition in Theorem \ref{thm: monotone conditions_lower}.  The left hand side of the inequality, $\sum_{m=y}^{k-1}\bar F_\alpha(m)^2$, measures the expected fallback-value effect.  The right-hand side is the expected aggregated learning effect.  Specifically, the first term $\alpha(1-\alpha)(F_2(y)-F_1(y))\Delta(y)$ measures the expected learning effect of censored events, which is negative under assumption \ref{ass: single-crossing}; the second term $\alpha(1-\alpha)\sum_{x=y+1}^k
\bigl(\pi_{2,x}-\pi_{1,x}\bigr)\Delta(x)$ measures the expected learning effect of fully revealed outcomes.

The expected monotone conditions under lower-censoring and full revelation are identical. To see this, consider the decomposition in \eqref{eq: U(y',a')-U(y,a)}. First, both information structures induce the same fallback-value transition, $y'=\max\{y,X\}$. Hence, the expected fallback-value effect
$\mathbb E_\alpha[U(y',\alpha)-U(y,\alpha)]$, is identical under the two regimes. Second, as shown in the proof of Theorem~\ref{thm: sm monotone}, the expected learning effect,
$\mathbb E_\alpha[U(y',\alpha')-U(y',\alpha)]$, is also identical after taking expectations.

The key insight is Bayes plausibility. The one-step search value $U(y,\alpha)$ is affine in the belief $\alpha$. Although lower-censoring and full revelation generate different distributions of the posterior belief $\alpha'$, they have the same ex ante expected posterior belief by Bayes plausibility. Therefore, they induce the same conditional expected learning effect. Intuitively, under lower-censoring, the probability-weighted posterior drift associated with the pooled event $\{X\leq y\}$ is exactly equal to the sum of the probability-weighted posterior drifts associated with the corresponding fully revealed outcomes.

\begin{remark}[Pathwise vs.\ expected monotonicity]\label{rem:pathwise-sm-comparison}
The distinction between expected and pathwise monotonicity under lower-censoring is most transparent from the decomposition
\[
U(y',\alpha')-U(y,\alpha)
=
(\alpha-\alpha')\Delta(\max\{y,X\})
-
\ind_{\{X>y\}}\sum_{m=y}^{X-1}\bar F_\alpha(m).
\]
Pathwise monotonicity requires this realized change to be nonpositive for every feasible realization of $X$. This is equivalent to condition \eqref{eq: IFF_lower}: for every $x>\max\{y,\hat x\}$, it holds that  $(\alpha-\alpha'_x)\Delta(x) \le \sum_{m=y}^{x-1}\bar F_\alpha(m)$.

Expected monotonicity only requires the expectation to be nonpositive, i.e.,
\begin{equation*}
\E_{\alpha}\Big[
\ind_{\{X\le y\}}(\alpha-\alpha'_y)\Delta(y)
+
\ind_{\{X>y\}}( \alpha-\alpha'_X)\Delta(X)
-
\ind_{\{X>y\}}\sum_{m=y}^{X-1}\bar F_\alpha(m)\Big]
\le 0,
\end{equation*}
which is the exact condition \eqref{eq: IFF expected}.  Therefore, pathwise monotonicity under lower-censoring implies expected monotonicity by taking expectations, but the converse does not hold.
%The same distinction explains the difference between the primitive sufficient conditions. The pathwise condition \eqref{eq: SC_lower} controls each high realization $x>\hat x$ separately, whereas a supermartingale sufficient condition controls only the expected positive learning drift. 
\end{remark}

In summary, the relationship among the monotonicity conditions has two aspects. First, the expected monotone conditions under lower-censoring and full revelation are identical. Second, under lower-censoring, the expected monotonicity condition \eqref{eq: IFF expected} is weaker than the pathwise monotonicity condition \eqref{eq: IFF_lower}; the latter is, in turn, weaker than the pathwise monotonicity condition \eqref{eq: IFF_full} under full revelation.

In fact, we can generalize the above results based on two facts: $U(y,\alpha)$ is affine in belief $\alpha$, and the Bayes' plausibility constraint.  For pathwise monotonicity conditions, consider all information structures that fully reveal the better-off sample outcomes, i.e., all the sample outcomes satisfying $X>y$ are fully revealed and then result in the same fallback value transition $y'=\max\{y,X\}$, lower-censoring pools all the non-improving outcomes into a single signal of $\{X\le y\}$, and therefore, is the least informative information structure and then renders the weakest condition for pathwise monotonicity.

\begin{example}[expected monotone but not pathwise monotone]\label{ex:sm-not-pathwise}
Consider
\[
L_1=\left(0.98, 0.01, 0.01\right),
\qquad
L_2=\left(0.64, 0.18, 0.18\right).
\]
Observe that $L_1\prec_{sc}L_2$ and $\hat x=1$. Moreover, by \eqref{eq: Delta(y)}, we have 
\[
\Delta(1)
=
0.17\times(2-1)+0.17\times(3-1)
=
0.51,
\qquad
\Delta(2)=0.17,
\qquad
\Delta(3)=0.
\]
We first show that pathwise monotonicity under lower-censoring fails.  Fix a state $(y,\alpha)=(1,0.9)$ and suppose the next realized outcome is $x=2$. Then by \eqref{eq: a_t+1 lower}, the updated posterior is
\[
\alpha'_2
=
\frac{\alpha\pi_{1,2}}{\pi_\alpha(2)}
=
\frac{0.9\times 0.01}{0.027}
=
\frac{1}{3}<\alpha.
\]
Hence, the learning effect is $(\alpha-\alpha'_2)\Delta(2)=
\frac{289}{3000}$.
The fallback-value effect from $y=1$ to $x=2$ is
\[
\sum_{m=1}^{x-1}\bar F_\alpha(m)
=
\bar F_\alpha(1)
=
\pi_\alpha(2)+\pi_\alpha(3)
=
0.027+0.027
=
0.054<(\alpha-\alpha'_2)\Delta(2).
\]
Condition \eqref{eq: IFF_lower} fails. Therefore, the search problem is not pathwise monotone.

However, the search problem is expected monotone. For any state $(y,\alpha)$, we have 
\[
\mathbb E_\alpha\!\left[
(\alpha-\alpha')\Delta(\max\{y,X\})
\right]
=
\alpha(1-\alpha)
\left[
\bigl(F_2(y)-F_1(y)\bigr)\Delta(y)
+
\sum_{x>y}
(\pi_{2,x}-\pi_{1,x})\Delta(x)
\right].
\]
We verify condition \eqref{eq: IFF expected} for each $y\in\bar{K}$. If $y=0$, all outcomes are revealed, so
\[
\sum_{x>0}(\pi_{2,x}-\pi_{1,x})\Delta(x)
=
-0.34\times 0.51 + 0.17\times 0.17+ 0.17\times 0
%=-\frac{289}{2000}
<0.
\]
If $y=1$, the event $X\le 1$ is censored, and
\[
(F_2(1)-F_1(1))\Delta(1)
+
\sum_{x>1}(\pi_{2,x}-\pi_{1,x})\Delta(x)
=
-0.34\times 0.51 + 0.17\times 0.17 
% =-\frac{289}{2000}
<0.
\]
If $y=2$, then
\[
(F_2(2)-F_1(2))\Delta(2)
+
(\pi_{2,3}-\pi_{1,3})\Delta(3)
=
-0.17\times 0.17 + 0.17\times 0
%=-\frac{289}{10000}
<0.
\]
If $y=3$, then $\Delta(3)=0$, and the expected learning effect is zero.  
Therefore, for every $y\in\bar K$ and every $\alpha\in[0,1]$, the expected learning effect $\mathbb E_\alpha\!\left[(\alpha-\alpha')\Delta(\max\{y,X\})\right]\le 0$, and the expected monotone condition \eqref{eq: IFF expected} holds for every state $(y,\alpha)$. Hence, the search problem is expected monotone but not pathwise monotone under lower-censoring.
\end{example}

 \section{Final Remarks}\label{sec: conclusion}
\subsection{The related dynamic programming problem}\label{subsec: DP}

Let us reconsider the baseline model in Section \ref{sec: model}, now allowing for $\ell$ possible underlying distributions, so that
$
L\in\{L_1,\ldots,L_\ell\}.
$
We consider several relevant information structures, including lower-censoring and full revelation.  To analyze \eqref{eq:Bellman-general}, it is useful to define the Bellman operator $T$ on functions over the state space
$
\bar K\times\Delta(\{1,\ldots,\ell\})
$
by
\begin{equation}\label{eq:operator-general}
Tf(y,\boldsymbol{\alpha})
=
\max\left\{
    y,\,
    \E_{\boldsymbol{\alpha}}\left[
        f\bigl(y,B(\boldsymbol{\alpha},X)\bigr)
    \right]-c
\right\}.
\end{equation}
The operator $T$ maps functions on $\bar K\times\Delta(\{1,\ldots,\ell\})$ into themselves. Consequently, any solution $J^r$ of the Bellman equation \eqref{eq:Bellman-general} is a fixed point of $T$:
$
J^r=TJ^r.
$

Establishing the existence and, more importantly, the appropriate characterization of such a fixed point is nontrivial. The search problem is undiscounted, and the Bellman operator is monotone but generally not a contraction. Thus, standard contraction-based arguments do not apply directly. Moreover, the Bellman equation may admit multiple fixed points, not all of which correspond to optimal behavior.

To illustrate this point, consider the constant function
$
J^r(y,\boldsymbol{\alpha})=k
$
for every state $(y,\boldsymbol{\alpha})$, where $k$ is the maximal attainable payoff. This function can satisfy the Bellman equation even though the associated behavior need not be optimal: the induced strategy may continue searching whenever $y<k$, regardless of the posterior belief or of the likelihood of obtaining an improvement. Thus, the relevant question is not merely whether the Bellman operator has a fixed point, but which fixed point corresponds to the optimal value function.

This issue has been studied in the literature on adaptive sequential search. \cite{rosenfield1981optimal}, for example, formulate an adaptive sequential search problem through a functional Bellman equation. Their analysis focuses primarily on deriving structural conditions on the primitive distributions that guarantee a monotone reservation-price policy. In particular, their approach imposes conditions ensuring that posterior beliefs respond monotonically to observed realizations and that the effect of learning does not overwhelm the effect of changes in the fallback value.

A general treatment of this fixed-point problem is provided by \cite{Bertsekas2013}. In the context of undiscounted dynamic programming, Bertsekas characterizes the optimal value function, and thereby the optimal stopping rule, as the smallest fixed point of the Bellman operator.

The general approach of \cite{Bertsekas2013}, however, involves considerable technical machinery. This motivates our more elementary treatment. In the Online Appendix, for completeness, we provide a self-contained proof of the relevant result of \cite{Bertsekas2013}, tailored to the sequential search problems considered in this paper. Our proof adapts standard convergence arguments from undiscounted dynamic programming to the present setting. It also yields a transparent connection between the finite-horizon search problems and their optimal stopping rules, on the one hand, and the corresponding infinite-horizon problem and its optimal stopping rule, on the other.

The main drawback of this approach is its technical complexity.  This limitation motivates the broader objective of the paper

\subsection{Conclusion}
This paper studies Bayesian sequential search when observations are endogenously censored by the DM's current fallback option. The central insight is that one-sided censoring acts as an information-stabilizing force, restoring the monotonicity that unrestricted Bayesian learning typically destroys. This stabilization is captured by decomposing the intertemporal change in the one-step marginal search value into a negative fallback-value effect and an ambiguous learning effect. We demonstrate that under suitable single-crossing conditions, lower-censoring mathematically bounds the magnitude of positive belief shifts, ensuring the fallback-value effect dominates along all feasible search histories. 

The analysis reveals a strict tractability hierarchy across information structures. Pathwise monotonicity under full revelation requires highly restrictive conditions, specifically that the single-crossing point lies at the top of the support. Lower-censoring preserves pathwise monotonicity, and thus the optimality of simple reservation-cutoff policies, under strictly weaker primitive conditions. Furthermore, we show that expected monotonicity depends solely on the conditional expected learning effect. Because the marginal search value is affine in the posterior belief, Bayes plausibility dictates that the expected monotone conditions are identical across both lower-censoring and full revelation. Consequently, informational coarsening preserves tractability exclusively by suppressing realized posterior volatility, not by altering the average drift of beliefs.

In addition to characterizing monotonicity, the paper provides a rigorous dynamic-programming foundation for general Bayesian search environments. Because the sequential search problem is undiscounted, the associated Bellman operator lacks standard contraction properties, and the Bellman equation generally admits multiple fixed points. To resolve this multiplicity, we demonstrate that the economically relevant value function can be constructed as the pointwise limit of finite-horizon value functions. The resulting function is uniquely identified as the lowest fixed point of the Bellman operator, and it is this minimal solution that dictates the optimal stopping strategy for the infinite-horizon problem. This constructive approach provides a formal justification for the use of the Bellman equation while keeping the existence and verification of optimal policies conceptually distinct from the structural conditions required for monotonicity.

Several theoretical and applied extensions remain open. First, extending the framework to continuous payoff distributions or richer multidimensional state spaces requires moving beyond scalar single-crossing conditions. In such environments, the learning effect becomes multidimensional, and preserving monotonicity will likely demand strict shape restrictions on the posterior predictive distributions, such as log-concavity or the Monotone Likelihood Ratio Property (MLRP). Second, the information structure could be endogenized within a formal mechanism design or Bayesian persuasion framework. Rather than assuming an exogenous censoring rule, future research could analyze how a principal, such as a hiring platform or a market intermediary, optimally designs the censoring threshold to maximize the searcher's engagement or the aggregate match quality. Finally, the theoretical separation between pathwise and expected monotonicity provides a structural foundation for empirical work. The model generates precise, testable implications regarding how DMs ought to adjust their reservation cutoffs following coarse failures, such as uninformative job-market rejections or opaque price screening. Testing these theoretical bounds against empirical belief-updating behavior will deepen our understanding of when informational coarsening serves strictly to bound learning, and when it acts as a necessary condition for the tractability of sequential choice.

\bibliography{SearchLearning}{}

@article{adam2001learning,
  title={Learning while searching for the best alternative},
  author={Adam, Klaus},
  journal={Journal of Economic Theory},
  volume={101},
  number={1},
  pages={252--280},
  year={2001},
  publisher={Elsevier}
}

@article{baucells2025search,
  title={Search in the dark: The case with recall and gaussian learning},
  author={Baucells, Manel and Zorc, Sa{\v{s}}a},
  journal={Operations Research},
  volume={73},
  number={5},
  pages={2572--2590},
  year={2025},
  publisher={INFORMS}
}

@incollection{bergemann2018bandit,
  title={Bandit problems},
  author={Bergemann, Dirk and V{\"a}lim{\"a}ki, Juuso},
  booktitle={The new Palgrave dictionary of economics},
  pages={665--670},
  year={2018},
  publisher={Springer}
}

@article{bikhchandani1996optimal,
  title={Optimal search with learning},
  author={Bikhchandani, Sushil and Sharma, Sunil},
  journal={Journal of Economic Dynamics and Control},
  volume={20},
  number={1-3},
  pages={333--359},
  year={1996},
  publisher={Elsevier}
}

@article{degroot1968some,
  title={Some problems of optimal stopping},
  author={DeGroot, Morris H},
  journal={Journal of the Royal Statistical Society Series B: Statistical Methodology},
  volume={30},
  number={1},
  pages={108--122},
  year={1968},
  publisher={Oxford University Press}
}

@article{ferguson1979bayesian,
  title={Bayesian nonparametric estimation based on censored data},
  author={Ferguson, Thomas S and Phadia, Eswar G},
  journal={The annals of statistics},
  pages={163--186},
  year={1979},
  publisher={JSTOR}
}

@unpublished{fershtman2025searching,
  title={Searching for “Arms”: Experimentation with Endogenous Consideration Sets},
  author={Fershtman, Daniel and Pavan, Alessandro},
  note={Working paper},
  year={2025}
}

@article{gittins1974dynamic,
  title={A dynamic allocation index for the sequential design of experiments},
  author={Gittins, John},
  journal={Progress in statistics},
  pages={241--266},
  year={1974},
  publisher={North Holland}
}

@article{he2022mislearning,
  title={Mislearning from censored data: The gambler's fallacy and other correlational mistakes in optimal-stopping problems},
  author={He, Kevin},
  journal={Theoretical Economics},
  volume={17},
  number={3},
  pages={1269--1312},
  year={2022},
  publisher={Wiley Online Library}
}

@article{hutter2013bayesian,
  title={Bayesian optimization with censored response data},
  author={Hutter, Frank and Hoos, Holger and Leyton-Brown, Kevin},
  journal={arXiv preprint arXiv:1310.1947},
  year={2013}
}

@article{koulayev2013search,
  title={Search with dirichlet priors: estimation and implications for consumer demand},
  author={Koulayev, Sergei},
  journal={Journal of Business \& Economic Statistics},
  volume={31},
  number={2},
  pages={226--239},
  year={2013},
  publisher={Taylor \& Francis}
}

@article{mccall1970economics,
  title={Economics of information and job search},
  author={McCall, John Joseph},
  journal={The Quarterly Journal of Economics},
  volume={84},
  number={1},
  pages={113--126},
  year={1970},
  publisher={MIT Press}
}

@article{preuss2023search,
  title={Search, learning, and tracking},
  author={Preuss, Marcel},
  journal={The RAND Journal of Economics},
  volume={54},
  number={1},
  pages={54--82},
  year={2023},
  publisher={Wiley Online Library}
}

@article{smith2017risk,
  title={Risk aversion, information acquisition, and technology adoption},
  author={Smith, James E and Ulu, Canan},
  journal={Operations Research},
  volume={65},
  number={4},
  pages={1011--1028},
  year={2017},
  publisher={INFORMS}
}

@article{santos2017search,
  title={Search with learning for differentiated products: Evidence from e-commerce},
  author={Santos, Babur De Los and Horta{\c{c}}su, Ali and Wildenbeest, Matthijs R},
  journal={Journal of Business \& Economic Statistics},
  volume={35},
  number={4},
  pages={626--641},
  year={2017},
  publisher={Taylor \& Francis}
}

@article{talmain1992search,
  title={Search from an unkown distribution an explicit solution},
  author={Talmain, Gabriel},
  journal={Journal of Economic Theory},
  volume={57},
  number={1},
  pages={141--157},
  year={1992},
  publisher={Elsevier}
}

@article{ulu2009uncertainty,
  title={Uncertainty, information acquisition, and technology adoption},
  author={Ulu, Canan and Smith, James E},
  journal={Operations Research},
  volume={57},
  number={3},
  pages={740--752},
  year={2009},
  publisher={INFORMS}
}

@article{weitzman1979optimal,
	title={Optimal search for the best alternative},
	author={Weitzman, Martin L},
	journal={Econometrica: Journal of the Econometric Society},
	pages={641--654},
	year={1979}
}

@article{xu2025search,
  title={Search without Recall and Gaussian Learning: Structural Properties and Optimal Policies},
  author={Xu, Stephen Z and Baucells, Manel and Zorc, Sa{\v{s}}a},
  journal={Available at SSRN 5414554},
  year={2025}
}

@article{rosenfield1981optimal,
  author  = {Rosenfield, Donald B. and Shapiro, Roy D.},
  title   = {Optimal Adaptive Price Search},
  journal = {Journal of Economic Theory},
  volume  = {25},
  number  = {1},
  pages   = {1--20},
  year    = {1981},
  month   = aug,
  doi     = {10.1016/0022-0531(81)90014-4}
}

@book{Bertsekas2013,
  author    = {Bertsekas, Dimitri P.},
  title     = {Abstract Dynamic Programming},
  publisher = {Athena Scientific},
  year      = {2013},
  address   = {Belmont, MA}
}
\bibliographystyle{ecta}

\appendix 

\section*{Appendix: Proofs}\label{app:proof}
%------------------
\begin{proof}[Proof of Proposition \ref{prop: optimal stopping monotone}]
First, at a current state $(y,\boldsymbol\alpha)$, if $U(y,\boldsymbol\alpha)>c$, an immediate stopping is not optimal obviously.  Next, suppose $U(y,\boldsymbol\alpha)\le c$. Consider any continuation policy that starts from $(y,\boldsymbol\alpha)$ and stops after some future number of inspections. For notational simplicity, first consider policies that stop within at most $T$ additional inspections. Let $(Y_s,\boldsymbol\alpha_s)$ denote the state after $s$ additional inspections, with $(Y_0,\boldsymbol\alpha_0)=(y,\boldsymbol\alpha)$, and let $\tau\le T$ be the induced stopping time, measured from the current state.

Since the problem is monotone and $U(y,\boldsymbol\alpha)\le c$, every continuation path satisfies
\[
    U(Y_s,\boldsymbol\alpha_s)\le c
    \qquad\text{for all }s\ge 0.
\]
Therefore, the expected net gain from following the continuation policy is
\begin{equation*}
    \E\bigl[Y_\tau-y-c\tau\bigr]
    = \E\left[\sum_{s=0}^{\tau-1}(Y_{s+1}-Y_s-c)\right]
    = \E\left[\sum_{s=0}^{\tau-1}\bigl(U(Y_s,\boldsymbol\alpha_s)-c\bigr)\right]
    \le 0.
\end{equation*}
Thus, no bounded continuation policy improves on immediate stopping. The extension to admissible unbounded stopping policies follows by the standard truncation argument, provided expected payoffs are well defined. Hence stopping is optimal whenever $U(y,\boldsymbol\alpha)\le c$.
\end{proof}

%-----------------
\begin{proof}[Proof of Lemma \ref{lem: st order}]
For any $\alpha,\alpha'\in[0,1]$ and $j\in K$, we have
\begin{equation*}
    \pi_{\alpha'}(j)-\pi_{\alpha}(j)=(\alpha'-\alpha)(\pi_{1,j}-\pi_{2,j}).
\end{equation*}
The claims for the $sc$, $st$, and $icx$ orders follow directly from the linearity of $\pi_\alpha$, $F_\alpha$, and the integrated survival functions in $\alpha$.  
\ignore{
For the $lr$ order, let
\begin{equation*}
\eta_{j}\equiv \frac{\pi_{\alpha'}(j) }{\pi_{\alpha}(j) }=\frac{\alpha'\cdot \left( \pi_{1,j}-\pi_{2,j}\right)
+\pi_{2,j}}{\alpha\cdot \left( \pi_{1,j}-\pi_{2,j}\right) +\pi_{2,j}}
\end{equation*}%
and the result is implied by $\eta_{j+1}-\eta_{j}\propto (\alpha'-\alpha) \left( \frac{\pi_{1,j+1}}{\pi_{2,j+1}}-\frac{\pi_{1,j}}{\pi_{2,j}}\right) $, where the sign follows from the likelihood-ratio ordering.
}
\end{proof}

%------------------
\begin{proof}[Proof of Lemma \ref{lem: Delta(y)}]
(i) As Assumption \ref{ass: single-crossing} implies $L_1\prec_{st}L_2$, then $F_1(y)\ge F_2(y)$ for all $y\in\bar{K}$.  (ii) As $L_1\prec_{st}L_2$ implies $L_1\prec_{icx}L_2$, then $\Delta(y)\ge 0$ for all $y\in\bar{K}$. Moreover, $\Delta(k)=0$ because $(X-k)_+=0$ for $X\in K$. (iii) Observe that
\[
    \Delta(y)-\Delta(y+1)
    =
    \sum_{z=y+1}^k(\pi_{2,z}-\pi_{1,x})
    =
    F_1(y)-F_2(y)
    \ge0.
\]
Therefore $\Delta(y)$ is weakly decreasing in $y$.
\end{proof}

%---------------------
\begin{proof}[Proof of Lemma \ref{lem: U(y,a)}]
(i) For any $\alpha \in \left[ 0,1\right] $, we have
\begin{equation*}
U\left( y+1,\alpha \right) -U\left( y,\alpha \right) %=\sum_{z=y+1}^{k}\pi_{\alpha}(z) \left( z-y-1\right) -\sum_{z=y}^{k}\pi_{\alpha}(z) \left( z-y\right)
=-\sum_{z=y+1}^{k}\pi_{\alpha}(z)
=-\bar{F}_{\alpha}( y)\le 0,
\end{equation*}
with strict inequality for $y<k$ by full support assumption.  %Then for any $y'>y$, \begin{equation*}U\left( y',\alpha \right) -U\left( y,\alpha \right)=\sum_{z=y}^{y'-1}\left[ U\left( z+1,\alpha \right) -U\left(z,\alpha \right) \right] =-\sum_{z=y}^{y'-1}\bar{F}_{\alpha}(z) <0. \end{equation*}
%\footnote{ An alternative expression is \begin{equation*} U\left( y',\alpha \right) -U\left( y,\alpha \right)=\sum_{z=y'}^{k}\pi_{\alpha}(z) \left( z-y^{\prime}\right) -\sum_{z=y}^{k}\pi_{\alpha}(z) \left( z-y\right)=-\left[ \sum_{z=y}^{y'-1}\pi_{\alpha}(z) \left(z-y\right) +\sum_{z=y'}^{k}\pi_{\alpha}(z) \left(y'-y\right) \right]. \end{equation*} }
Moreover, whenever the second-order difference is defined, e.g., for $y=0,\ldots, k-2$, we have
\begin{equation*}
\left[ U\left( y+2,\alpha \right) -U\left( y+1,\alpha \right) \right] -\left[
U\left( y+1,\alpha \right) -U\left( y,\alpha \right) \right] =\bar{F}_{\alpha}(y) -\bar{F}_{\alpha}(y+1)
=\pi_{\alpha}(y+1) >0.
\end{equation*}

(ii)  For $\alpha'>\alpha$, we have $U(y,\alpha')-U(y,\alpha)=(\alpha-\alpha')\Delta(y)$.
By Lemma \ref{lem: Delta(y)}, $\Delta(y)\ge0$, and hence $U(y,\alpha)$ is decreasing in $\alpha$.  Finally, for $y'>y$ and $\alpha'>\alpha$,
{ \setlength{\abovedisplayskip}{3pt}
\setlength{\belowdisplayskip}{3pt}
\[
\bigl[U(y',\alpha')-U(y,\alpha')\bigr]
-
\bigl[U(y',\alpha)-U(y,\alpha)\bigr]
=
(\alpha'-\alpha)\bigl[\Delta(y)-\Delta(y')\bigr]
\ge0,
\]
}
because $\Delta$ is weakly decreasing. Then $U(y,\alpha)$ has increasing differences in $(y,\alpha)$.
\ignore{
Specifically,
\begin{equation*}
\Delta \left( y'\right) -\Delta \left( y\right)
=\sum_{z=y}^{y'-1}\left[ \bar{F}_{1}\left( z\right) -\bar{F}_{2}\left( z\right)
\right]=\sum_{z=y}^{y'-1}\left[ F_{2}\left( z\right) -F_{1}\left( z\right) %
\right] \leq 0.
\end{equation*}
}
\end{proof}

%---------------------

\begin{proof}[Proof of Theorem \ref{thm: monotone conditions_lower}]
For any given state $\left( y,\alpha \right) $ and any search outcome $X=x$, the new fallback value $y'=\max \{x,y\}$ and the new belief $\alpha'$ is derived by (\ref{eq: a_t+1 lower}).  Under lower-censoring, the posterior shift is
\begin{equation}\label{eq: a'-a_lower}
\alpha'-\alpha=
\begin{cases}
\displaystyle
\frac{\alpha(1-\alpha)}{F_{\alpha}(y)}\bigl[F_1(y)-F_2(y)\bigr],
& \text{if } \{X\le y\} \text{ is observed},\\[1em]
\displaystyle
\frac{\alpha(1-\alpha)}{\pi_{\alpha}(x)}(\pi_{1,x}-\pi_{2,x}),
& \text{if } X=x>y \text{ is observed}.
\end{cases}
\end{equation}
We are interested in the conditions for \eqref{def: monotone} to hold, and there are two cases.

\paragraph{Automatic cases.}
First, suppose the observation is censored, i.e., $\{X\le y\}$. Then $y'=y$ and $\alpha'\ge \alpha$ from \eqref{eq: a'-a_lower}.  Hence $
U(y',\alpha')-U(y,\alpha) = U(y,\alpha')-U(y,\alpha) = (\alpha-\alpha')\Delta(y)\le 0$ as $\Delta(y)\ge 0$ by Lemma \ref{lem: Delta(y)}.  Next, suppose the observation is fully revealed with $X=x>y$ and $x\le \hat{x}$. Then $\alpha'\ge \alpha$ as $\pi_{1,x}-\pi_{2,x}\ge 0$, and the learning effect $(\alpha-\alpha')\Delta(x)\le 0$.  As the fallback value effect is negative, $U(x,\alpha')-U(y,\alpha)\le 0$ as well.  So monotonicity holds automatically in both these cases.

\paragraph{The critical case.}
Now consider a fully revealed observation $X=x>\max\{y,\hat{x}\}$. Then $\pi_{1,x}-\pi_{2,x}<0$, and $\alpha'<\alpha$ from \eqref{eq: a'-a_lower}. The change in the one-step marginal search value becomes
\[
U(x,\alpha')-U(y,\alpha)
=
-\sum_{z=y}^{x-1}\bar F_\alpha(z)
+
(\alpha-\alpha')\Delta(x).
\]
The first term (fallback value effect) is strictly negative, while the second term (learning effect) is strictly positive. Thus the sign of the total change is determined by their relative magnitudes.
\begin{itemize}
    \item \textbf{Sufficiency}.  Suppose condition \eqref{eq: IFF_lower} holds for every $(y,\alpha)$ and every $x>\max\{y,\hat{x}\}$. Then for the critical case, $U(x,\alpha')-U(y,\alpha)\le 0$. Combined with the automatic cases, we have $U(y',\alpha')\le U(y,\alpha)$ for every feasible transition. Therefore the search problem is monotone.  
    \item \textbf{Necessity}.  Conversely, suppose the search problem is monotone. Fix any state $(y, \alpha)$, and any $x>\max\{y,\hat{x}\}$. Monotonicity requires $U(x,\alpha')- U(y,\alpha)\le 0$.  Substituting the expression from the critical case, this inequality is equivalent to \eqref{eq: IFF_lower}. This proves the necessity.
\end{itemize}

We next show the sufficient condition \eqref{eq: SC_lower} implies \eqref{eq: IFF_lower}. Consider any $x>\max\{y,\hat{x}\}$. First,
\begin{equation*}
\sum_{z=y}^{x-1}\bar{F}_{\alpha}(z) \geq \bar{F}_{\alpha}(y)
=1-F_{\alpha}(y) \geq 1-F_{1}\left( y\right) \geq
1-F_{1}\left( x-1\right)=\bar{F}_1(x-1),
\end{equation*}% 
where $F_{\alpha}(y)  \leq F_{1}\left( y\right) $ because
$F_{1}\left( y\right) \geq F_{2}\left( y\right)$ by Assumption \ref{ass: single-crossing}.  Second, observe that $r_{x}:=\frac{\pi_{1,x}}{\pi_{2,x}}<1$ for all $x>\max\{y,\hat{x}\}$, and
we have
\begin{equation*}
\alpha -\alpha'=\frac{\alpha \left( 1-\alpha \right) }{\alpha
r_{x}+\left( 1-\alpha \right) }\left( 1-r_{x}\right)
\end{equation*}
achieves its maximum at $\alpha ^{\ast }=\left( 1+\sqrt{r_{x}}\right)
^{-1}$.\footnote{Solving the first order condition of $\frac{\partial }{\partial \alpha }\left[ \frac{\alpha \left( 1-\alpha \right) }{\alpha r_{x}+\left( 1-\alpha\right) }\right] =\frac{1-2\alpha +\left( 1-r_{x}\right) \alpha ^{2}}{\left[\alpha r_{x}+\left( 1-\alpha \right) \right] ^{2}}=0$ gives $\alpha ^{\ast }$.}
 It then follows that%\footnote{Note that we do not have the monotonicity of $\frac{1-\sqrt{r_{x}}}{1+\sqrt{r_{x}}}\Delta \left( x\right) $, as $\frac{1-\sqrt{r_{x}}}{1+\sqrt{r_{x}}}$is increasing in $x$ while $\Delta \left( x\right) $ is decreasing in $x$.}
\begin{equation*}
\left( \alpha -\alpha'\right) \Delta \left( x\right) \leq \frac{
\alpha ^{\ast }\left( 1-\alpha ^{\ast }\right) }{\alpha ^{\ast }r_{x}+\left(
1-\alpha ^{\ast }\right) }\left( 1-r_{x}\right) \Delta \left( x\right) =
\frac{1-\sqrt{r_{x}}}{1+\sqrt{r_{x}}}\Delta \left( x\right)=\frac{\sqrt{\pi_{2,x}}-\sqrt{\pi_{1,x}}}{\sqrt{\pi_{2,x}}+\sqrt{\pi_{1,x}}}\,\Delta(x).
\end{equation*}

We then have the state-free sufficient condition (\ref{eq: SC_lower}) for monotonicity.
\end{proof}

%---------------------
\begin{proof}[Proof of Lemma \ref{lem: y*(a)}]
(i) By Lemma \ref{lem: Delta(y)}, $U_2(y)\ge U_1(y)$ for every $y$, and hence
\[
    U_1(y)\le U(y,\alpha)\le U_2(y)
    \qquad\text{for all }y\in\bar K.
\]
Therefore, the cutoff for $U(y,\alpha)$ lies between the two degenerate-belief cutoffs: $y_1^*\le y^*(\alpha)\le y_2^*$.  (ii) By Lemma \ref{lem: U(y,a)}, $U(y,\alpha)$ is weakly decreasing in $\alpha$. Thus, if $\alpha'\ge\alpha$, then $U(y,\alpha')\le U(y,\alpha)$ for every $y$, and $y^*(\alpha')\le y^*(\alpha)$.  Thus $y^*(\alpha)$ is weakly decreasing in $\alpha$.  Finally, if $c'\ge c$, then $\{y:U(y,\alpha)\le c\}\subseteq\{y:U(y,\alpha)\le c'\}$.  Therefore $y^*(\alpha;c')\le y^*(\alpha;c)$. Hence, the cutoff is weakly decreasing in the search cost $c$.  (iii) The result follows directly from \eqref{eq: belief shift_lower}.
\end{proof}

%---------------------
\begin{proof}[Proof of Lemma \ref{lem:belief-cutoff-stopping}]
$U(y,\alpha)=\alpha U_1(y)+(1-\alpha)U_2(y)$ is decreasing in $\alpha$ by Assumption \ref{ass: single-crossing}.  Moreover, for any $y\in\mathcal{Y}$, $U(y,1)=U_1(y)\le c$, and the set $\{\alpha:U(y,\alpha)\le c\}$ is nonempty. Therefore $\underline\alpha(y)$ is well-defined and satisfies: $\alpha\ge\underline\alpha(y)\Leftrightarrow U(y,\alpha)\le c$.  By Proposition \ref{prop: optimal stopping monotone}, monotonicity implies that stopping is optimal if and only if $U(y,\alpha)\le c$. Hence stopping is optimal if and only if $\alpha\ge\underline\alpha(y)$.
\end{proof}

%---------------------
\begin{proof}[Proof of Proposition \ref{prop:finite-pathwise-bound}]
If $y\ge y_2^*$, then stopping is optimal for every belief, so $\bar n(y,\alpha)=0$.

Now fix $y\in\mathcal Y$ with $y<y_2^*$. We argue by backward induction on $y$. Under Assumption \ref{ass:strict-lowest-cutoff}, $\bar O(y)<\infty$. Since $R(y)>1$, $m_y(\alpha)<\infty$. If the next $m_y(\alpha)$ observations are all censored, the fallback value remains $y$, the posterior odds become $O(\alpha)R(y)^{m_y(\alpha)}$, and by definition of $m_y(\alpha)$, the state reaches the stopping region. Hence, the policy stops within $m_y(\alpha)$ searches along any path with $m_y(\alpha)$ consecutive censored observations.

Otherwise, the first uncensored observation occurs after $m<m_y(\alpha)$ censored observations. Let this revealed value be $x>y$. Immediately before observing $x$, the posterior odds are $O(\alpha)R(y)^m$. After observing $x$, the posterior belief is $\alpha^{(m,x)}$, as given by \eqref{eq: alpha(m,x)}, where
\[
    O(\alpha^{(m,x)})
    =
    O(\alpha)R(y)^m\frac{\pi_{1,x}}{\pi_{2,x}}.
\]
The fallback value becomes $x>y$. If $x\ge y_2^*$, stopping is immediate. If $x<y_2^*$, then by the induction hypothesis, starting from $(x,\alpha^{(m,x)})$, the policy stops within $\bar n(x,\alpha^{(m,x)})$ additional searches. Therefore, along such a path, the total number of additional searches is bounded by
\[
    m+1+\bar n(x,\alpha^{(m,x)}).
\]
Taking the maximum over $m=0,\ldots,m_y(\alpha)-1$ and $x\in\{y+1,\ldots,k\}$, and comparing with the all-censored case, gives the stated recursive bound \eqref{eq: pathwise bound}.  The bound is finite because $m_y(\alpha)<\infty$ and the support $K$ is finite. This completes the induction.

It remains to show that $\bar n(y,\alpha)$ is decreasing in both $\alpha$ and $c$. We prove the property of $\alpha$ by backward induction, and the proof for $c$ is similar.  For the base case $y\ge y_2^*$, we have $\bar n(y,\alpha)=0$ for all $\alpha$ and $c$, which is trivially decreasing in both arguments.

For the inductive step, fix $y<y_2^*$ and assume the monotonicity claim holds for every $x>y$. An increase in $\alpha$ makes stopping strictly more attractive.  We exploit two key monotonic effects.  1) From the definition of the consecutive censoring bound,
\[
m_y(\alpha) := \min\{m\in\mathbb N_0: O(\alpha)R(y)^m\ge \bar O(y)\}.
\]
It follows that $m_y(\alpha)$ is weakly decreasing in $\alpha$: a higher $\alpha$ raises the odds ratio $O(\alpha)$, while a higher $c$ lowers the odds threshold $\bar O(y)$. 2) For any $x>y$ and any $m<m_y(\alpha)$, the posterior odds after $m$ censored draws followed by a revelation of $x$ are
\[
O(\alpha^{(m,x)}) = O(\alpha)R(y)^m\frac{\pi_{1,x}}{\pi_{2,x}},
\]
which is increasing in $\alpha$ and independent of $c$. Thus, by the induction hypothesis, the inner term of $m+1+\bar n\bigl(x,\alpha^{(m,x)}(\alpha)\bigr)$ is weakly decreasing in $\alpha$ wherever it is defined.

Now, as $\alpha$ increases, the feasible set of $m$'s over which the inner maximum in
\[
\bar n(y,\alpha)
=
\max\left\{
    m_y(\alpha),
    \max_{0\le m<m_y(\alpha)}
    \max_{x>y}
    \left[
        m+1+\bar n\bigl(x,\alpha^{(m,x)}(\alpha)\bigr)
    \right]
\right\}
\]
gets smaller (or remains the same), while every term in that maximum is itself weakly smaller. Consequently, the entire expression $\bar n(y,\alpha)$ is weakly decreasing in $\alpha$. This establishes the desired monotonicity simultaneously for both $\alpha$, completing the induction.  The proof for $c$ is similar.
\end{proof}

%---------------------------
\begin{proof}[Proof of Theorem \ref{thm: monotone conditions_lower_continuous}]
For any given state $(y,\alpha)$ and any search outcome $X=x$, the new fallback value $y'=\max \{x,y\}$ and the new belief $\alpha'$ satisfies 
\begin{equation*}
\alpha'-\alpha=
\begin{cases}
\displaystyle
\frac{\alpha(1-\alpha)}{F_{\alpha}(y)}\bigl[F_1(y)-F_2(y)\bigr],
& \text{if } \{X\le y\} \text{ is observed},\\[1em]
\displaystyle
\frac{\alpha(1-\alpha)}{f_{\alpha}(x)}(f_1(x)-f_2(x)),
& \text{if } X=x>y \text{ is observed}.
\end{cases}
\end{equation*}
Consider the monotone condition; there are two cases.

\paragraph{Automatic cases.}
First, suppose the observation is censored, i.e., $\{X\le y\}$. Then $y'=y$ and $\alpha'\ge \alpha$. Hence $
U(y',\alpha')-U(y,\alpha) = (\alpha-\alpha')\Delta(y)\le 0$.  Next, suppose $X=x>y$ and $x\le \hat{x}$. Then $\alpha'\ge \alpha$, and the learning effect $(\alpha-\alpha')\Delta(x)$ is negative, and the fallback value effect is always negative.  So monotonicity holds automatically in both these cases.

\paragraph{The critical case.}
Now consider a fully revealed observation $X=x>\max\{y,\hat{x}\}$. Then $\pi_{1,x}-\pi_{2,x}<0$, so $\alpha'<\alpha$. The change in the marginal search value becomes
\[
U(x,\alpha')-U(y,\alpha)
=
-\int_y^x\bar{F}_{\a}(z)dz
+
(\alpha-\alpha')\Delta(x).
\]
The first term (fallback value effect) is strictly negative, while the second term (learning effect) is strictly positive. Thus the sign of the total change is determined by their relative magnitudes.
\begin{itemize}
    \item \textbf{Sufficiency}.  Suppose condition \eqref{eq: IFF_lower_continuous} holds for every $(y,\alpha)$ and every $x>\max\{y,\hat{x}\}$. Then for the critical case, $U(x,\alpha')-U(y,\alpha)\le 0$. Combined with the automatic cases, we have $U(y',\alpha')\le U(y,\alpha)$ for every feasible transition. Therefore, the search problem is monotone.  
    \item \textbf{Necessity}.  Conversely, suppose the search problem is monotone. Fix any state $(y,\alpha)$ and any $x>\max\{y,\hat{x}\}$. Monotonicity requires $U(x,\alpha')-U(y,\alpha)\le 0$.  Substituting the expression from the critical case, this inequality is equivalent to \eqref{eq: IFF_lower_continuous}. This proves the necessity.
\end{itemize}
\end{proof}

%---------------------------
\begin{proof}[Proof of Lemma \ref{lem: Delta(p) U(p,a)_upper}]
(1) $L_{1}\prec _{sc}L_{2}$ implies $L_{1}\prec _{st}L_{2}$.  Using summation by part, from \eqref{eq:Delta-upper}
\[
\Delta^u(p)
=
-\sum_{z=1}^{p-1}\bigl[F_1(z)-F_2(z)\bigr]\le 0,
\]
and $\Delta^u(p+1)-\Delta^u(p)=F_2(p)-F_1(p)\le0$.
(2) First, for monotonicity, we have
\begin{equation*}
U^u\left( p+1,\alpha \right) -U^u\left( p,\alpha \right)
=\sum_{z=1}^{p}\pi_{\alpha}(z) \left( p+1-z\right) -\sum_{z=1}^{p-1}\pi_{\alpha}(z) \left( p-z\right)
=\sum_{z=1}^{p}\pi_{\alpha}(z) =F_{\alpha}(p)\ge 0.
\end{equation*}%
Whenever the second-order difference exists, we have
\begin{equation*}
\left[ U^u\left( p+2,\alpha \right) -U^u\left( p+1,\alpha \right) \right] -\left[
U^u\left( p+1,\alpha \right) -U^u\left( p,\alpha \right) \right] =F_{\alpha}(
p+1) -F_{\alpha}(p) =\pi_{\alpha}(p+1) >0\text{.}
\end{equation*}
Second, for $\alpha'>\alpha$, we have $U^u(p,\alpha')-U^u(p,\alpha)=(\alpha-\alpha')\Delta^u(p)\ge 0$ because $\Delta^u(p)\le 0$.  Finally, for $p'>p$ and $\alpha'>\alpha$,
\begin{equation*}
\left[ U^u\left( p',\alpha' \right)-U^u\left( p,\alpha' \right) \right]
-\left[ U^u\left( p',\alpha \right)-U^u\left( p,\alpha \right) \right]
=(\alpha'-\alpha) \left[\Delta^u \left( p\right) -\Delta^u \left(p'\right)\right] \geq 0,
\end{equation*}
because $\Delta^u \left(p\right)$ is decreasing in $p$.
\end{proof}

%---------------------------
\begin{proof}[Proof of Proposition \ref{prop: mono-condition-upper}]
For any given state $\left( p,\alpha \right) $ and a search outcome $X=x$.  The new fallback value $p'=\min \{x,p\}$ and the new belief $\alpha'$ is derived from \eqref{eq:alpha-update-upper}.  From the decomposition expression
\begin{equation*}
\begin{aligned}
U^u(p',\alpha')-U^u(p,\alpha)
&=
\underbrace{U^u(p',\alpha')-U^u(p',\alpha)}_{\text{=A: learning effect}}
+
\underbrace{U^u(p',\alpha)-U^u(p,\alpha)}_{\text{=B: fallback-price effect}} .
\end{aligned}
\end{equation*}
We are interested in deriving the conditions for \eqref{def: monotone} to hold. There are two cases.

\paragraph{Automatic cases.}
First, suppose the observation is censored, i.e., $\{X\ge p\}$. Then $p'=p$ and
\begin{equation*}
U^u\left( p',\alpha'\right) -U^u\left( p,\alpha \right)
=-\left( \alpha
^{\prime }-\alpha \right) \Delta^u(p) \leq 0,
\end{equation*}
because $\Delta^u(p)\leq 0$ and $\alpha^{\prime }-\alpha \propto \bar{F}_{1}\left( p-1\right) -\bar{F}_{2}\left(p-1\right)\leq 0$ by \eqref{eq:alpha-update-upper}.  Next, suppose the observation is fully revealed with $X=x<p$ and $x\ge \hat{x}^u$. Then $\alpha'\le \alpha$, and the learning effect $(\alpha-\alpha')\Delta^u(x)$ is negative, and hence, $U^u\left( p^{\prime
},\alpha'\right) -U^u\left( p,\alpha \right) \leq 0$.  So monotonicity holds automatically in both these cases.

\paragraph{The critical case.}
Now consider a fully revealed observation $X=x<\min\{p,\hat{x}^u\}$. Then $\alpha'>\alpha$, and the learning effect is positive and the fallback value effect is negative. The fallback value effect
\begin{equation*}
\text{B}=U^u\left( x,\alpha \right) -U^u\left( p,\alpha \right)
=-\sum_{z=x}^{p-1}F_{\alpha}(z) \le 0,
\end{equation*}
and the learning effect $
\text{A}=U^u\left( x,\alpha'\right) -U^u\left( x,\alpha \right)
=\left( \alpha -\alpha'\right) \Delta^u \left( x\right)$.
\begin{itemize}
    \item \textbf{Sufficiency}.  Suppose condition \eqref{eq: IFF_upper} holds for every $(p,\alpha)$ and every $x<\min\{p,\hat{x}^u\}$. Then, for the critical case, $A+B\le 0$. Combined with the automatic cases, we have $U^u\left( p',\alpha'\right) \le U^u\left( p,\alpha \right)$ for every feasible transition. Therefore, the search problem is monotone.  
    \item \textbf{Necessity}.  Conversely, suppose the search problem is monotone. Fix arbitrary $(y, \alpha$ and any $x<\min\{p,\hat{x}^u\}$. Monotonicity requires $U^u\left( p',\alpha'\right) \le U^u\left( p,\alpha \right)$.  Substituting the expression from the critical case, this inequality is equivalent to \eqref{eq: IFF_upper}. This proves the necessity.
\end{itemize}

We next show that \eqref{eq: SC_upper} implies condition \eqref{eq: IFF_upper}. 
Fix any $x<\min\{p,\hat{x}^u\}$. First, observe that
\begin{equation*}
\sum_{z=x}^{p-1}F_{\alpha}(z)\geq F_{\alpha}(x)\geq F_2(x),
\end{equation*}
where the second inequality follows from
$F_{\alpha}(x)=\alpha F_1(x)+(1-\alpha)F_2(x)\geq F_2(x)$ by Assumption~\ref{ass: single-crossing}.

Next, define
$r_x:=\frac{\pi_{1,x}}{\pi_{2,x}}>1$.
The posterior updating formula implies
\begin{equation*}
\alpha-\alpha'
=\frac{\alpha(1-\alpha)}{\alpha r_x+(1-\alpha)}(1-r_x)<0.
\end{equation*}
The expression on the left-hand side attains its minimum at
$\alpha^*=(1+\sqrt{r_x})^{-1}$. Since $\Delta^u(x)\leq 0$, we obtain
\begin{equation*}
(\alpha-\alpha')\Delta^u(x)
\leq
\frac{\alpha^*(1-\alpha^*)}{\alpha^*r_x+(1-\alpha^*)}
(1-r_x)\Delta^u(x)
=
\frac{1-\sqrt{r_x}}{1+\sqrt{r_x}}\Delta^u(x)
=\frac{\sqrt{\pi_{1,x}}-\sqrt{\pi_{2,x}}}{\sqrt{\pi_{1,x}}+\sqrt{\pi_{2,x}}}\,\Delta^u(x).
\end{equation*}
Combining the above inequalities yields the sufficient condition in
\eqref{eq: SC_upper}, which establishes monotonicity.
\end{proof}

%---------------------------
\begin{proof}[Proof of Proposition \ref{prop: monotone full->censoring}]
Under full revelation, suppose the monotone condition \eqref{def: monotone} holds, and for a given a prior $\a$ and a sample outcome $X$, let $\a'(X)$ denote the updated posterior given by \eqref{eq: a_t+1 full}.  Now consider lower-censoring.  For any state $(y,\a)$ and any new sample outcome $X=x$:
\begin{itemize}
    \item If $x>y$, then the value $x$ is revealed and the posterior belief $\a'$ under lower-censoring is the same as that under full revelation, and therefore, condition \eqref{def: monotone} is satisfied.
    \item If $x\le y$, then only the censored event $\{X\le y\}$ is observed. The posterior belief under lower-censoring, from \eqref{eq: a_t+1 lower} and by the law of total expectation,
    \begin{equation*}
        \a'=\P(L=L_1|\a,X\le y)=\sum_{x=1}^y\P(X=x|X\le y)\P(L=L_1|\a, X=x)=\E\left(\a'(X)|X\le y\right).
    \end{equation*}
    Since $U(y,\a)$ is linear in $\a$, we have
    \begin{equation*}
        U(y,\a')=U\left(y,\E\left(\a'(X)|X\le y\right)\right)=\E\left[U(y,\a'(X))|X\le y\right].
    \end{equation*}
    As condition \eqref{def: monotone} is satisfied under full revelation, then for all $x\le y$,  $U(y,\a)\ge U(y,\a(x))$.  Taking conditional expectation then gives $U(y,\a)\ge \E\left[U(y,\a'(X))|X\le y\right]=U(y,\a')$.
\end{itemize} 
\end{proof}

%---------------------------
\begin{proof}[Proof of Theorem \ref{thm: IFF full}]                        
We start with the sufficient part. Under condition \eqref{eq: IFF_full}, $q_x\le p_x$ for all $x\le k-1$. Given any state $(y,\a)$ and any new search outcome $X=x$, there are two cases.\\ 
Case I:  If $x\le y$, then the new fallback value $y'=y$ and
\begin{equation*}
U(y',\a')-U(y,\a)=(\a'-\a)[U_1(y)-U_2(y)]
 \propto \left( q_{x}-p_{x}\right) \Delta \left( y\right),
\end{equation*}%
because $\Delta \left( y\right)\ge 0$ for all $y\in K$ under Assumption \ref{ass: single-crossing}, i.e., $L_1\prec_{sc}L_2$, and $\a-\a'\propto q_{x}-p_{x}$ by \eqref{eq: a_t+1 full}.  If $y=k$, then $\Delta(y)=0$ and \eqref{def: monotone} is satisfied.  If $y<k$, then $x\le y\le k-1$, and hence $\a-\a' \le 0$ under condition \eqref{eq: IFF_full}. The monotone condition \eqref{def: monotone} is also satisfied.\\
   Case II:  If $x> y$, then $y'=x$ and, from the decomposition expression \eqref{eq: U(y',a')-U(y,a)},
\begin{itemize}
\item the fallback value effect B$=U\left( x,\alpha \right) -U\left( y,\alpha \right)<0$ (Lemma \ref{lem: U(y,a)}).

\item the learning effect A$=\left( \a'-\alpha \right) [U_1(x)-U_2(x)] \propto\left( q_{x}-p_{x}\right) \Delta
\left( x\right)$.  In this case, if $x\le\hat{x}$, then B is negative as $q_{x}\leq p_{x}$ under condition \eqref{eq: IFF_full}.  If $x>\hat{x}$, then $x=k$ also under condition \eqref{eq: IFF_full}, and $\Delta(x)=\Delta(k)=0$.
\end{itemize}

Next is the necessary part. Suppose the monotone condition \eqref{def: monotone} holds.  Fix any $y\in K$ and any $x\le y$.  We have $y'=y$ and $\a'$ is derived from \eqref{eq: a_t+1 full}.  It then follows from \eqref{def: monotone}  that
\begin{equation*}
U(y',\a')-U(y,\a)=(\a'-\a)[U_1(y)-U_2(y)]\propto \left( q_{x}-p_{x}\right) \Delta \left( y\right)\le 0.
\end{equation*}
For $y=k$, $\Delta(k)=0$ and the above inequality holds. For $y=k-1$, $\Delta(k-1)=q_k-p_k> 0$ and the above inequality implies, for all $x\le k-1$, $(q_{x}-p_{x})(q_{k}-p_{k})\le 0$, and then $q_{x}-p_{x}\le 0$.  This gives condition \eqref{eq: IFF_full}.
\end{proof}
%--------------------------------------

\begin{proof}[Proof of Theorem \ref{thm: sm monotone}]
We first prove the necessary and sufficient condition.  The expected monotonicity condition is $\mathbb{E}_\alpha[U(y',\alpha')-U(y,\alpha)]\le 0$, and $U(y,\alpha)=\sum_{m=y}^{k-1}\bar F_\alpha(m)$ from \eqref{eq: U(y,a) survival}.  With the decomposition expression of \eqref{eq: U(y',a')-U(y,a)}, let us consider the fallback value effect first.  If $X=x\le y$, the fallback effect is zero.  If $X=x>y$, then $y'=x$ and $U(x,\alpha)-U(y,\alpha)=-\sum_{m=y}^{x-1}\bar F_\alpha(m)$.  Thus, for the next random sample value $X$, we have
\[
U(y',\alpha)-U(y,\alpha)
=
-\ind_{{\{X>y\}}}
\sum_{m=y}^{X-1}\bar F_\alpha(m),
\]
which does not depend on whether the information structure is lower-censoring or full revelation.  The expected fallback value effect is then
\[
\mathbb{E}_\alpha[U(y',\alpha)-U(y,\alpha)]
=
-\sum_{x=y+1}^k \pi_\alpha(x)\sum_{m=y}^{x-1}\bar F_\alpha(m)
=
-\sum_{m=y}^{k-1}\bar F_\alpha(m)\sum_{x=m+1}^k \pi_\alpha(x)
=
-\sum_{m=y}^{k-1}\bar F_\alpha(m)^2.
\]
where we interchange summations in the second equality.

Using Lemma \ref{lem: U(y,a)}, the learning effect $U(y',\alpha')-U(y',\alpha)=(\alpha-\alpha')\Delta(y')$.  Under full revelation, the belief update is given by \eqref{eq: a_t+1 full}, so taking expectation gives
\begin{equation}\label{eq: expected learning effect}
  \mathbb E_\alpha[U(y',\alpha')-U(y',\alpha)]
=
\alpha(1-\alpha)
\left[
(F_2(y)-F_1(y))\Delta(y)
+
\sum_{x=y+1}^k(\pi_{2,x}-\pi_{1,x})\Delta(x)
\right].  
\end{equation}
Under lower-censoring, the belief update is given by \eqref{eq: a_t+1 lower}.  Hence, the contribution of an censored event $\{X\le y\}$ to the expected learning effect is
\[
F_\alpha(y)(\alpha-\alpha_c')\Delta(y)
=
\alpha(1-\alpha)
\bigl(F_2(y)-F_1(y)\bigr)\Delta(y).
\]
For realizations $X=x>y$, lower-censoring and full revelation coincide, and
\[
\pi_\alpha(x)(\alpha-\alpha'_x)\Delta(x)
=
\alpha(1-\alpha)
(\pi_{2,x}-\pi_{1,x})\Delta(x).
\]
Therefore, under lower-censoring, the expected learning effect is also \eqref{eq: expected learning effect}.  Consequently, in both regimes the search problem is expected monotone if and only if \eqref{eq: IFF expected}.

We next prove the state-free sufficient condition by deriving a uniform bound on both sides.  First, under lower-censoring, the learning effect can be positive only when the sample value $X=x>\hat{x}$, in which case
\[
\alpha-\alpha'_x
=
\frac{\alpha(1-\alpha)}{\pi_\alpha(x)}\bigl(\pi_{2,x}-\pi_{1,x}\bigr)
\le 
\frac{1-\sqrt{r_x}}{1+\sqrt{r_x}},
\qquad r_x=\frac{\pi_{1,x}}{\pi_{2,x}}.
\]
Therefore, an upper bound of the learning effect is
\[
\mathbb{E}_\alpha[(\alpha-\alpha')\Delta(\max\{y,X\})]
\le
\mathbb{E}_\alpha[(\alpha-\alpha')\Delta(X)\ind_{\{X> \hat{x}\}}]
\le
\sum_{x>\hat{x}} \pi_{2,x} \frac{1-\sqrt{r_x}}{1+\sqrt{r_x}} \Delta(x),
\]
where we apply $\pi_{\alpha}(x)<\pi_{2,x}$ in the last inequality.  Second, for the fallback effect, since $\bar F_\alpha(m)\ge \bar F_1(m)$ uniformly in $\alpha$.  A sufficient condition is therefore given by \eqref{eq: SC expected lower}.
\end{proof}

\newpage

\section{Online Appendix}
%---------------------------------------
\begin{theorem}\label{thm: Main-general}
The optimal stopping rule of the Bayesian search problem is determined by the smallest solution of \eqref{eq:Bellman-general}.
\end{theorem}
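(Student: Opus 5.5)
The plan is the standard undiscounted monotone dynamic-programming argument: build the optimal value as the limit of finite-horizon values, then show that this limit is the smallest fixed point of the Bellman equation \eqref{eq:Bellman-general}.

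\textbf{Finite horizons.} Define $J_0(y,\boldsymbol\alpha):=y$ and $J_{n+1}:=TJ_n$, where $T$ is the operator on the right of \eqref{eq:Bellman-general}. By induction, $J_n$ equals the value of the problem in which at most $n$ further inspections are allowed. The induction step is one step of backward induction, using the Markov property of the state $(y,\boldsymbol\alpha)$ and the fact that Bayes' rule makes the conditional law of the next signal, given $h_t$, equal to the posterior predictive under $\boldsymbol\alpha_t$.

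Two properties of $T$ drive the rest. First, $T$ is monotone: $f\le g$ implies $Tf\le Tg$. Second, $J_1\ge J_0$ because of the stopping term. Together these give $J_n\uparrow J_\infty$ pointwise. All these functions are bounded above by $k$, since payoffs lie in $[0,k]$ and costs are nonnegative, so the limit is finite.

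\textbf{$J_\infty$ is a fixed point.} From $J_n\le J_\infty$ we get $J_{n+1}=TJ_n\le TJ_\infty$, hence $J_\infty\le TJ_\infty$. For the reverse inequality, note that $J_n\uparrow J_\infty$ is a monotone sequence bounded by $k$, and the expectation in $T$ is over a finite set of signals (at most $k+1$). The monotone convergence theorem, or simply a finite sum, therefore gives $TJ_n\to TJ_\infty$, so $J_\infty=TJ_\infty$.

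\textbf{$J_\infty$ is the smallest solution.} Let $f=Tf$ be any solution. Because of the $\max\{y,\cdot\}$ term, $f\ge J_0$. Applying monotonicity of $T$ repeatedly gives $f=T^nf\ge T^nJ_0=J_n$ for all $n$, hence $f\ge J_\infty$. The example of the constant function $k$ shows that this minimality is a genuine restriction.

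\textbf{Optimality of $J_\infty$ and the induced rule.} This is the main obstacle, because the problem has no discounting. There are two directions.

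\emph{(a) No stopping time beats $J_\infty$.} For any admissible $\tau$ with $\mathbb E[Y_\tau-c\tau]$ well defined, truncate to $\tau\wedge n$. The truncated time is feasible in the $n$-horizon problem, so its payoff is at most $J_n\le J_\infty$. Then let $n\to\infty$. One has $Y_{\tau\wedge n}\to Y_\tau$ boundedly on $\{\tau<\infty\}$, while $-c(\tau\wedge n)$ decreases monotonically. If $\tau=\infty$ with positive probability, the payoff is $-\infty$, which is trivially dominated. Either way the limit payoff is at most $J_\infty$.

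\emph{(b) The rule attains $J_\infty$.} Take $\tau^\dagger$ to be the first time at which $J_\infty(Y_t,\boldsymbol\alpha_t)=Y_t$. Since $J_\infty=TJ_\infty$, the process
\[
J_\infty(Y_{t\wedge\tau^\dagger},\boldsymbol\alpha_{t\wedge\tau^\dagger})-c\,(t\wedge\tau^\dagger)
\]
is a martingale. Hence $J_\infty(y,\boldsymbol\alpha)=\mathbb E[J_\infty(Y_{t\wedge\tau^\dagger},\cdot)-c(t\wedge\tau^\dagger)]$ for every $t$. Using $J_\infty\le k$ gives $c\,\mathbb E[t\wedge\tau^\dagger]\le k-J_\infty(y,\boldsymbol\alpha)\le k$. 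So $\tau^\dagger$ is integrable and in particular a.s.\ finite, and bounded convergence yields $\mathbb E[Y_{\tau^\dagger}-c\tau^\dagger]=J_\infty(y,\boldsymbol\alpha)$.

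This is exactly where positivity of $c$ and boundedness of payoffs replace the contraction property that is unavailable here. I expect (b), and specifically the verification that $\tau^\dagger$ is integrable, to be the delicate point. A general solution $f$ does not yield a finite stopping time in this way (for the constant $k$ the corresponding time is $\infty$), which is why the characterization must single out the smallest fixed point.
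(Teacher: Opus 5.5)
Your proof is correct. It shares the paper's skeleton: value iteration from $f_0=y$, minimality of the limit via monotonicity of $T$, and an upper bound for an arbitrary stopping time obtained by truncating to $\tau\wedge n$ and using optimality of the $n$-horizon value.

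It departs from the paper in two places.

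\emph{Fixed-point step.} The paper uses non-expansiveness of $T$ in the sup norm. That argument tacitly needs $f_n\to f_\infty$ uniformly, but only pointwise convergence is shown there. Your pointwise argument through the finite sum over signals avoids this issue.

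\emph{Attainment.} This is the more substantial difference. The paper shows that, once $n$ is large, the finite-horizon rules $\tau_n$ agree with $\tau_\infty$ on the finitely many histories of length below $N$. It combines this with a uniform tail bound $\mathbb P(\tau_n>N)<\eps(N)$ and integrability of $\tau_\infty$ to compare payoffs. You instead run the stopped martingale $J_\infty(Y_{t\wedge\tau^\dagger},\boldsymbol\alpha_{t\wedge\tau^\dagger})-c(t\wedge\tau^\dagger)$, which gives $\E[\tau^\dagger]\le k/c$ and the value identity in one stroke. Your verification argument is shorter and does not depend on the set of short histories being finite. It also bypasses the paper's tail lemma, whose claim $\E[\tau_\infty\mid\tau_\infty>N]\to 0$ should really read $\E[\tau_\infty\ind_{\{\tau_\infty>N\}}]\to0$. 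In exchange, the paper's route makes explicit how the finite-horizon stopping regions converge to the infinite-horizon one.

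One peripheral remark of yours is wrong. The constant function $k$ is not a solution when $c>0$, because $Tk(y,\boldsymbol\alpha)=\max\{y,k-c\}<k$ for $y<k$. In fact, your steps (a) and (b) apply to any bounded fixed point $f$. They show that the rule $\sigma$ induced by $f$ is integrable and that $f(y,\boldsymbol\alpha)=\E[Y_\sigma-c\sigma]\le J_\infty(y,\boldsymbol\alpha)$, so $f=J_\infty$. Non-minimal solutions must therefore be unbounded, and the ``smallest'' qualifier is needed only to exclude those.
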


\begin{proof}

In the sequel, a function is referred to as a real function defined on $\bar{K} \times \Delta(\{1,\ldots,\ell\}) $. For two real functions $f,g$, we say that $f\le g$, if it occurs pointwise: for every $(y,\boldsymbol{\alpha})$, $f(y,\boldsymbol{\alpha})\le g(y,\boldsymbol{\alpha}) $. 

\begin{lemma}\label{lemma: monotonity preserving.} 
Suppose that $f,g$ are two functions such that $f\le g$. Then, $Tf\le Tg$.
\end{lemma}
   
\begin{proof}  By assumption, for every $y, X, B(\boldsymbol{\alpha}, X)$, 
$$
f\!\left(\max\{y,X\},B(\boldsymbol{\alpha}, X)\right)
\le  
g\!\left(\max\{y,X\},B(\boldsymbol{\alpha}, X)\right)
$$  
with probability $1$. Therefore, the expectation over these two random variables preserves the same inequality. Thus, $Tf\le Tg$.
\end{proof}

We now define a sequence of functions inductively on which the rest of the proof resides. Let $f_0(y,\boldsymbol{\alpha})=y$ for every $(y,\boldsymbol{\alpha})$. Then for every integer $n$, set $f_{n+1}= Tf_n$.

\begin{lemma}\label{lemma: increasing sequence.}
    The sequence  $(f_{n})_n$ is increasing (e.g., for every $\boldsymbol{\alpha}$ and $n$,  $f_{n} \le f_{n+1}$).
\end{lemma}
\begin{proof}
   We proceed by induction on $n$.  
It is immediate that $f_0 \leq f_1$, since $f_1$ is defined as the maximum of $f_0$ and something else.
Assume now that
$
f_{n-1}\leq f_{n}$.
The previous lemma implies that 
$f_{n}=
Tf_{n-1}\leq Tf_{n}=f_{n+1}$, as desired.
\end{proof}

Since $(f_n)_n$ is bounded and monotonically increasing, the limit exists. 
Denote it by $f_\infty$. Thus, $(f_n)_n\le f_\infty$ for every $n$.
\begin{lemma}\label{lemma:f_infty}
The limiting function $f_\infty$ is a fixed point of the operator $T$.
\end{lemma}

\begin{proof}
The result follows from the continuity of $T$ with respect to the supremum (maximum) topology. Formally, for every $\eps > 0$, there exists a $\delta > 0$ such that if 
$
\sup_{(y, \boldsymbol{\alpha})} |f(y, \boldsymbol{\alpha}) - g(y, \boldsymbol{\alpha})| < \delta,
$
then 
$
\sup_{(y, \boldsymbol{\alpha})} |Tf(y, \boldsymbol{\alpha}) - Tg(y, \boldsymbol{\alpha})| < \eps.
$

Let $\eps > 0$ and assume $\sup_{(y, \boldsymbol{\alpha})} |f(y, \boldsymbol{\alpha}) - g(y, \boldsymbol{\alpha})| < \eps$. This implies $f(y, \boldsymbol{\alpha}) \ge g(y, \boldsymbol{\alpha}) - \eps$ for all $(y, \boldsymbol{\alpha})$. By the monotonicity and linearity of the expectation operator, it follows that:
\begin{align*}
Tf(y,\boldsymbol{\alpha}) &= \max\left\{ y, \; \mathbb{E}_{\boldsymbol{\alpha}}\left[ f\left(\max\{y,X\}, B(\boldsymbol{\alpha},X)\right) \right] - c \right\} \\
&\ge \max\left\{ y, \; \mathbb{E}_{\boldsymbol{\alpha}}\left[ g\left(\max\{y,X\}, B(\boldsymbol{\alpha},X)\right) - \eps \right] - c \right\} \\
&= \max\left\{ y, \; \mathbb{E}_{\boldsymbol{\alpha}}\left[ g\left(\max\{y,X\}, B(\boldsymbol{\alpha},X)\right) \right] - c - \eps \right\} \\
&\ge \max\left\{ y, \; \mathbb{E}_{\boldsymbol{\alpha}}\left[ g\left(\max\{y,X\}, B(\boldsymbol{\alpha},X)\right) \right] - c \right\} - \eps \\
&= Tg(y,\boldsymbol{\alpha}) - \eps.
\end{align*}
By symmetry, swapping the roles of $f$ and $g$ yields $Tg(y,\boldsymbol{\alpha}) \ge Tf(y,\boldsymbol{\alpha}) - \eps$. Thus, $\sup_{(y, \boldsymbol{\alpha})} |Tf(y, \boldsymbol{\alpha}) - Tg(y, \boldsymbol{\alpha})| \le \eps$, establishing that $T$ is a non-expansive map (and hence continuous with $\delta = \eps$) under the supremum norm. Taking the limit as $n \to \infty$ for $f_{n+1} = Tf_n$ yields $f_\infty = Tf_\infty$.
\end{proof}

The following lemma states that $f_\infty$ is the lowest fixed point of $T$.
\begin{lemma}\label{lemma: f_infty_minimal}
If $J^r$ is a fixed point of $T$, then $f_\infty \le J^r$.
\end{lemma}
\begin{proof}
Clearly, $f_0 \le J^r$. By Lemma~\ref{lemma: monotonity preserving.} and the construction,
$f_n = T^n f_0$, we have
\[
f_n = T^n f_0 \le T^n J^r = J^r
\qquad \text{for all } n.
\]
Taking limits preserves the inequality, and hence $f_\infty \le J^r$.
\end{proof}

We now turn to finite-horizon Bayesian search problems. Suppose that the search process proceeds as usual until time $n$. If the DM has not stopped before time $n$, the existing fallback payoff is the outcome of the dynamic programming. Our goal is to relate the optimal strategy of the $n$-period search problems to the sequence $(f_n)_n$.

Fix $n$ and define the stopping time $\tau_n$ as follows. Consider any history of search outcomes of length $t$ that has not yet led to stopping, and let $(y,\boldsymbol{\alpha})$ denote the induced fallback payoff and posterior belief. The stopping rule is:
\[
\tau_n = t \quad \text{if and only if} \quad f_n(y,\boldsymbol{\alpha})=y.
\]
In other words,  $f_n(y,\boldsymbol{\alpha})=y$ implies stopping. If $f_n(y,\boldsymbol{\alpha})>y$, the stopping time remains unspecified, which corresponds to continuing the search.

The same definition applies with $f_\infty$, and the corresponding stopping time is denoted by $\tau_\infty$. Note that $\tau_n$ is formally defined on the space of outcome histories; however, the induced state $(y,\boldsymbol{\alpha})$ fully determines, via $f_n$, whether the rule prescribes stopping or continuation.

\begin{remark}\label{rm: continue in n implies in n+1}
Because the sequence $(f_n)$
is increasing, whenever $\tau_n$ dictates continuation, so does $\tau_{n+1}$ and therefore, 
 $\tau_\infty$. But it might happen that after a certain history, $\tau_n$ dictates stopping, while $\tau_{n+1}$ dictates continuation. It implies that among the  histories of a fixed length, $N$, the set of histories that follow stopping is diminishing with $n.$ 
\end{remark}

\begin{lemma}
\label{lemma:optimal tau n}
The stopping time $\tau_n$ is optimal for the $n$-period search problem.
\end{lemma}

\begin{proof}
Consider stage $n-1$ of the search process, where the state is $(y,\boldsymbol{\alpha})$. Only one search opportunity remains. The DM can either stop and receive $y$, or continue and obtain the expected payoff
$
\mathbb{E}\!\left[\max\{y,X\}\right]-c.
$
The optimal choice yields $f_1(y,\boldsymbol{\alpha})$, and $\tau_1$ implements this choice.

Now consider stage $n-2$, where two search opportunities remain, and the state is again $(y,\boldsymbol{\alpha})$. The DM can either stop with a payoff $y$, or continue and obtain
\[
\mathbb{E}\!\left[
f_1\!\left(\max\{y,X\},B(\boldsymbol{\alpha}, X)\right)
\right]-c.
\]
The maximum of these two payoffs is $f_2(y,\boldsymbol{\alpha})$, which is implemented by $\tau_2$.

Proceeding by backward induction, we conclude that $\tau_n$ is optimal for the $n$-period search problem, and that the maximal expected net payoff starting from $(y,\boldsymbol{\alpha})$ is $f_n(y,\boldsymbol{\alpha})$ (with the usual initial condition $(0,\boldsymbol{\alpha}_0)$).
\end{proof}

\begin{lemma}\label{lemma:stopping time and epsilon}
For any integer $N$ there exists an $\eps(N) >0$ such that for every $n$,
$
\P (\tau_n >N)< \eps(N),
$
and $\eps(N)\to 0$ when $N\to \infty$.
\end{lemma}
\begin{proof}
If this were not the case, then there would be $\eps>0$ such that for infinitely many $N$ there would exist $n$ such that
$
\mathbb{P}(\tau_n > N) \ge \eps .
$
In that event, with probability at least $\eps$ the search cost $c$ is incurred at least $N$ times, implying an expected payoff no greater than
$
k - \eps N c.
$
For $N$ sufficiently large, this quantity becomes negative, which contradicts 
Lemma \ref{lemma:optimal tau n}.
\end{proof}

\begin{corollary}\label{cor: proof}
For every integer $N$ there is an integer $S$ s.t.\ all $\tau_n$, $n\ge S$ and $\tau_\infty$ agree on histories shorter of $N$. 
\end{corollary}
\begin{proof} Remark~\ref{rm: continue in n implies in n+1} and Lemma~\ref{lemma:stopping time and epsilon} imply that the collection of histories at which stopping occurs is monotone nondecreasing in $n$. Since the set of histories of length less than $N$ is finite, this monotone sequence must stabilize. Consequently, there exists $S$ such that for all $n \ge S$, the stopping times coincide on every history of length less than $N$.
\end{proof}

\begin{lemma}\label{lemma:tau_infty has finite expctation}
\emph{(i)} $\E[\tau_\infty]<\infty$. \\ \emph{(ii)} $\E[\tau_\infty\mid \tau_\infty > N]$ tends to $0$ as $N\to \infty$. \end{lemma}
\begin{proof}
If $\E[\tau_\infty]=\infty$, there is an $N$ s.t.
$\E[\tau_\infty\mid \tau_\infty \le N]>3k/c$. By Corollary \ref{cor: proof}, there is an $S$ s.t.\ all $\tau_n$, $n\ge S$ and $\tau_\infty$ agree on histories shorter of $N$. It implies that the yield of 
such $\tau_n$ is not greater than 
 {
\setlength{\abovedisplayskip}{2pt}
\setlength{\belowdisplayskip}{2pt}
\begin{align*}
& \E[Y_{\tau_n}-\tau_n c|\tau_n \le N  ] \P(\tau \le N) +\P(\tau > N)k \le \\ 
&\E[Y_{\tau_n}]\P(\tau \le N) - (3k/c)c +\P(\tau > N)k\le\E[Y_{\tau_n}]\P(\tau \le N)-3k+ k<0. \nonumber
\end{align*}
}
This is impossible, and we conclude (i). (ii) is an immediate consequence of it.
\end{proof}

The next lemma shows that the minimal fixed point $f_\infty$ characterizes the optimal strategy of the infinite-horizon problem.

\begin{lemma}\label{lemma:f_infty_optimal}
The stopping time $\tau_\infty$ is optimal for the unbounded search problem.
\end{lemma}

\begin{proof}
Fix $N$.  By Lemma \ref{lemma:stopping time and epsilon} and Corollary \ref{cor: proof} there is are $N$  and $S$ s.t.\ the payoff $\tau_n$
 yields is at most
 {
\setlength{\abovedisplayskip}{1pt}
\setlength{\belowdisplayskip}{2pt}
\begin{align}\label{eq: upper bound}
&\E[Y_{\tau_n}-\tau_n c\mid \tau_n \le N] \p(\tau_n \le N) +\p(\tau_n > N)k \le \\ 
&\E[Y_{\tau_n}-\tau_n c\mid \tau _n\le N]+ \eps(N) k, \nonumber
\end{align} 
}
for  every $n\ge S$.  On the other hand,   $\tau_\infty$  yields at least
 {
\setlength{\abovedisplayskip}{3pt}
\setlength{\belowdisplayskip}{2pt}
\begin{align}\label{eq: lower bound}
&\E[X_{\tau_\infty}-\tau_\infty c\mid \tau_\infty \le N] \P(\tau_\infty \le N)+ 
\E[X_{\tau_\infty}-\tau_\infty c\mid \tau_\infty>  N] \P(\tau_\infty > N)  \ge \\
& \E[Y_{\tau_n}-\tau_\infty c\mid \tau_n \le N] \P(\tau_n\le N)- 
c\E[\tau_\infty\mid \tau_\infty>  N]. \nonumber 
\end{align}
}
The difference between \eqref{eq: upper bound} and \eqref{eq: lower bound} is at most
\[
\eps(N)\,k + c\,\E\!\left[\tau_\infty \mid \tau_\infty > N\right],
\]
which converges to $0$ as $N \to \infty$. Indeed, as $N \to \infty$: 
(a) $\eps(N) \to 0$ by Lemma~\ref{lemma:stopping time and epsilon}; and 
(b) $\E\!\left[\tau_\infty \mid \tau_\infty > N\right] \to 0$ by Lemma~\ref{lemma:tau_infty has finite expctation}.
We conclude that 
$\tau_\infty$ yields the supremum of $f_n(0, \boldsymbol{\alpha}_0)$, where $n\to \infty$, which is  $f_{\infty}(0, \boldsymbol{\alpha}_0)$

It remains to show that no stopping rule can outperform $\tau_\infty$ in the unbounded search problem. Suppose, to the contrary, that there exists a stopping time $\bar{\tau}$ and an $\varepsilon>0$ such that $\bar{\tau})$ yields at least$f_\infty(0,\boldsymbol{\alpha}_0)+\varepsilon$. Consider the truncated stopping rule $\bar{\tau}\wedge n$, which forces termination by period $n$. Since the payoff under $\bar{\tau}\wedge n$ converges to that under $\bar{\tau}$ as $n\to\infty$, there exists $N$ such that for every $n\ge N$,
$\bar{\tau}\wedge n$ yields strictly more than $ f_\infty(0,\boldsymbol{\alpha}_0)+\varepsilon/2$.
Moreover, since $f_n(0,\boldsymbol{\alpha}_0)\uparrow f_\infty(0,\boldsymbol{\alpha}_0)$, enlarging $N$ if necessary implies that 
$\bar{\tau}\wedge n$ yields at least $\ge f_n(0,\boldsymbol{\alpha}_0)+\varepsilon/2$
for all $n\ge N$. This contradicts Lemma~\ref{lemma:optimal tau n}. Therefore, no stopping rule can yield a payoff strictly exceeding $f_\infty(0,\boldsymbol{\alpha}_0)$, and we conclude that $\tau_\infty$ is an optimal stopping rule for the unbounded Bayesian search problem.
\end{proof}

The previous lemma completes the proof of Theorem~\ref{thm: Main-general}. First, the lowest fixed point of the operator $T$ is $f_\infty$, which is obtained constructively as the limit of the sequence $(f_n)_n$. Second, the stopping rule $\tau_\infty$ induced by $f_\infty$ is optimal for the unbounded search problem and constitutes the minimal solution to \eqref{eq:Bellman-general}.
\end{proof}

%------------------
\subsection{A Bayesian Search Problem with an Outside Option}
As before, the DM chooses how long to continue searching so as to maximize her expected payoff, yet she is now endowed with an outside option valued at $v_o$.  In addition to the standard choices of stopping and accepting the current fallback value or continuing the search, she may quit permanently at any time and obtain a fixed payoff $v_o$. 

At each point in time, her decision is summarized by two state variables: the maximum between the current fallback value $y$ and $v_o$ and the posterior belief $\boldsymbol{\alpha}$. 
Accordingly, given an information structure $r$ in which the updated fallback value $y'=\max\{y,v_o,X\}$ and the updated posterior is derived based on Bayes' rule and the observed search history, e.g., according to \eqref{eq: updated state}, the value function $J^r_o$ satisfies the following Bellman equation:
\begin{equation}\label{eq:Bellman-general general option}
J^r_o(\max\{y,v_o\}, \boldsymbol{\alpha})
=
\max\left\{
y,v_o,\;
\mathbb{E}_{\boldsymbol{\alpha}}\!\left[
J^r_o\!\left(
\max\{y,v_o,X\},
B(\boldsymbol{\alpha}, X)
\right)
\right]
- c
\right\}.
\end{equation}
The first two terms inside the maximization correspond to stopping immediately and obtaining the larger of the current fallback value $y$ and the outside option $v_o$. 
The third term represents the continuation value: the DM pays the search cost $c$, observes a realization $X$ drawn according to belief $\boldsymbol{\alpha}$, updates her belief to $B(\boldsymbol{\alpha}, X)$, and updates the fallback value to the best payoff available so far, namely $\max\{y,v_o,X\}$.

Finally, note that when the outside option satisfies $v_o=0$, the Bellman equation \eqref{eq:Bellman-general general option} reduces to \eqref{eq:Bellman-general general}, so the two formulations coincide.  The technique we develop for the proof of Theorem \ref{thm: Main-general} enables us to demonstrate the following result.

\begin{theorem}\label{thm: Main-general option}
The optimal stopping rule of the Bayesian search problem with outside option $v_o$ is determined by the smallest solution of \eqref{eq:Bellman-general general option}.
\end{theorem}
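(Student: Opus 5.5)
The plan is to reduce Theorem~\ref{thm: Main-general option} to the argument already given for Theorem~\ref{thm: Main-general}, by absorbing the outside option into the fallback value. Quitting with $v_o$ and accepting the best draw are both ``stop'' actions, and the DM always takes the larger of the two. So the relevant state is $(\tilde y,\boldsymbol{\alpha})$ with $\tilde y:=\max\{y,v_o\}$. The transition $\tilde y'=\max\{\tilde y,X\}$ is the same free-recall recursion as before, and belief updating $B(\boldsymbol{\alpha},X)$ under information structure $r$ is unchanged. The one thing to check is that the signal, and hence $B$, does not depend on whether $v_o$ is folded in. Under full revelation this is trivial. Under lower-censoring we take the censoring threshold to be the current effective fallback $\tilde y$, as the Bellman equation \eqref{eq:Bellman-general general option} implicitly does. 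Equation \eqref{eq:Bellman-general general option} is then exactly \eqref{eq:Bellman-general} on the restricted state space $\{\tilde y\ge v_o\}\times\Delta(\{1,\ldots,\ell\})$, with initial state $(\max\{0,v_o\},\boldsymbol{\alpha}_0)$ and terminal payoff $\tilde y$.

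With this identification I would rerun the chain of lemmas in the Online Appendix for the operator
\[
T_of(\tilde y,\boldsymbol{\alpha})=\max\left\{\tilde y,\ \E_{\boldsymbol{\alpha}}\left[f\bigl(\max\{\tilde y,X\},B(\boldsymbol{\alpha},X)\bigr)\right]-c\right\}.
\]
First, $T_o$ is monotone, as in Lemma~\ref{lemma: monotonity preserving.}. Next, set $f_0(\tilde y,\boldsymbol{\alpha})=\tilde y$ and $f_{n+1}=T_of_n$; this sequence is increasing (Lemma~\ref{lemma: increasing sequence.}) and bounded by $\max\{k,v_o\}$. Its limit $f_\infty$ is a fixed point by non-expansiveness (Lemma~\ref{lemma:f_infty}), and it is the smallest fixed point because $f_0\le J^r_o$ for any fixed point $J^r_o$, since $J^r_o\ge\tilde y$ (Lemma~\ref{lemma: f_infty_minimal}). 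Backward induction then shows that $\tau_n$ is optimal for the $n$-period problem, with value $f_n$ (Lemma~\ref{lemma:optimal tau n}).

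The remaining steps are the uniform tail bound on $\P(\tau_n>N)$, the stabilization of $\tau_n$ on short histories, $\E[\tau_\infty]<\infty$, and optimality of $\tau_\infty$ via truncation (Lemma~\ref{lemma:stopping time and epsilon} through Lemma~\ref{lemma:f_infty_optimal}). These go through with the single change that the payoff bound $k$ is replaced by $\bar v:=\max\{k,v_o\}$. The reason is that the DM can always secure at least $v_o\ge 0$, so payoffs lie in $[0,\bar v]$. Hence the contradiction ``expected payoff $\le \bar v-\eps Nc<0\le$ value of stopping immediately'' still holds for large $N$. I expect this rescaling of constants in the tail bounds to be the main point requiring care, though it is routine. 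The case $v_o=0$ recovers Theorem~\ref{thm: Main-general}.
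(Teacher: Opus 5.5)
Your proposal is correct and follows the paper's own route. The paper's proof sets $f_0=\max\{y,v_o\}$, iterates the outside-option Bellman operator, and states that the argument ``follows the footsteps'' of Theorem~\ref{thm: Main-general}, which is exactly the chain of lemmas you rerun; you add useful detail the paper leaves implicit, namely replacing the payoff bound $k$ by $\max\{k,v_o\}$ and observing that under lower-censoring the censoring threshold must be the effective fallback $\max\{y,v_o\}$ for $J^r_o$ to depend on the state only through that quantity.
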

\begin{proof}[Proof of Theorem~\ref{thm: Main-general option}]
 As in the proof of Theorem~\ref{thm: Main-general} we define a sequence of functions inductively. Let $f_0(\max\{y,v_o\},\boldsymbol{\alpha})=\max\{y,v_o\}$ for every 
 $(y,v_o,\boldsymbol{\alpha})$. Then for every integer $n$, set $f_{n+1}= Tf_n$, where 
 \begin{equation}\label{eq:Bellman-general general operator option}
Tg(\max\{y,v_o\}, \boldsymbol{\alpha})
=
\max\left\{
y,v_o,\;
\mathbb{E}\!\left[
g\!\left(
\max (y,v_o,X\},
B(\boldsymbol{\alpha}, X)
\right)
\right]
- c
\right\}.
\end{equation}
The proof follows the footsteps of the previous one.
\end{proof}

\end{document}